\newcommand\addauthornote[1]{%
  \if@ACM@anonymous\else
    \g@addto@macro\addresses{\g@addto@macro\@currentauthors{{\small *}}}
  \fi}
\newcommand\@addauthornotemark[1]{\let\@tmpcnta\0
    \g@addto@macro\@currentauthors{\footnotemark\relax\let\0\@tmpcnta}}
\newcommand{\ap}{\pi}
\newcommand{\aps}{\mathit{AP}}
\newcommand{\args}{\nu}
\newcommand{\atomfld}{\mathcal{A}}
\newcommand{\code}[1]{{\small \texttt{#1}}}
\newcommand{\concr}[1]{\mathsf{Expand}_{#1}}
\newcommand{\cortado}{\textsc{Cortado}\xspace}
\newcommand{\fdg}{\mathcal{G}}
\newcommand{\holds}{\mathit{h}}
\newcommand{\lockhold}{\mathcal{H}_{L}}
\newcommand{\lockblock}{\mathcal{B}_{L}}
\newcommand{\lockmap}{\mathcal{L}}
\newcommand{\locknotif}{\mathcal{N}_{L}}
\newcommand{\mtr}{\mathit{M}}
\newcommand{\predmap}{\mathcal{P}}
\newcommand{\pstate}{\sigma}
\newcommand{\sems}{\mathcal{T}}
\newcommand{\semtupple}{(\sigblock, \signotif, \lockhold, \lockblock, \locknotif)}
\newcommand{\sigblock}{\mathcal{B}_{S}}
\newcommand{\signotif}{\mathcal{N}_{S}}
\newcommand{\tbr}[1]{\llbracket\mkern-5mu\llbracket #1 \rrbracket\mkern-5mu\rrbracket}
\newcommand{\threadset}{\mathit{T}}
\newcommand{\tis}{h}
\newcommand{\toatom}{\mathit{a}}
\newcommand{\toolname}{\textsc{Cortado}\xspace}
\newcommand{\toolnameminus}{\textsc{Ablated}\xspace}
\newcommand{\wuntil}{\code{waituntil}\xspace}
\newif\iffull
\newcommand{\irule}[2]%
{\mkern-2mu\displaystyle\frac{#1}{\vphantom{,}#2}\mkern-2mu}
\newcommand{\irulespace}{\ \ \ \  }
\newcommand{\saveparinfos}{%
  \edef\myindent{\the\parindent}%
  \edef\myparskip{\the\parskip}}
\newcommand{\useparinfo}{%
  \setlength{\parindent}{\myindent}%
  \setlength{\parskip}{\myparskip}}
\begin{document}

\saveparinfos

\title{Synthesizing Fine-Grained Synchronization Protocols for Implicit Monitors \iffull{(Extended Version)}\fi}
\author{Kostas Ferles}
\authornote{Both authors contributed equally to the paper.}
\orcid{0000-0002-8370-5465}
\affiliation {
  \department{Computer Science Department}
  \institution{The University of Texas at Austin}
  \country{USA}
}
\email{kferles@cs.utexas.edu}

\author{Benjamin Sepanski}
\orcid{0000-0002-4924-3009}
\addauthornote{1}
\affiliation {
  \department{Computer Science Department}
  \institution{The University of Texas at Austin}
  \country{USA}
}
\email{ben\_sepanski@utexas.edu}

\author{Rahul Krishnan}
\orcid{0000-0003-0230-5185}
\affiliation {
  \department{Computer Science Department}
  \institution{The University of Texas at Austin}
  \country{USA}
}
\email{rahulk@cs.utexas.edu} 

\author{James Bornholt}
\orcid{0000-0002-3258-3226}
\affiliation {
  \department{Computer Science Department}
  \institution{The University of Texas at Austin}
  \country{USA}
}
\email{bornholt@cs.utexas.edu} 

\author{Isil Dillig}
\orcid{0000-0001-8006-1230}
\affiliation {
  \department{Computer Science Department}
  \institution{The University of Texas at Austin}
  \country{USA}
}
\email{isil@cs.utexas.edu} 

\begin{abstract}
A \emph{monitor} is a widely-used concurrent programming abstraction that encapsulates all shared state between threads. Monitors can be classified as being either \emph{implicit} or \emph{explicit}  depending on the  primitives they provide. Implicit monitors are much easier to program but typically not as efficient. To address this gap,  there has been recent research on automatically {synthesizing} explicit-signal monitors from  an implicit specification~\cite{expresso}, but prior work does not exploit all paralellization opportunities due to the use of a \emph{single} lock for the entire monitor. 
This paper presents a new technique for synthesizing \emph{fine-grained} explicit-synchronization protocols from implicit monitors. Our method is based on two key  innovations: First, we present a new static analysis for inferring \emph{safe interleavings} that allow violating mutual exclusion of monitor operations \emph{without} changing its semantics. Second, we use the results of this static analysis to generate a MaxSAT instance whose models correspond to correct-by-construction synchronization protocols. 
We have implemented our  approach in a tool called \toolname and evaluate it on monitors that contain parallelization opportunities. Our evaluation shows  that \toolname can synthesize synchronization policies that are competitive with, or even better than, expert-written ones on these benchmarks. 
\end{abstract}

\maketitle

\section{Introduction}

Concurrent programming is difficult
because it requires developers to consider interactions between multiple threads of execution
and mediate access to shared resources and data.
Programming languages can offer higher-level abstractions
to  reduce this complexity
by making concurrent programming more declarative.
One such abstraction is the \emph{monitor}~\citep{hansen:osp,hoare:monitors},
which is an object that encapsulates shared state
and allows threads access to it only through a set of \emph{operations},
between which the monitor enforces mutual exclusion.

Ideally, developers would implement monitors using \emph{implicit synchronization},
wherein the \emph{only}  synchronization primitive is a \code{waituntil(P)} operation that blocks threads until condition \code{P} is satisfied.
The compiler or runtime can then automatically generate the necessary \emph{explicit synchronization}
operations (locks, condition variables, etc.) to implement the monitor in a way that respects the semantics of the implicit monitor. 
However, automatically deriving an efficient explicit monitor
from its implicit specification is a challenging problem, and there have been several recent research efforts, including both run-time techniques like AutoSynch~\cite{hung:autosynch} and compile-time tools like Expresso~\cite{expresso}, to support implicit-synchronization monitors.

While these state-of-the-art approaches make it possible to program using implicit monitors,
 they still achieve sub-optimal performance
because they adhere closely to the monitor's mutual exclusion requirement.
They generally
 use a \emph{single} lock for the entire monitor and
allow access by \emph{at most one} thread at a time across all monitor operations.
In practice, however, many monitors can admit additional concurrency
while still preserving the \emph{appearance} of mutual exclusion.
For example, consider a FIFO queue monitor that provides  \code{take} and \code{put} operations. These two operations
can safely run concurrently unless the queue is empty or full,
as they will not access the same slot in the queue.
Today, realizing this \emph{fine-grained} concurrency requires expert developers
to fall back to hand-written explicit synchronization.
These implementations are subtle and error-prone,
and there is no easy way for  developers to determine
when they have extracted the maximum possible concurrency from such an implementation.

This paper presents a new technique which automatically synthesizes fine-grained explicit-synchronization monitors.
Our technique takes as input
an implicit monitor that specifies the desired operations
and automatically generates an implementation
that allows as much concurrency as possible between those operations
while still preserving the appearance of mutual exclusion.
The key idea is to decompose each monitor operation into a set of \emph{fragments}
and allocate a \emph{set} of locks to each fragment
to enforce the mutual exclusion requirement
while allowing as many fragments as possible to run concurrently.
The resulting implementation selectively acquires and releases locks at fragment boundaries
within each operation
and signals condition variables as needed.

At a high level, our approach operates in three phases to generate a high-performance explicit synchronization monitor from its implicit version:

\begin{itemize}[leftmargin=*]
    \item {\bf Signal placement:} First, we use an off-the-shelf technique~\citep{expresso}
to infer a signaling regime which determines where to insert signaling operations on condition variables.
While the output of this tool is sufficient to synthesize a single-lock implementation, it does not  admit any additional concurrency wherein different threads can perform monitor operations simultaneously. %
\item {\bf Static analysis:} Second, we perform static analysis to infer sufficient conditions for correctness. That is, the output of the static analysis is a set of conditions such that if the synthesized monitor obeys them,  it is guaranteed to be correct-by-construction. A key challenge for this static analysis is to determine which fragments can safely execute concurrently
without creating a potential violation of the monitor semantics.
The analysis simulates \emph{interleaving} each fragment between the fragments of other operations
and determines which possible interleavings are safe.
\item {\bf Synchronization protocol synthesis via MaxSAT:}
Finally, we reduce the synthesis problem to a maximum satisfiability (MaxSAT) instance from whose solution an explicit sychronization protocol can be extracted. The hard constraints in the MaxSAT problem enforce the correctness requirements extracted by the static analysis, while the soft constraints encode two competing objective functions: minimizing the total number of locks used,
while maximizing the number of pairs of fragments that can run concurrently.

\end{itemize}

We have implemented our proposed approach in a tool called \cortado that operates on Java monitors
and evaluated it on a collection of monitor implementations that are (1) drawn from popular open-source projects and (2) contain parallelization opportunities that can  be achieved via fine-grained locking.
Given only the implicit monitor as input,
\cortado synthesizes  explicit-synchronization monitors that perform as well as,
or better than, hand-written explicit implementations by expert developers.
Compared to  state-of-the-art automated tools for synthesizing explicit monitors~\citep{expresso},
\cortado-synthesized monitors extract more concurrency
and therefore perform much better (up to $39.1\times$) on heavily contended workloads.

In summary, this paper makes four main contributions:
\begin{itemize}[leftmargin=*]
\item A new technique for automatically synthesizing fine-grained monitor implementations
that admit the maximum possible concurrency.
\item A novel static static analysis for inferring safe interleaving opportunities between threads.
\item A MaxSAT encoding to automate reasoning
about \emph{both} the correctness and performance of the synthesized explicit-synchronization monitor.
\item An implementation of our technique, \cortado, that outperforms both state-of-the-art automated
tools and expert-written code on benchmarks that can be parallelized via fine-grained locking.
\end{itemize}

\section{Overview}
\label{sec:overview}

In this section, we give an overview of our approach through a motivating example. 
Given the implicit-synchronization monitor shown in 
Figure~\ref{fig:motiv-src}, our goal is to automatically synthesize an efficient and semantically equivalent  explicit-synchronization
monitor like the one presented in Figure~\ref{fig:motiv-trg}. In what follows, we walk through this example and describe how our technique is able to automatically generate the code in Figure~\ref{fig:motiv-trg}.

\begin{figure}
  \begin{subfigure}{0.5\textwidth}
  \scriptsize
  \begin{minted}[linenos,xleftmargin=3em,autogobble,escapeinside=||]{java}
    class ArrayBlockingQueue {
      int first = 0, last = 0, count = 0;                   |\label{ln:src:counters}|
      Object[] queue;                                       |\label{ln:src:queue}|

      ArrayBlockingQueue(int capacity) {
        if (capacity < 1)
          throw new IllegalArgumentException();
        this.queue = new Object[capacity];
      }

      void put(Object o) {
        // Fragment 1
        waituntil(count < queue.length);                      |\label{ln:src:put:wait}|
        // Fragment 2
        queue[last] = o;                                      |\label{ln:src:enqueue}|
        // Fragment 3
        last = (last + 1) %
        // Fragment 4
        count++;                                              |\label{ln:src:countup}|
      }

      Object take() {
        // Fragment 5
        waituntil(count > 0);                                 |\label{ln:src:take:wait}|
        // Fragment 6
        Object r = queue[first];                              |\label{ln:src:dequeue1}|
        queue[first] = null;                                  |\label{ln:src:dequeue2}|
        // Fragment 7
        first = (first + 1) %
        // Fragment 8
        count--;                                              |\label{ln:src:countdown}|
        return r;
      }
    }
  \end{minted}
  \caption{Implicit-synchronization ArrayBlockingQueue.}\label{fig:motiv-src}
  \end{subfigure}%
  \begin{subfigure}{0.5\textwidth}
    \scriptsize
    \begin{minted}[linenos,xleftmargin=3em,autogobble,escapeinside=||]{java}
    class ArrayBlockingQueue {
      int first = 0, last = 0; Object[] queue;
      AtomicInteger count = new AtomicInteger(0);             |\label{ln:trg:atomic:fld}|

      Lock putLock = new Lock(), takeLock = new Lock();       |\label{ln:trg:locks}|
      Condition notFull = putLock.newCondition();             |\label{ln:trg:cv:notfull}|
      Condition notEmpty = takeLock.newCondition();           |\label{ln:trg:cv:notempty}|
      // Constructor is the same as the implicit version.
      
      void put(Object o) {
        putLock.lock()
        while (count.get() == queue.length)                    |\label{ln:trg:atomic:get1}|
          notFull.await();                                     |\label{ln:trg:wait:notfull}|
        queue[last] = o;
        last = (last + 1) %
        int c = count.getAndIncrement();                       |\label{ln:trg:atomic:inc}|
        putLock.unlock();
        if (c == 0) {                                          |\label{ln:trg:put:condsig}|
          takeLock.lock();                                     |\label{ln:trg:putsig:start}|
          notEmpty.signalAll();
          takeLock.unlock();}}                                   |\label{ln:trg:putsig:end}|

      Object take() {
        takeLock.lock();
        while (count.get() == 0)                                |\label{ln:trg:atomic:get2}|
          notEmpty.await();                                     |\label{ln:trg:wait:notempty}|
        Object r = queue[first];
        queue[first] = null;
        first = (first + 1) %
        int c = count.getAndDecrement();                        |\label{ln:trg:atomic:dec}|
        takeLock.unlock();
        if (c == queue.length) {                                |\label{ln:trg:take:condsig}|
          putLock.lock();                                       |\label{ln:trg:takesig:start}|
          notFull.signalAll();
          putLock.unlock();}                                 |\label{ln:trg:takesig:end}|
        return r;}}
  \end{minted}
  \caption{Explicit-synchronization ArrayBlockingQueue.}\label{fig:motiv-trg}
  \end{subfigure}
  \caption{Motivating example.}\label{fig:motiv-ex}
\end{figure}

\subsection{Implicit-synchronization monitor}

Our technique takes as input an \emph{implicit-synchronization monitor} that specifies which operations should execute atomically and when certain operations are allowed to proceed but does not fix a specific synchronization protocol for realizing that behavior. 
For example, Figure~\ref{fig:motiv-src} shows an implicit monitor that implements
a limited capacity blocking queue via a bounded circular array
buffer. This monitor defines two operations, \code{put} and \code{take}, that execute atomically (i.e., the body of each method must appear to execute as one indivisible unit).  The \code{put} operation adds an
object if the queue is not full, and \code{take}
removes an object  if the queue is not empty.
If one of these method calls cannot proceed
(i.e., queue is full or empty), the monitor blocks the calling
thread's execution using a \code{waituntil} statement until the operation can be executed.
For example, the \code{waituntil} statement at line~\ref{ln:src:put:wait} in \code{take} blocks execution until there is at least one object in the queue.

As Figure~\ref{fig:motiv-src} illustrates, implicit-synchronization monitors make concurrent programming simpler because they are \emph{declarative}: they  merely state which operations are atomic and when operations can proceed, but they do \emph{not} specify a particular synchronization protocol for realizing that desired behavior. However, most programming languages do not offer implicit synchronization facilities; so, concurrent programs must instead be implemented in terms of \emph{explicit} synchronization constructs such as locks and condition variables, as we discuss next.

\subsection{Explicit-synchronization monitor}
Figure~\ref{fig:motiv-trg} shows an explicit-synchronization implementation of the bounded queue from Figure~\ref{fig:motiv-src} that is written by an expert. This implementation uses two distinct locks,
\code{putLock} and \code{takeLock}, to protect the \code{put} and \code{take} methods respectively. The explicit-synchronization monitor also uses an atomic integer for the \code{count} field, transforming reads into \code{get()} calls (e.g., line~\ref{ln:trg:atomic:get1})
and writes into the appropriate atomic method (e.g., \code{count.getAndIncrement()} on line~\ref{ln:trg:atomic:inc}).
The expert-written monitor performs explicit signaling via condition variables
\code{notFull} and \code{notEmpty}
that are associated with  \code{putLock} and \code{takeLock}  respectively.
When a thread cannot execute one of these operations, it calls \code{await} on the
appropriate condition variable  to block its execution (lines~\ref{ln:trg:wait:notfull}
and~\ref{ln:trg:wait:notempty}).
A thread blocked in \code{put} can only be unblocked by a corresponding \code{take}
that frees up space in the queue.
To do so, the \code{take} must acquire \code{putLock} and perform a signal
operation on condition variable \code{notFull}
(lines~\ref{ln:trg:takesig:start}--\ref{ln:trg:takesig:end}). The
logic for \code{take} is symmetric (lines~\ref{ln:trg:putsig:start}--\ref{ln:trg:putsig:end}).

Although the expert-written version has {more} locks than
a single global-lock implementation,
its performance will often be better:
Introducing two locks allows \code{put} and \code{take}  to execute concurrently,
although multiple concurrent \code{put}s are still serialized, as are multiple \code{take}s.
A single global lock would admit no concurrency in this case
and would still incur the same synchronization overhead of acquiring and releasing a lock on every method call.
The expert implementation mitigates the overhead of having two locks by acquiring locks selectively:
\code{take} only acquires the \code{putLock}
if it is possible for there to be a \code{put} operation currently blocked waiting for space in the queue,
which happens only if the queue was full when \code{take} ran
(the \code{put}/\code{takeLock} case is symmetric).
This example demonstrates the intricacy of synthesizing fine-grained locking protocols:
instead of only minimizing the total number of locks,
we must also try to maximize the available concurrency.

\subsection{Our Approach}

Our tool \cortado automatically synthesizes the efficient explicit-synchronization monitor in Figure~\ref{fig:motiv-trg}
given the implicit version from Figure~\ref{fig:motiv-src}.
It does so in three phases:
First, it infers when and how signaling operations should take place. Second, it performs static analysis to infer sufficient conditions for the synthesized  monitor to be correct. Third, it encodes the synchronization protocol synthesis problem as a MaxSAT instance and uses a model of the MaxSAT problem to generate an explicit-sychronization monitor. Since prior work can already handle the first phase, we only focus on the the latter two phases in the following discussion.

\paragraph{\textbf{Granularity}.} The granularity of our synthesized locking protocol is at the level of \emph{code fragments},  where each fragment is a single-entry region of code within a single method. For example,
the fragments chosen for the blocking queue example are indicated by comments in Figure~\ref{fig:motiv-src}.
Fragments are the indivisible unit of concurrency in our approach:
we aim to maximize the number of fragments that can run concurrently,
but we do  not modify the code within a fragment to introduce extra concurrency
(e.g., by removing data races).
Hence, the explicit monitor synthesized by our approach
acquires and releases locks only at fragment boundaries.

\paragraph{\textbf{Static analysis}.} To ensure correctness of the synthesized monitor, our technique needs to enforce the following three key requirements:
\begin{enumerate}[leftmargin=*]
    \item {\bf Data-race freedom:} Fragments that involve a data race must not be able to run concurrently.
    \item {\bf Deadlock freedom:}  Locks must be acquired and released in an order that prevents deadlocks.
    \item {\bf Atomicity:} Each monitor operation should \emph{appear} to take place as one indivisible unit. That is, even though the implementation can %
    allow thread interleavings inside monitor operations, the resulting  state should be equivalent to one where each method executes truly atomically. 
\end{enumerate}

Here, the second requirement (i.e., deadlock freedom) does not necessitate any static analysis, as we can prevent deadlocks by imposing a static total order $\preceq$ on locks~\cite{birrell1989introduction} and ensuring that locks are acquired and released in a manner that is consistent with $\preceq$. However, in order to ensure data-race freedom and atomicity, we need to perform static analysis of the source code to identify (1) code fragments that have a data race, and (2) interleaving opportunities between code fragments. Since detection of data races is a well-studied problem, the novelty of our static analysis lies in identifying safe interleaving opportunities. Hence, the key question addressed by our  analysis is the following: Given a code fragment $f$ executed by thread $t$, and two consecutive code fragments $f_1, f_2$ executed by a different thread $t'$, is it safe to interleave the execution of $f$ \emph{in between} $f_1$ and $f_2$ while ensuring that monitor operations \emph{appear} to take place atomically?

To answer this question, our method performs a novel static analysis to identify a set of such \emph{safe interleavings}. For instance, going back to the running example, our analysis determines that it is safe to interleave the execution of fragment 4 in Figure~\ref{fig:motiv-src} in between fragments 5 and 6  by checking a number of commutativity relations between  code fragments. In this instance, since our analysis proves that fragment 4 left-commutes~\cite{reduction-lip} with fragment 5 and right-commutes~\cite{reduction-lip} with 6 and all of its successors,  we identify this as a safe interleaving opportunity. On the other hand, our analysis concludes that interleaving fragment 4 in between 1 and 2 is \emph{not} safe because  fragment 4 does not left-commute with fragment 1 --- intuitively, this is because fragment 4 can falsify predicate \code{count < queue.length} that appears in the \code{waituntil} statement of fragment 1. 

\paragraph{\textbf{MaxSAT overview}.}  Once we identify possible data races and safe interleavings via static analysis,  we use this information to generate a MaxSAT instance whose solution corresponds to a fine-grained synchronization protocol.
Specifically, our MaxSAT encoding uses a variable $\holds_{f_i}^{l_j}$ to indicate that  code fragment $f_i$ must hold lock $l_j$ and generates both hard constraints (for correctness) and soft constraints (for efficiency) over these variables. Thus, if the MaxSAT solver returns a model in which variable  $\holds_{f_i}^{l_j}$  is assigned to true, this means that the synthesized code must  acquire lock $l_j$ prior to executing fragment $f_i$. Similarly, our MaxSAT encoding introduces a variable $a_\emph{fld}$ indicating that field \emph{fld} should be implemented using an atomic type.

The hard constraints in our MaxSAT encoding correspond to the three correctness requirement mentioned earlier, namely (1) data race prevention, (2) deadlock freedom, and (3) atomicity. On the other hand, soft constraints encode our optimization objective. In what follows, we give a brief overview of the different types of constraints in our encoding, focusing only on  constraints that involve lock acquisition variables $\holds_{f_i}^{l_j}$. {However, it is worth noting that our technique also generates constraints on atomic variables $a_\emph{fld}$ and can automatically convert fields to atomic types  whenever doing so is safe and more efficient than  introducing a lock.}

\paragraph{\textbf{Data-race freedom.}} Given a pair of code fragments $(f_i, f_j)$ that have a potential data race according to the static analysis, our MaxSAT encoding introduces hard constraints 
of the form $\bigvee_k (\holds_{f_i}^{l_k} \land \holds_{f_j}^{l_k})$ stating that $f_i$ and $f_j$ must share at least one common lock.
For example, in Figure~\ref{fig:motiv-src}, our analysis determines that fragments 4 and 8 cannot run in parallel since they both write to the same memory location \code{count}. Thus, the  MaxSAT instance contains boolean constraints to make sure that two different threads cannot execute \code{count-{}-} and \code{count++} at the same time.

\paragraph{\textbf{Deadlock freedom.}} Our approach precludes deadlocks by imposing a  total order $\preceq$ on locks. In particular, it enforces that a thread $t$ can only acquire lock $l$  if $t$ does \emph{not} already hold any lock $l'$ where $l' \prec l$.  For example,  in Figure~\ref{fig:motiv-src},  suppose the locking protocol determines that fragments 1 and 2 must hold all locks in  sets $L_1$ and $L_2$ respectively. Between executing the two fragments, the  code will need to acquire all locks in $L_2 \setminus L_1$. Hence, we add constraints $i < j$
for every pair of locks $l_j \in L_2 \setminus L_1$ and $l_i \in L_1 \cap L_2$
so that those  locks can be acquired while respecting the order $\preceq$.

\paragraph{\textbf{Atomicity.}} Our MaxSAT encoding also includes constraints to ensure that monitor operations appear to execute atomically.  Suppose that our static analysis determines that a thread cannot safely execute code fragment $f$ \emph{in between} some other thread's execution of code fragments $f_1$ and $f_2$. To prevent such an unsafe interleaving,  we add hard constraints to ensure that fragments $f, f_1$, and $f_2$ all share at least one common lock. For example, since our analysis determines that fragment 4 (\code{count++}) cannot be interleaved with any other pair of fragments in the same method \code{put} (running concurrently on a different thread), our MaxSAT encoding  includes a hard constraint asserting that fragment 4 must share a lock with all other fragments in the \code{put} method.

\paragraph{\textbf{Soft constraints}.}
Because the efficiency of the synthesized code depends on both
the allowed parallelization opportunities as well as the number of locks,  our optimization objective tries to  \emph{minimize} the number of locks  and \emph{maximize} the number of fragments that can run
in parallel. To encode the latter objective, our MaxSAT encoding includes soft contraints asserting that any two parallelizable fragments must \emph{not} share a lock. On the other hand, to encode the former objective, we add a soft constraint stating that no fragment in $m$ is holding lock $l$. 

\paragraph{\textbf{Monitor generation.}}
A solution of the generated MaxSAT instance determines (a) which fragments should hold which locks, (b) which fields should be implemented using atomic types, and (c) which locks should be associated with which condition variables. Thus, together with the output of the signal placement technique~\cite{expresso}, a model of the MaxSAT problem can be automatically translated into the target monitor implementation. For our running example, \toolname synthesizes precisely the implementation in Figure~\ref{fig:motiv-trg} given the implicit monitor  of Figure~\ref{fig:motiv-src}.

\section{Preliminaries}
\label{sec:prelim}

\begin{figure}
  \begin{subfigure}{0.5\textwidth}
    \begin{grammar}
      \let\syntleft\relax
      \let\syntright\relax
      <Monitor $\mtr$> $::=$ \code{monitor $\mtr$ \{}$(\mathit{fld}$ | $\mathit{m})\text{*}\code{\}}$

      <Field $\mathit{fld}$> $::=$ $\tau\ f := e$

      <Method $\mathit{m}$> $::=$ $\mathit{m}(\vec{v})\code {\{}  ccr\text{*}\code{\}}$

      <CCR $ccr$> $::=$ $\wuntil(p)\code{;} \mathit{s}$

      <Stmt $s$> $::=$ $\mathit{skip}$ | $v := e$ | $v.f := e$
                  \alt $\mathit{v}.\mathit{m}(\vec{e})$ | $[\code{if } (e)]? \code{ goto l}$
                  \alt $\mathit{ls}_1;\mathit{ls}_2$

       <LStmt $ls$> $::=$ $\code{l:}?\ \mathit{s}$
    \end{grammar}
    \caption{Implicit-synchronization monitor language.}\label{fig:src-lang}
  \end{subfigure}%
  \begin{subfigure}{0.5\textwidth}
    \begin{grammar}
      \let\syntleft\relax
      \let\syntright\relax
      <Monitor $\mtr$> $::=$ \code{monitor $\mtr$ \{}$(\mathit{fld}$ | $\mathit{sync}$ | $\mathit{m})\text{*}\code{\}}$

      <Field $\mathit{fld}$> $::=$ $\tau\ f := e$

      <Sync $\mathit{sync}$> $::=$ $\code{Lock}\ \mathit{l} := \code{new Lock()}$
                             \alt $\code{CondVar}\ \mathit{cv}\ :=\ \mathit{l}.\code{newCondVar()}$
                             \alt $\code{Atomic}[\tau]\ a\ :=\ e$

      <Method $\mathit{m}$> $::=$ $\mathit{m}(\vec{v})\code { \{}\ ccr\text{*}\ \code{\}}$

      <CCR $ccr$>  $::=$ $(\mathit{ls})\text{*}$

      <Stmt $s$> $::=$ $\mathit{skip}$ | $v := e$ | $v.f := e$
                  \alt $\mathit{v}.\mathit{m}(\vec{e})$ | $[\code{if } (v)]? \code{ goto l}$
                  \alt $\mathit{ls}_1;\mathit{ls}_2$
                  \alt $a_{pre} := a.\code{update}(\lambda\chi.e)$

      <LStmt $ls$> $::=$ $\code{l:}?\ \mathit{s}$
    \end{grammar}
    \caption{Explicit-synchronization monitor language.}\label{fig:trg-lang}
  \end{subfigure}%
  \caption{Source \& target languages. We use $e$ and $p$ for
    expressions and predicates respectively.}\label{fig:langs}
\end{figure}

In this section, we describe our source and target
languages and define what it means for an explicit synchronization monitor to correctly implement an implicit one.

\subsection{Background on Monitors}

In this work, %
we assume that all shared resources between threads are handled by a \emph{monitor class} $M$ which consists of fields $F$ and set of operations (methods) $O$.
The  fields $F$ constitute the \emph{only} shared state between threads,  which can only  access  shared state by performing one of the monitor operations $o \in O$.  These operations can be performed by an arbitrary, yet fixed, number of threads, and  locations reachable through  arguments  are  assumed to be thread-local. 
We represent each thread by a unique
identifier from set $\threadset \subseteq \mathds{N}$, and we model memory
locations using \emph{access paths} ($\aps$)~\cite{aps} of the form $\ap = v(.f)*$, consisting of a base variable $v$ optionally followed by a
finite sequence of field accesses. We also assume that a special \code{this} variable  stores the memory location of the
monitor object. %

\begin{definition}
{\textbf{(Monitor State)}.}
A monitor state $\pstate : \threadset \times
\aps \rightarrow \mathds{N}$ is a mapping from pairs $(t, \ap)$ (where $t$ is a thread identifier and $\ap$ an access path) to a value. 
 
\end{definition}

\subsection{Source Language}
\label{sec:src-lang}

Our source language, presented
in Figure~\ref{fig:src-lang}, corresponds to \emph{implicit synchronization monitors} without explicit locking or signaling.  
The body of each monitor operation
consists of a sequence of so-called \emph{Conditional Critical Regions
(CCRs)}~\cite{hoare:ccr}, which in turn consist of a \wuntil statement
followed by one or more regular non-blocking statements. We refer to
the predicate of the \wuntil statement of a CCR as its \emph{guard}
and to the rest of the statements as its \emph{body}.
A thread  executes the body of the CCR
atomically if its guard evaluates to true; otherwise it suspends its
execution and exits the monitor until the predicate becomes
true. 
More formally, the semantics of our source language are defined via the notion of an \emph{implicit monitor history}:

\begin{definition}{\textbf{(Implicit monitor history)}.} Given a set of threads interacting with each other through monitor $M_s=(F, O)$, an \emph{implicit monitor history} $\tis_i$  is a sequence  $(ccr_1, t_1) \ldots (crr_n, t_n)$ where each $ccr_i$ is a CCR  of $\mtr_s$
and $t_i$ is a thread identifier.
\end{definition} 

Given history $\tis_i$, we define an \emph{argument mapping} $\args_i$ to be a list whose $i$'th element maps formal parameters of  $\emph{Method}(ccr_i)$  to their actual value for each event $(ccr_i, t_i)$ in  $\tis_i$.

\begin{definition}{\textbf{(Implicit monitor semantics)}.} Given a monitor $M_s$, initial state $\sigma$, and monitor history $\tis_i$ with argument mapping $\args_i$, the operational semantics of $M$ is defined using a judgment $\mtr_s \vdash (\tis_i, \args_i, \sigma) \Downarrow \sigma'$ indicating that the new monitor state  is $ \sigma'$ after executing  $\tis_i$ on   state $\sigma$.
\end{definition}

Because our source language is very similar to the one used in~\citet{expresso}, we omit a formal definition of the operational semantics. Following that work, we also consider an implicit history to be valid only if it respects the program order of the input monitor.

\subsection{Target Language}\label{sec:target-lang}

Figure~\ref{fig:trg-lang} presents the language of  explicit-synchronization monitors. The overall structure of this
target language is similar to the source language but with a few important
differences. First,  an explicit monitor  contains locks, conditional
variables, and atomic fields,  collectively referred to as \emph{synchronization variables}. Second, CCRs in the
target language do not contain \wuntil statements; instead, the
logic of a \wuntil statement is
 implemented by calling methods on the appropriate condition
variable. We assume that
synchronization variables support  all the standard synchronization
operations present in modern concurrent languages (e.g., \code{await},
\code{signal}, \code{signalAll}, etc.). Finally, our target language
contains a special  \code{update} statement for performing updates on atomic
fields: it takes as argument an atomic
field $a$ and a unary function $f$ and updates the value of $a$
\emph{atomically} as $f(a)$. For instance,
 the statement
\code{c$_{pre}$ := c.update($\lambda\chi.\chi+1$)} atomically
increments \code{c} by one and stores the value of \code{c} before
the update in \code{c$_{pre}$}.

\begin{definition}
 {\textbf{(Explicit monitor history)}.} Given a set of threads executing  in monitor $M_t=(F, O)$, an \emph{explicit monitor history} $\tis_e$  is a sequence  $(s_1, t_1) \ldots (s_n, t_n)$ where each $s_i$ is a (non-composite) statement of a monitor operation $o \in O$ and $t_i$ is a thread identifier. 
\end{definition}

Leveraging  the same  notion of \emph{argument mappings} defined in Section~\ref{sec:src-lang},  we define explicit monitor semantics as follows:

\begin{definition}
 {\textbf{(Explicit monitor semantics)}.} Given a monitor $M_t$, initial state $\sigma$, and monitor history $\tis_e$ with argument mapping $\args_e$, the operational semantics of $M_t$ is defined using a judgment $M_t \vdash (\tis_e, \args_{e}, \sigma) \downarrow \sigma'$ indicating that the new  state  is $ \sigma'$ after executing   $\tis_e$ on initial  state $\sigma$.
\end{definition}

\iffull
{
The full operational semantics of our target language is given in Appendix~\ref{sec:trg-lang-sem}.
}
\else{
The full operational semantics of our target language is given in the extended version of the paper under supplemental material.
}
\fi

\subsection{Relating Implicit and Explicit Histories}

\begin{figure}
    \begin{subfigure}{0.4\textwidth}
    \scriptsize
      \begin{minted}[xleftmargin=3em,autogobble,escapeinside=||]{java}
          class M {
            int x = 0, y = 0, z = 0;
            void foo() { x++; y++; }
            void bar() { z++; } }
      \end{minted}
      \caption{A simple implicit monitor.}
      \label{fig:mtr-sample}
    \end{subfigure}%
     \begin{subfigure}{0.6\textwidth}
     \scriptsize
      \begin{minted}[xleftmargin=3em,autogobble,escapeinside=||]{java}
          class M {
            int x = 0, y = 0, z = 0;
            Lock l1 = new Lock(), l2 = new Lock();
            void foo() { l1.lock(); x++; y++; l1.unlock(); }
            void bar() { l2.lock(); z++; l2.unlock(); } }
      \end{minted}
      \caption{An explicit monitor implementation of Figure~\ref{fig:mtr-sample}.}
      \label{fig:mtr-sample-expl}
    \end{subfigure}
    \begin{subfigure}{\textwidth}
    \[
        \begin{array}{l}
           \tis_i = (foo, t_1)(bar, t_2)\\
           \tis_e = (\code{{\small l1.lock()}}, t_1)(\code{{\small x++}}, t_1)(\code{{\small y++}},t_1)(\code{{\small l1.unlock()}}, t_1)(\code{{\small l2.lock()}}, t_2)(\code{{\small z++}}, t_2)(\code{{\small l2.unlock()}}, t_2)\\
           \tis_e' = (\code{{\small l1.lock()}}, t_1)(\code{{\small x++}}, t_1)(\code{{\small l2.lock()}}, t_2)(\code{{\small y++}},t_1)(\code{{\small z++}}, t_2)(\code{{\small l1.unlock()}}, t_1)(\code{{\small l2.unlock()}}, t_2)
        \end{array}
    \]
    \caption{Examples of implicit and explicit histories.}
    \label{fig:histories}
    \end{subfigure}
    \caption{A simple implicit monitor and its explicit implementation.}
\end{figure}

In order to formalize the correctness of our approach, we need to relate an implicit history $h_i$ of a source monitor $M_s$ with an explicit history $h_e$ of its corresponding target version $M_t$. 
Because every history of an implicit monitor $\mtr_s$ \emph{induces} a corresponding history of its explicit version $\mtr_t$, we  define an operation called that $\concr{}$ that ``translates" an implicit history to an explicit one. That is, given an implicit history $h_i$ with argument mapping $\args_i$ and state $\pstate$,   $\concr{\mtr_t}(\tis_i, \args_i, \pstate)$  returns a pair $(\tis_e, \args_e)$, where $\tis_e$ is a history of $\mtr_t$ containing all statements executed by $\tis_i$
under initial state $\pstate$ and $\args_e$ is the argument mapping for $\tis_e$. 

\begin{example}
Consider the  implicit monitor of Figure~\ref{fig:mtr-sample} and its explicit counterpart in Figure~\ref{fig:mtr-sample-expl}. For histories $\tis_i$ and $\tis_e$ from Figure~\ref{fig:histories} we have $\concr{\mtr_t}(\tis_i, \args_i, \pstate) = (\tis_e, \args_e)$ for some $\args_i$, $\args_e$.
\end{example}

Using this $\concr{}$ operation, we can classify explicit histories as being sequential or interleaved:

\begin{definition}{{\bf (Sequential history)}}\label{def:seq-hist} 
Let $\mtr_t$ be an explicit monitor implementation of $\mtr_s$. 
We say that an explicit history $\tis_{e}$ of monitor $\mtr_{t}$ with argument mapping $\args_e$ is \emph{sequential} iff there exist a history $h_i$ of $\mtr_s$, argument mapping $\args_i$, and initial state $\pstate$ such that $\concr{\mtr_t}(\tis_i,\args_i,\pstate) = (\tis_{e}, \args_{e})$.
\end{definition}

In other words, a sequential history corresponds to an execution  in which statements of the explicit monitor are not interleaved between threads. 

\begin{example}
 Going back to Figure~\ref{fig:histories}, history $\tis_e$ is sequential but $\tis_e'$ is not.
\end{example}

Next, we introduce the notion of \emph{well-formed histories}, which, intuitively, respect the program order of the original implicit monitor:

\begin{definition}{{\bf (Well-formed history)}}
Let $\Pi(\tis, t)$ be the \emph{projection} of $\tis$ onto thread $t$ (i.e.,  it filters out all elements of $\tis$ not involving thread $t$). 
We say that a history $\tis_e$ of  $\mtr_t$ is \emph{well-formed} iff, for every thread $t$, there exists sequential histories $h_{e}^1,\dots,h_{e}^n$ such that $\Pi(\tis_{e}, t) = h_{e}^1\cdots h_{e}^n$.
\end{definition}

Intuitively, well-formed histories respect program dependence in the original monitor for every thread. By definition, every sequential history is also well-formed. In the remainder of this paper, we implicitly mean \emph{well-formed} explicit history whenever we refer to an explicit history. %

\begin{example}
 Histories $\tis_e$, $\tis_e'$ from Figure~\ref{fig:histories} are both well-formed. However, the following history is not well-formed because it does not respect program order: $(\code{{\small l2.unlock()}}, t)(\code{{\small l2.lock()}}, t)$
\end{example}

\begin{definition}{\bf (Interleaved history)}
 We say that a history $\tis_{e}$ of $\mtr_{t}$ is  \emph{interleaved} iff  it is (1) well-formed and (2) not sequential.
\end{definition}

\begin{example}
 History $\tis'_e$ from Figure~\ref{fig:histories} is interleaved.
\end{example}

Next, we define what it means for an explicit history to \emph{simulate} an implicit history.

\begin{definition}{\bf (Simulation relation).} Let $M_t$ be an explicit version of implicit monitor $M_s$.
 We say that an explicit history $\tis_{e}$ of $\mtr_{t}$ with argument mapping $\args_e$  simulates $(\tis_i, \args_i)$ of  $\mtr_s$ on input $\pstate$, denoted $(\tis_{e}, \args_{e}) \backsim (\tis_{i}, \args_{i})$, if there exist sequential history $\tis'_{e}$ and $\args'_{e}$ such that: 
\[
(1) \  \forall t.\ \Pi(\tis_{e}, t) = \Pi(\tis'_{e}, t) \quad \quad \emph{and} \quad \quad 
(2) \ \concr{\mtr_{t}}(\tis_i, \args_i, \sigma) = (\tis'_{e}, \args'_{e}).
\]
\end{definition}

In other words, $\tis_{e}$ simulates a history of the original monitor if it is a  (well-formed) permutation of some sequential history $\tis'_{e}$ of the explicit monitor $\mtr_t$. 

\begin{example}
 Going back to Figure~\ref{fig:histories}, we have $(\tis_e', \args') \backsim (\tis_i, \args)$ for some $\args$, $\args'$.
\end{example}

\subsection{Correctness of Explicit-Synchronization Monitors}
Using the concepts introduced in the previous section, we now formalize what it means for an explicit monitor to \emph{correctly implement} an implicit one. %

\begin{definition}
 {\textbf{(State equivalence)}}\label{def:sequiv}
Let $\pstate$ be a program state of an implicit
monitor $\mtr_s$ and  $\pstate'$  that of an explicit monitor $\mtr_t$. We
say that $\pstate$ and $\pstate'$ are equivalent modulo $\mtr_s$,
denoted $\pstate \equiv_{\mtr_s} \pstate'$, iff for all $(t, \ap)$ in the domain of $\pstate$,  we have
$\pstate(t, \ap) = \pstate'(t, \ap)
$

\end{definition}
Intuitively, this notion of equivalence between two monitor states
ignores any additional synchronization fields and local variables introduced by
translating $\mtr$ to an explicit-synchronization monitor.
Finally, we can define the correctness of an explicit monitor as follows:

\definition{\textbf{(Correctness)}} \label{def:correctness}
 We say that an explicit monitor $M_t$ correctly implements an implicit monitor $M_s$, denoted as $\mtr_s \sim \mtr_t$, iff for all input states $\sigma_s, \sigma_t$ s.t. $\sigma_s \equiv_{\mtr_{s}} \sigma_t$, we have:
\begin{enumerate}[leftmargin=*]
\setlength\itemsep{0.5em}
    \item $\forall \tis_i, \args_i.\  \mtr_s \vdash (\tis_i, \args_i, \pstate_s) \Downarrow \pstate_s' \Longrightarrow \left(\mtr_t \vdash (\concr{\mtr_t}(\tis_i, \args_i, \pstate_s), \pstate_t) \downarrow \pstate_t' \ \land \  \pstate_s' \equiv_{\mtr_s} \pstate_t'\right)$
    \item $\forall \tis_e, \args_e.\  \mtr_t \vdash (\tis_e, \args_e, \pstate_t) \downarrow \pstate_t' \Longrightarrow \left( \exists \tis_i,\args_i.\ (\tis_e, \args_e) \!\backsim \!(\tis_i, \args_i) \land \mtr_s \vdash (\tis_i, \args_i, \pstate_s) \Downarrow \pstate_s' \land \pstate_s' \equiv_{\mtr_s}\pstate_t' \right)$
\end{enumerate}

The first correctness condition simply states that $\mtr_t$ does not eliminate any feasible behaviors of  $\mtr_s$. The  second condition, on the other hand,  states that every feasible history of $\mtr_t$ {simulates}  \emph{some} implicit history  that results in the same state. Intuitively, this means that all statement interleavings allowed by $\mtr_t$ provide the illusion that all operations of $\mtr_s$ are executed atomically.

\section{Main Algorithm}
\label{sec:algorithm}

In this section, we present our main synthesis algorithm.  Specifically, Section~\ref{sec:prelim-np} introduces
some preliminary definitions and proves an NP-completeness result to
justify the reduction to MaxSAT. Then, Section~\ref{sec:synth-alg}
presents the high-level algorithm, Section~\ref{sec:analysis}
presents the static analysis for inferring safe interleavings, and
Sections~\ref{sec:max-sat} presents the details of the MaxSAT
encoding.

\subsection{Fragment Dependency Graphs and NP-Completeness}
\label{sec:prelim-np}

Our main synthesis algorithm is parametrized over a partitioning of the input monitor  into code fragments, where each code fragment defines a unit of computation that we need to assign locks to. In this section, we clarify our assumptions about these code fragments and prove the NP-completeness of the problem for a given choice of partition.

First, to define what we mean by a valid partition, we represent each method of the monitor as a standard control-flow graph (CFG), where each atomic statement  belongs to its own basic block. Given a control-flow graph $G$ and node  $n$, we write $\emph{Preds}(G, n)$ to indicate the predecessor nodes of $n$ in $G$ and $\emph{Succs}(G, n)$ to indicate its successors. Then, a valid partition of a method into code fragments is defined as follows:

\definition{\textbf{(Partition)}} Let $G = (V, E)$ be the CFG representation of a method. Then, a partition of this method is a set  of CFGs $\{G_1, \ldots, G_n\}$ with $G_i = (V_i, E_i)$ such that:
\begin{enumerate}
    \item $V = \uplus_{i=1}^n V_i$ and $E_i = E \cap (V_i \times V_i)$
\item For every $G_i$, there is at most one node $n \in V_i$ such that $\emph{Preds}(G, n) \not \subseteq V_i$
\item Every \code{waituntil}($p$) statement must belong to its own $G_i$ --- i.e., if a node $n \in V$ is a waituntil statement, then there exists a $G_i = (\{n\}, \emptyset) $
\end{enumerate}

Intuitively, a \emph{partition}  is a set of sub-CFGs  such that (1) these sub-CFGs cover all nodes of the original CFG, (2) each sub-CFG has a unique entry node,  and (3) waituntil statements belong to their own sub-CFG. We refer to the code snippet represented by each sub-CFG as a \emph{code fragment} and define a notion of \emph{fragment dependency graph (FDG)} as follows:

\begin{definition}{\textbf{(FDG)}}\label{def:fdg} Given a method $m$ with CFG $G=(V,E)$ and a partition of $G$ into $\{G_1, \ldots, G_n\}$, a \emph{fragment dependency graph (FDG)} is a directed acyclic graph  $(V', E')$ such that (1) every $f_i \in V'$ is the code fragment associated with $G_i$; (2) there is an edge $(f_i, f_j) \in E'$ iff there is an edge in $G$ from any exit node of $G_i$ to the entry node of $G_j$.
\end{definition}

\begin{example}
Figure~\ref{fig:fdg-example} presents the FDG of method \code{take} for the partition in Figure~\ref{fig:motiv-src}, 
\end{example}

\noindent\begin{minipage}{\textwidth}
\useparinfo
\begin{wrapfigure}{r}{0.5\textwidth}
\includegraphics[width=0.5\textwidth,height=0.4\textwidth,keepaspectratio=true]{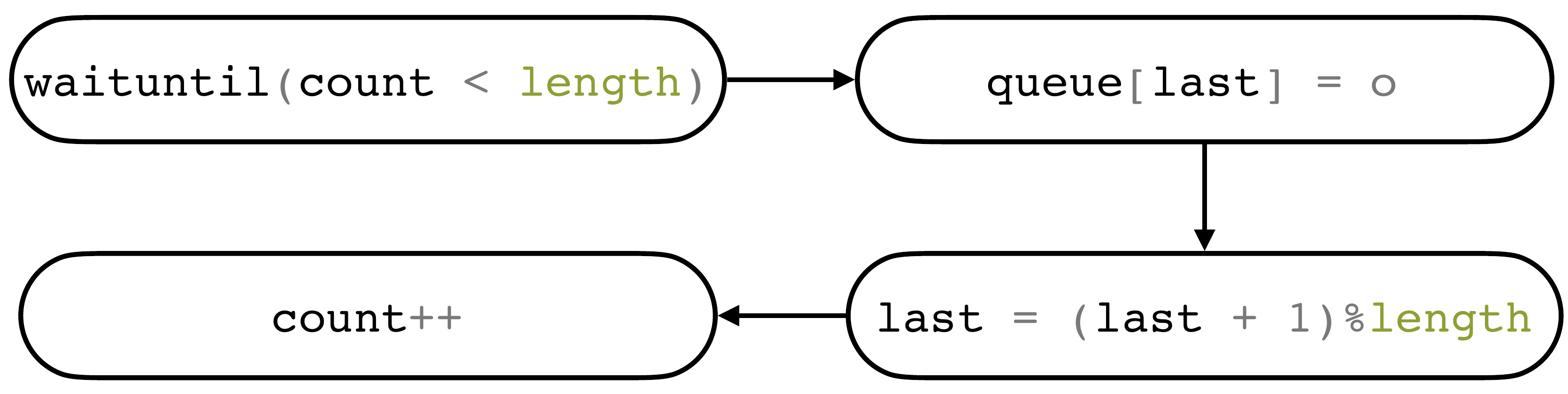}
\caption{FDG for method \code{take}.}\label{fig:fdg-example}
\end{wrapfigure}
Observe that we require the FDG to be acyclic, so some partitions do not give rise to valid FDGs. In the rest of this paper, we assume that partitions obey this restriction so that all cycles are contained within individual nodes of the FDG.  
We also lift this notion of FDG from individual methods to entire monitors in the obvious way (i.e., union of all FDGs).   As we will see in the next section, our  synthesis algorithm operates over FDG representations of monitors.
\end{minipage}

Next, we  state the following NP-completeness result to justify our MaxSAT encoding: 

\begin{theorem}\label{thm:np}
  {\bf{(NP-Completeness)}} Let $\mathcal{G} = (V, E)$ be an FDG
   of  monitor $\mtr$, and let $\Pi \subseteq V \times
  V$ be a set of fragment pairs that can  run in parallel. Then, 
  deciding whether there exists a synchronization protocol with at
  most $k$ locks and that allows all pairs in $\Pi$ to
  run in parallel is NP-Complete.
\end{theorem}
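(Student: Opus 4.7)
The plan is to prove both directions of NP-completeness separately: NP-membership follows from a direct verification argument, and NP-hardness follows from a reduction from the classical \emph{Edge Clique Cover} (ECC) problem, which asks whether the edges of a graph $H$ can be covered by at most $k$ cliques of $H$.

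\paragraph{NP-membership.}
First I would take as certificate the lock-assignment function $h : V \to 2^{[k]}$ (together with any auxiliary choices such as atomic-field designations and condition-variable bindings), which has size polynomial in $|V|$ and $k$. Verification consists of the following polynomial-time checks: (i) the total number of locks used is at most $k$; (ii) for every $(f_i,f_j) \in \Pi$, $h(f_i) \cap h(f_j) = \emptyset$, so the pair is indeed allowed to run in parallel; and (iii) the protocol satisfies the three correctness requirements from Section~\ref{sec:overview} --- data-race freedom (every racing pair shares a lock), atomicity (every unsafe-interleaving triple identified by the static analysis shares a common lock), and deadlock freedom (the per-fragment lock sets can be linearly ordered consistently). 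Each of these checks is polynomial given the static-analysis outputs, placing the problem in NP.

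\paragraph{NP-hardness via reduction from ECC.}
Given an instance $(H = (V_H, E_H), k)$ of Edge Clique Cover, I would construct a monitor $\mtr$ whose methods correspond one-to-one with $V_H$: for every $v \in V_H$ introduce a method $m_v$ whose body is a single (non-blocking) fragment $f_v$, and for every edge $(u,v) \in E_H$ introduce a shared field $x_{uv}$ that is written by both $m_u$ and $m_v$, creating a data race between precisely the pairs in $E_H$. The FDG $\fdg$ is then the disjoint union of one-node graphs $\{f_v\}_{v \in V_H}$. Finally set $\Pi = \{(f_u,f_v) : \{u,v\} \notin E_H\}$. This construction is clearly polynomial.

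\paragraph{Correctness of the reduction.}
Because every method consists of a single fragment, no intra-method interleavings are possible, so the atomicity constraint is vacuous; deadlock freedom is trivially achievable by picking any total order on the locks used. The remaining constraints collapse to: every racing pair (i.e., every edge of $H$) shares at least one lock, and no $\Pi$-pair (i.e., non-edge of $H$) shares any lock. Equivalently, for each lock $\ell$ its holder set $L_\ell = \{v : \ell \in h(f_v)\}$ is a clique of $H$ (else it would contain a $\Pi$-pair), and the collection $\{L_\ell\}$ covers every edge of $H$. Thus valid protocols with at most $k$ locks are in bijection with edge clique covers of $H$ using at most $k$ cliques, yielding the reduction.

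\paragraph{Main obstacle.}
The delicate step is arguing that the gadget monitor induces \emph{exactly} the intended set of must-share-lock constraints and no spurious ones. The one-fragment-per-method construction is chosen precisely to avoid incidental atomicity obligations, and the only shared fields introduced are the $x_{uv}$, which produce data races in exact correspondence with $E_H$. Once this is established, the bijection between lock sets and cliques of $H$, and hence the equivalence with ECC, is straightforward.
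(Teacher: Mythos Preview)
Your proposal is correct and follows essentially the paper's approach: both reduce from Edge Clique Cover by mapping graph vertices to monitor fragments and graph edges to shared fields that induce data races, so that each lock's holder set is forced to be a clique of $H$ and the collection of holder sets must cover every edge. The paper's gadget differs only in being slightly finer-grained (one fragment per vertex--edge incidence rather than one per vertex, plus explicit trivially-guarded \code{waituntil} fragments that the source language mandates and that are placed in $\Pi$ with everything); your single-fragment-per-method description elides this mandatory \code{waituntil} prefix, but the fix is routine and the underlying bijection with ECC is identical.
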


\begin{proof}
By reduction from the edge clique cover
  problem~\cite{edge-clique-cover}.\iffull{ The proof can be found in Appendix~\ref{ssec:npproof}.}\else{ The
  proofs of all theorems can be found in the extended version  submitted as supplemental material.}\fi
\end{proof}

\subsection{Synthesis Algorithm}
\label{sec:synth-alg}

In this section, we describe our core synthesis procedure, which is summarized in Figure~\ref{fig:main-alg}.  At a high level, the {\sc SynthesizeMonitor} algorithm  consists of the following  steps. First, it uses the technique of \citet{expresso} to infer  signaling operations (line~\ref{alg:ln:expresso}). This yields a partially concretized  monitor  $M'$ with signaling operations but no locking. Next, it constructs an FDG representation of the resulting monitor $M'$ as defined in Section~\ref{sec:prelim-np}  (line~\ref{alg:ln:partor}). Third, it infers an \emph{upper bound} $\mathcal{N}_u$ on the maximum number of locks that the synthesized code should use (line~\ref{alg:ln:max-upper}). Then, it  statically analyzes the FDG to infer requirements that the synthesized code needs to obey (line~\ref{alg:ln:analyze}) and uses the results of the previous steps  to generate the MaxSAT encoding (line~\ref{alg:ln:maxsat2}).     Finally, it  instruments $M'$  (line~\ref{alg:ln:instr}) using the  synchronization protocol inferred using MaxSAT. Since the most involved aspects of this algorithm are the MaxSAT encoding and inference of safe interleavings, we defer a detailed discussion of these to the next two subsections and focus on the rest.

\begin{figure}
  \begin{algorithm}[H]
    \begin{algorithmic}[1]
      \Procedure{SynthesizeMonitor}{$\mtr$}
        \State \textbf{input:} $\mtr$: an implicit-synchronization monitor.
        \State \textbf{output:} a semantically equivalent explicit-synchronization monitor.
        \vspace{0.04in}

        \State $\mtr' \gets \textsf{PlaceSignals}(\mtr)$                                       \label{alg:ln:expresso} 
        \Comment{Use technique of Ferles et al. to infer signaling operations}
        \State $\mathcal{G} \gets \mathsf{ConstructFDG}(\mtr')$                                 \label{alg:ln:partor}
                \State $\mathcal{N}_u \gets \textsf{ComputeMaxLocks}(\mathcal{G})$                             \label{alg:ln:max-upper}
        \State $\mathcal{S} \gets \textsc{StaticAnalyze}(\mathcal{G})$                          \label{alg:ln:analyze}

        \State $opt \gets -1$
        \For{$i \in [1, \mathcal{N}_u]$}                                                        \label{alg:ln:loop-start}
          \State $(\mathcal{H}, \mathcal{S}) \gets \textsc{MaxSatEncoding}(\mtr, \mathcal{G}, \mathcal{S}, i)$    \label{alg:ln:maxsat1}
          \State $(p, v, timeout) \gets \mathsf{Solve}(\mathcal{H}, \mathcal{S})$               \label{alg:ln:maxsat2}
          \If{$timeout \lor \left(v \leq opt\right)$} break                                     \label{alg:ln:term-cond}
          \EndIf
          \State $(best, opt) \gets (p, v)$                                                     \label{alg:ln:loop-end}
        \EndFor

        \State \Return $\textsf{Intrument}(best, \mathcal{G}, \mtr')$                           \label{alg:ln:instr}
      \EndProcedure
    \end{algorithmic}
  \end{algorithm}
  \caption{Main Synthesis Algorithm.}
  \label{fig:main-alg}
\end{figure}

\paragraph{\textbf{Iterative exploration of lock count.}} As mentioned above, our synthesis algorithm conceptually reduces the protocol synthesis problem to MaxSAT and uses an off-the-shelf solver to maximize our optimization objective. To achieve this goal, one option is  to generate the MaxSAT encoding based on the maximum possible locks (obtained via the call to \textsf{ComputeMaxLocks}) and then let the solver figure out the optimal number of locks to use.  However, \emph{in practice}, such an approach does not scale  because the size of the encoding increases with respect to the maximum number of locks allowed. That is, for many realistic problems, the MaxSAT solver fails to terminate within a reasonable time limit if we generate the encoding based on the maximum possible locks. 
Thus, instead of directly generating a very large MaxSAT formula up front, our {\sc SynthesizeMonitor} procedure enters a loop (lines \ref{alg:ln:loop-start}--\ref{alg:ln:loop-end}) wherein it gradually increases the maximum number of locks allowed (and hence the size of the MaxSAT encoding). If we get to a point where the MaxSAT solver starts timing out (indicated by boolean variable called \emph{timeout}) or we fail to increase the objective value despite using a larger upper bound on locks (see line~\ref{alg:ln:term-cond}), then the procedure terminates with the best sychronization policy found so far. While this strategy does not guarantee global optimality, it is much more practical than the alternative.

\noindent\begin{minipage}{\textwidth}
\useparinfo
\begin{wrapfigure}{r}{0.4\textwidth}
  \scriptsize
  \begin{minted}[xleftmargin=3em,autogobble,escapeinside=||]{java}
    void put(Object o) {
      waituntil(count < queue.length);
      boolean wasEmpty = count == 0;
      queue[last] = o;
      last = (last + 1) %
      count++;
      broadcast(count == 0, wasEmpty);
    }
  \end{minted}
  \caption{Method put with explicit signals.}
  \label{fig:put-wsig}
\end{wrapfigure}
\paragraph{\textbf{Signaling operations.}}
Our synthesis algorithm uses an auxiliary procedure called \textsf{PlaceSignals}~\cite{expresso} which yields a 
monitor $\mtr'$ that belongs to an intermediate language that is identical
to our source language (Figure~\ref{fig:src-lang}) except that it contains explicit signaling operations. Specifically, this intermediate language contains two additional
signaling directives: (1) \code{signal(p,c)} which notifies a
single thread that is blocked on predicate \code{p}  if condition
\code{c} holds, and (2) \code{broadcast(p,c)} which notifies \emph{all}
 threads blocked on \code{p} if \code{c} holds.  Figure~\ref{fig:put-wsig} shows the result of calling \textsf{PlaceSignals} on the \code{put}  
procedure from Figure~\ref{fig:motiv-src}.
\end{minipage}

\paragraph{\textbf{ FDG construction}} Recall that an FDG is a generalized version of a control-flow graph where nodes are code fragments  rather than basic blocks, and each code fragment is a unit of computation that our algorithm should assign locks to. Since there can be many ways to partition a given method into code fragments, the \textsf{ConstructFDG} procedure invoked at line~\ref{alg:ln:partor} of Figure~\ref{fig:main-alg} implements a particular heuristic for partitioning a method into  code fragments. In particular, the more the number of code fragments, the more the parallelization opportunities; thus, our \textsf{ConstructFDG} procedure tries to maximize the number of code fragments while maintaining the invariant that the FDG is acyclic and that each code fragment must have a unique entry point (see Section~\ref{sec:impl}).

\paragraph{\textbf{Computing upper bound on locks}} Because the MaxSAT encoding assumes a fixed number of locks, our synthesis algorithm calls the \textsf{ComputeMaxLocks} procedure at line~\ref{alg:ln:max-upper} to compute an upper bound on the number of locks needed. Given an FDG $\mathcal {G} = (V, G)$,  the key idea behind this procedure is to construct a so-called \emph{conflict graph}  $G_C = (V, E_C)$ where  $(f, f')$ is in $E_C$ iff fragments $f$ and $f'$ have a data race.  Since it can be shown that the optimal solution to our problem is an \emph{edge clique cover}~\cite{edge-clique-cover} of this conflict graph (\iffull{see Appendix}\else{see extended version}\fi), we can use known theorems (e.g., Mantel's theorem, ~\citet{Alon-1986} etc.) to obtain an upper bound on the number of locks needed without having to solve an NP-complete problem.\footnote{In our implementation, we use multiple upper bounds using known theorems and return the best one.}

\paragraph{\textbf{Static analysis.}} Recall from Section~\ref{sec:overview} that the  constraints in our  MaxSAT encoding utilize information obtained via static analysis. Thus, line~\ref{alg:ln:analyze} of Figure~\ref{fig:main-alg} statically analyzes the input monitor to obtain  the following three pieces of information:
\begin{itemize}[leftmargin=*]
    \item {\bf Atomic fields $\mathcal{F}$:} One of the goals of the analysis is to infer  a set of fields that could \emph{potentially} be implemented using \code{Atomic} types.  Thus, our static analysis checks whether (a) a field of type \code{T} has a corresponding \code{AtomicT} version, and (b) whether all updates to this field can be implemented using one of the methods provided by \code{AtomicT}.
    \item {\bf Data races $\mathcal{R}$:} The second goal of our static analysis is to identify pairs of fragments that would have a data race if they do not use a shared lock. Thus, given a pair of fragments $(f, f')$, our static analysis checks whether $f$ writes to a memory location $l$ that is accessed in $f'$.
    \item {\bf Interleaving opportunities $\mathcal{I}$:} Finally, a third key goal of the analysis is to identify safe interleaving opportunities between fragments. Since this aspect of the analysis is quite involved, we discuss it in the next subsection.
    
\end{itemize}

\paragraph{\textbf{MaxSAT encoding}} As mentioned in Section~\ref{sec:overview}, our MaxSAT encoding uses two types of boolean variables, namely (1) $h_{f_i}^{l_j}$ indicating that fragment $f_i$ must hold lock $l_j$ and (2) $a_{f}$ indicating that field $f$ should be converted to atomic.  Hence, a model of the MaxSAT problem can be easily converted to a so-called \emph{locking protocol} $(\lockmap, \atomfld, \predmap)$, where $\lockmap$ is an assignment from fragments to a set of locks, $\atomfld$ is a set of fields that should be implemented using atomic types, and $\predmap$ is a mapping from \code{waituntil} guards to locks. In particular, we have $l_j \in \lockmap(f_i)$ if and only if $h_{f_i}^{l_j}$ is assigned to true in the model returned by the MaxSAT solver, and we have $\emph{fld} \in \atomfld$ if $a_\emph{fld}$ is assigned to true. Due to the constraints in our MaxSAT encoding, it is similarly easy to derive $\predmap$:  because our encoding ensures that every occurrence of a  \code{waituntil(p)} statement is protected by the \emph{same} set of locks $S$, we associate one of the locks $l$ in $S$ with the condition variable introduced for predicate \code{p}.\footnote{Specifically, when choosing which lock $l$ in set $S$ to designate as the representative, we choose the smallest lock in $S$ according to the total order. Because all locks held by a thread must be released before it blocks on a condition variable and must be acquired after it gets notified (with method \code{{\footnotesize await}} releasing and acquiring $l$ internally), choosing the smallest lock prevents deadlocks.}

\paragraph{\textbf{Instrumentation.}}
The last step of our algorithm is to synthesize the 
explicit-synchronization monitor via the
 $\textsf{Instrument}$ procedure invoked at line~\ref{alg:ln:instr}. Given a synchronization protocol $(\lockmap,\atomfld,\predmap)$, the \textsf{Instrument} procedure performs the following steps:
\begin{enumerate}[leftmargin=*]
 \item First, it introduces all the synchronization fields (locks, condition variables, and atomic fields) that appear in the protocol.
 \item It converts every update to an atomic field to the corresponding atomic \code{update} statement.
 \item Finally, it introduces all the necessary locking and signaling operations to implement the synthesized synchronization protocol.
\end{enumerate}

\iffull{
We refer the interested reader to Appendix~\ref{appendix:mtr-instr} for more details on the instrumentation. 
}
\else{
We refer the interested reader to the extended version of the paper (under supplemental material) for more details on the instrumentation. 
}
\fi

\begin{theorem}\label{thm:top-level-correctness}\iffull{}\else{\footnote{All proofs of this Section are in the extended version of the paper.}}\fi
  {\bf{(Correctness)}} Given an implicit-synchronization monitor $\mtr$ in the language of Figure~\ref{fig:src-lang},  if
  $\textsc{SynthesizeMonitor}(\mtr)$ returns $\mtr'$, then we have $\mtr \sim \mtr'$.
\end{theorem}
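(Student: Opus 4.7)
The plan is to establish the two clauses of Definition~\ref{def:correctness} by reasoning compositionally about the phases of \textsc{SynthesizeMonitor}. I would treat the signal placement step as a black box whose correctness is already established in \citet{expresso}, so that the monitor $\mtr'$ returned at line~\ref{alg:ln:expresso} is known to preserve the semantics of $\mtr$ modulo introducing \code{signal}/\code{broadcast} directives. The only novel obligation is therefore to show that the \textsf{Instrument} step, which adds locks, atomic fields, and condition variables driven by the MaxSAT model $(\lockmap,\atomfld,\predmap)$, preserves $\sim$.

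For clause~(1), given $\mtr \vdash (\tis_i,\args_i,\pstate_s)\Downarrow \pstate_s'$, the induced history $\concr{\mtr'}(\tis_i,\args_i,\pstate_s)$ is sequential by Definition~\ref{def:seq-hist}. On a sequential trace, the synchronization instrumentation is observationally inert with respect to $\equiv_{\mtr}$: locks are acquired uncontended, \code{update} calls on atomic fields agree with the plain updates they replace, and signals notify no blocked thread (because no thread is simultaneously inside the monitor). Deadlock freedom --- guaranteed by the hard constraint that every fragment acquires its lock set in the order $\preceq$ --- ensures the trace actually runs to completion, so the final state is equivalent to $\pstate_s'$.

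Clause~(2) is where the real work lies. Given an arbitrary (well-formed) explicit history $\tis_e$ with $\mtr' \vdash (\tis_e,\args_e,\pstate_t)\downarrow \pstate_t'$, I would construct a sequential history $\tis_e'$ with the same per-thread projection by repeatedly swapping adjacent statements of distinct threads using a Lipton-style mover argument. Two MaxSAT hard constraints supply the required moverness: the \emph{data-race} constraints force any two conflicting fragments to share a lock, so no legal trace can actually place them in an overlapping position to begin with; and the \emph{interleaving} constraints $\mathcal{I}$ guarantee that whenever a fragment $f$ could appear between fragments $f_1,f_2$ of another thread, the static analysis has certified $f$ as a right-mover w.r.t.\ $f_1$ and a left-mover w.r.t.\ $f_2$. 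Each swap preserves $\Pi(\tis_e,t)$ for all $t$ (transpositions involve distinct threads) and preserves the final state (by the commutativity established in the static analysis); thus the limit $\tis_e'$ is sequential, satisfies $(\tis_e,\args_e)\backsim(\tis_i,\args_i)$ for the implicit history $\tis_i$ extracted via $\concr{\mtr'}^{-1}$, and yields $\pstate_t'\equiv_\mtr \pstate_s'$.

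The main obstacle will be making the Lipton reduction argument go through in the presence of \code{waituntil}/\code{await} blocking, which is intrinsically non-local: a thread releases its locks at \code{await} and re-acquires them upon notification, so \code{await} is neither a pure left- nor right-mover. I would handle this by partitioning each execution into segments delimited by \code{signal}/\code{broadcast} and \code{await} events and applying the reduction argument piecewise, relying on the soundness of \textsf{PlaceSignals} to ensure that signals occur exactly when the associated guards become true --- and on the choice (footnoted in Section~\ref{sec:synth-alg}) of the smallest-order lock as the condition variable's associated lock to avoid reacquisition deadlocks. A secondary technical point is the treatment of atomic fields in $\atomfld$: I would show by case analysis that the static filter in the \textsf{StaticAnalyze} phase only admits fields whose every update is expressible as a single atomic \code{update}, which makes each such statement both observationally equivalent to its non-atomic counterpart in a sequential trace and a both-mover in an interleaved one, closing the remaining commutativity gaps.
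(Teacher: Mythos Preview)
Your plan is sound and tracks the same ingredients the paper uses, but the paper organizes them differently. Rather than carrying out a statement-level Lipton reduction directly on explicit histories, the paper factors the argument through the intermediate \emph{fragmented monitor} $\mtr_{\fdg}$ and three separately proved results: Theorem~\ref{thm:safe} (a set of strongly safe interleavings is safe at the fragment level), Theorem~\ref{thm:maxsat} (any MaxSAT model yields a protocol that prevents races, blocks every non-safe interleaving, and respects the global lock order), and the instrumentation-correctness theorem (Appendix~\ref{sec:instr-proof}). The top-level proof is then literally two lines invoking these lemmas together with the assumed correctness of \textsf{PlaceSignals}. What this decomposition buys is that the commutativity argument is carried out once, at fragment granularity, and the awkward synchronization statements (\code{lock}, \code{unlock}, \code{await}, \code{signal}) never enter the reduction: they are confined to the instrumentation-correctness lemma, which only has to show that the emitted code holds exactly the locks $\lockmap[v]$ at each fragment $v$. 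Your direct route would work, but you would need the additional step of lifting an explicit statement-level history to a fragment-level history of $\mtr_{\fdg}$ (justified by the race constraints, which make fragments that could interleave at the statement level data-independent), and then apply the fragment-level reduction---which is essentially re-deriving Theorem~\ref{thm:safe} inline.

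One small slip: your mover directions are reversed. If $f$ appears between $f_1$ and $f_2$ (i.e.\ $f_1;f;f_2$), strong safety (Definition~\ref{def:strong-safe}) requires that $f$ \emph{left}-commute with $f_1$ and all its predecessors, and \emph{right}-commute with $f_2$ and all its successors---so that $f$ can be pushed all the way out in either direction. You wrote right-mover w.r.t.\ $f_1$ and left-mover w.r.t.\ $f_2$, which would let you push $f$ further \emph{in}, not out.
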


\iffull{
\begin{proof}
Can be found in Appendix~\ref{sec:correct-proofs}.
\end{proof}
}
\fi

\subsection{ Analysis to Identify Safe Interleavings}
\label{sec:analysis}

We now describe how to infer safe
interleaving opportunities between threads while ensuring that monitor
operations \emph{appear} to take place atomically. Given a fragment
dependency graph $\mathcal{G} = (V, E)$ for a monitor $\mtr$, an
\emph{interleaving opportunity} (or \emph{interleaving} for short) is a pair $(v, e)$ where $v \in V$ is
a code fragment of $M$ and $e = (v_1, v_2) \in E$ is an edge of the FDG. Intuitively, such an interleaving is safe if some thread can execute $v$ \emph{in between} some other thread's execution 
of $v_1$ and $v_2$ without violating atomicity. The goal of our static analysis is to identify a set $\mathcal{I}$ of such
{safe interleavings}. In what follows,
we  formalize \emph{safe interleavings} and
 describe an analysis for identifying them.

\paragraph{\textbf{Formalizing safe interleavings}}
To formalize the notion of safe interleaving, we need to
keep track of which fragments of the monitor were executed in what order. For this purpose, given an FDG $\mathcal{G} = (V,E)$ of  $\mtr$, we define a \emph{fragmented monitor} $\mtr_{\mathcal{G}}$ to be the same as $\mtr$ except that every fragment in $\mathcal{G}$ is placed in its own method. Observe that histories of $\mtr_{\fdg}$ encode all possible interleavings of {fragments} in $\fdg$. In this sense, histories of $\mtr_{\fdg}$ are similar to explicit monitor histories but are slightly higher level in that they allow interleavings between fragments rather than atomic statements. Thus, we adapt the same notions of \emph{sequential}, \emph{well-formed}, and \emph{interleaved} histories from Section~\ref{sec:target-lang} to fragmented monitors, as illustrated by the following examples.

\begin{example}\label{ex:fragment}
Given monitor $\mtr$ from Figure~\ref{fig:motiv-src}, its fragmented version $\mtr_{\fdg}$ splits \code{put} and \code{take} into four different methods, each named $\code{put}_i$ and $\code{take}_i$. 
Given history $\tis = (take, t)$ and  initial state $\pstate$ with a non-empty queue, we have
\[  
  \concr{\mtr_{\fdg}}(\tis, \args, \pstate) = ((\emph{take}_1, t) (\emph{take}_2, t) (\emph{take}_3, t) (\emph{take}_4, t), \args')
\]
where  $\emph{take}_1, \ldots, \emph{take}_4$ denote fragments 5-8 in Figure~\ref{fig:motiv-src} and $\args$,$\args'$ are empty argument mappings. 

\end{example}

\begin{example}
In the example above, $\concr{\mtr_{\fdg}}(\tis, \args, \pstate)$  is both sequential and well-formed. However,  $(\emph{take}_1, t),  (\emph{take}_2, t)$ is not well-formed because it does not involve all four methods, and $(\emph{take}_1, t), (\emph{take}_3, t), (\emph{take}_2, t), (\emph{take}_4, t)$ is also not well-formed because it executes $\emph{take}_3$ before $\emph{take}_2$.  Finally, the following history is an interleaved (and, by definition, well-formed) history where threads $t$ and $t'$ execute method \code{take} and \code{put} respectively:
 \begin{equation}\label{h:ileave}
 \begin{array}{l}
 \tis_{\fdg} = (\emph{put}_1, t)(\emph{put}_2, t)(\emph{put}_3, t)(\emph{take}_1, t')(\emph{put}_4, t)(\emph{take}_2, t')(\emph{take}_3, t')(\emph{take}_4, t') 
 \end{array}
 \end{equation}
 Furthermore, for this history we have $(\tis_{\fdg}, \args_{\fdg}) \backsim ((put, t) (take, t'), \args)$ for some $\args_{\fdg}$ and $\args$. That is, $\tis_{\fdg}$ \emph{simulates} a history of $\mtr$ where thread $t$ executes method \code{put} and  $t'$ executes \code{take}.
\end{example}

\begin{definition}{\bf (Interleaving)}
Given an FDG $\fdg = (V, E)$ for monitor $\mtr$, an  \emph{interleaving} is a pair $(v, e)$ where $v \in V $ and $e \in E$. Furthermore, given a history $h$ of fragmented monitor $\mtr_\fdg$, we write $\mathcal{X}(\tis_{\fdg})$ to denote the set of all interleavings that occur in $h$.
\end{definition}

\begin{example}
  For the history $\tis_\fdg$ from Eq.~\ref{h:ileave}, we have:
  \[\mathcal{X}(\tis_{\fdg}) = \{ (\emph{take}_1, (\emph{put}_3, \emph{put}_4)), (\emph{put}_4, (\emph{take}_1, \emph{take}_2))\}\] This is the case because this history executes $\emph{take}_1$ in between consecutive fragments $\emph{put}_3$ and $\emph{put}_4$ of some other thread. Similarly, we have $(\emph{put}_4, (\emph{take}_1, \emph{take}_2)) \in \mathcal{\chi}(\tis_\fdg)$ because it executes $\emph{put}_4$ in between  $\emph{take}_1$ and $\emph{take}_2$.
  \end{example}

\begin{definition}{\bf (Safe interleavings).}\label{def:safe-intl} Let $\mathcal{G}$ be an FDG of monitor $\mtr$. We say that a set of interleavings $S$ is \emph{safe}, if for every input state $\sigma$ and every interleaved history $\tis_{\fdg}$ of $\mtr_{\fdg}$ we have:
\[
\textrm{If\ } \mathcal{X}(\tis_{\fdg}) \subseteq S \textrm{\ and\ } \mtr_{\fdg} \vdash (\tis_{\fdg}, \args_{\fdg}, \pstate) \Downarrow \pstate' \textrm{\ then\ } \exists\tis,\args.\ (\tis_{\fdg},\args_{\fdg}) \backsim (\tis,\args) \textrm{\ and\ } \mtr \vdash (\tis, \args, \pstate) \Downarrow \pstate' 
\]
\end{definition}

In other words, a set of interleavings $S$ is safe if for every interleaved history of $\tis_{\fdg}$ whose interleavings are a subset of $S$ we can prove that $\tis_\fdg$ leads to the same final state as \emph{some} history $\tis$ of $M$ where $\tis$ simulates $\tis_\fdg$.
This definition essentially lifts the second correctness criterion of Definition~\ref{def:correctness} to a fragmented monitor.

\paragraph{\textbf{Inferring Safe Interleavings.}}
We now turn our attention to the problem of \emph{inferring} safe interleavings. Given a monitor $\mtr$ and its  FDG
$\mathcal{G} = (V,E)$, our goal is to find a set $\mathcal{I}
\subseteq V \times E$ such that all interleavings in $\mathcal{I}$ are
safe. However, a key challenge is that the space of all safe interleavings
is exponential (i.e., the power set of $V \times E$), so, even
if we had a procedure for checking whether some set $\mathcal{I}$ is safe,
enumerating all candidates would be computationally intractable.

To overcome this challenge, we introduce the notion of \emph{strong safety} that allows us to build $\mathcal{I}$ iteratively. In particular, note that if $S_1$ and $S_2$ are both safe interleaving sets according to Definition~\ref{def:safe-intl}, it may \emph{not} be the case that $S_1 \cup S_2$ is also a safe interleaving. However, to build $\mathcal{I}$ incrementally, we need a notion of safe interleaving that is closed under union. For this purpose, we introduce a notion of \emph{strong safety} for a single interleaving $(v, e)$. Since strongly safe interleavings enjoy the property of being closed under union, this notion lends itself to a computationally feasible technique for computing safe interleaving sets. In the remainder of this section, we define strong safety and present our static analysis for computing safe interleaving sets. 
Towards this goal, we first introduce the notions of \emph{left} and \emph{right commutativity} for our context:

\begin{definition}{{\bf (Left/Right Commutativity).}}\label{def:commute} Given fragments $v$ and $v'$, we say that $v$ \emph{left commutes} with $v'$, denoted \emph{LeftCommute($v$, $v'$)}, iff, whenever 
$\mtr_{\fdg} \vdash \left(
(v', t')(v, t), \args, \sigma
\right)\Downarrow \pstate'$ holds, so does $\mtr_{\fdg} \vdash \left(
(v, t)(v', t'), \args, \sigma
\right)\Downarrow \pstate'$.
Conversely,  $v$ \emph{right commutes} with $v'$, denoted \emph{RightCommute($v$, $v'$)}, iff 
  $\mtr_{\fdg} \vdash\left(
  (v, t)(v', t'),\args,\sigma
  \right) \Downarrow \pstate'$ implies $\mtr_{\fdg} \vdash \left(
  (v', t')(v, t), \args, \sigma
  \right) \Downarrow \pstate'$.
\end{definition}
 In other words, a fragment $v$ left commutes with $v'$ if, whenever $v$ executes just after $v'$, the resulting  state is the same as if $v$ had executed just before $v'$.  For example, $f_4$ (i.e. \code{count++}) in
Figure~\ref{fig:motiv-src} left-commutes with $f_5$ since increasing \code{count} right after \code{waituntil(count > 0)}  is equivalent to increasing \code{count} just before \code{waituntil(count >0)}. That is, assuming that \code{waituntil(count>0)} was not blocked before executing \code{count++}, then it will still not be blocked \emph{after} executing \code{count++}. However, $f_4$ does not left-commute with $f_1$: when \code{count} equals \code{queue.length - 1}, incrementing \code{count} just after \wuntil(\code{count < queue.length})  is \emph{not} equivalent to incrementing \code{queue.length} before the \wuntil statement. That is, if \code{waituntil(count < queue.length)} did not block before executing \code{count++}, we cannot guarantee that it also does not block after executing \code{count++}.
  
Next, we use this notion of left and right commutativity to define strong safety:
  
\begin{definition}{\bf (Strong safety).}\label{def:strong-safe}  Let $\mathcal{G} = (V, E)$ be an FDG for monitor $\mtr$, and let $E^{*}$ denote the reflexive transitive closure of $E$. We say that an interleaving $(v, e)$, where $e = (v_s,v_t)$, is \emph{strongly safe} if the following conditions are satisfied:
  \begin{enumerate}
        \item $\forall v^-. (v^-, v_s) \in E^* \Longrightarrow \emph{LeftCommute}(v, v^-)$
  \item $\forall v^+. (v_t, v^+) \in E^* \Longrightarrow \emph{RightCommute}(v, v^+)$
  \end{enumerate}
  \end{definition}
That is, an interleaving $(v,e)$  is said to be \emph{strongly safe} if we can prove that fragment $v$ left commutes with \emph{every} possible predecessor of $v_s$ and that it right commutes with \emph{every} possible successor of $v_t$. To see why these conditions imply safety, recall that  a set of interleavings $S$ is safe if, for any  history $\tis_{\fdg}$ whose interleavings are a subset of $S$, we can find some (sequential) history of the original monitor that simulates $\tis_{\fdg}$. Assuming $S$ contains only strongly safe interleavings, we can create such a sequential history by ``removing'' interleavings one at a time from $\tis_{\fdg}$. For instance, let $\chi = (v, (v_s,v_t)) \in S$ be an interleaving that occurs in $\tis_{\fdg}$. Since $\chi$ is strongly safe, we can always obtain an equivalent history $\tis'_{\fdg}$ that has strictly less interleavings than $\tis_{\fdg}$ by commuting $v$ past either every successor of $v_t$ or every predecessor of $v_s$ that appears in $\tis_{\fdg}$.

\begin{example}
  For the monitor from Figure~\ref{fig:motiv-src}, we can show that every interleaving $(v, e)$ where $v$ belongs to method \code{take} and edge $e$ belongs to method \code{put} (and vice versa) is strongly safe. However, none of the interleavings where $v$ and $e$ belong to the same method are  strongly safe. Finally,  because both of the interleavings of the history $\tis_{\fdg}$ from Eq.~\ref{h:ileave} are strongly safe, we can derive a sequential history that simulates history $\tis_{\fdg}$ by swapping $ (\emph{take}_1, t')$ with $(\emph{put}_4, t)$. %
\end{example}

We now state a key theorem that underlies the correctness of our approach:

\begin{theorem}\label{thm:safe}
  Let $\mathcal{G}$ be an FDG and let $\chi_1, \ldots, \chi_n$ be \emph{strongly safe} interleavings.  Then,  $\{ \chi_1, \ldots, \chi_n\}$ satisfies Definition~\ref{def:safe-intl} (i.e., is a safe interleaving set for $\mathcal{G}$). 
\end{theorem}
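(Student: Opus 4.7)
The plan is to induct on $|\mathcal{X}(h_{\fdg})|$, the number of interleavings that occur in a well-formed history $h_{\fdg}$ of $\mtr_{\fdg}$ satisfying $\mathcal{X}(h_{\fdg}) \subseteq S := \{\chi_1,\ldots,\chi_n\}$. The invariant I carry through the induction is that whenever $\mtr_{\fdg} \vdash (h_{\fdg},\args_{\fdg},\pstate) \Downarrow \pstate'$, there exist an implicit history $\tis$ and argument mapping $\args$ with $(h_{\fdg},\args_{\fdg}) \backsim (\tis,\args)$ and $\mtr \vdash (\tis,\args,\pstate) \Downarrow \pstate'$.

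For the base case $|\mathcal{X}(h_{\fdg})| = 0$, I would first argue that every invocation of every thread appears as a contiguous block in $h_{\fdg}$: otherwise some fragment $w$ wedged between two consecutive fragments $v_s,v_t$ of another thread's invocation would witness $(w,(v_s,v_t)) \in \mathcal{X}(h_{\fdg})$. Hence $h_{\fdg}$ is sequential in the sense of Definition~\ref{def:seq-hist} and equals $\concr{\mtr_{\fdg}}(\tis,\args,\pstate)$ for the natural implicit history $\tis$ of $\mtr$. Since $\mtr_{\fdg}$ only re-partitions $\mtr$'s code into singleton-fragment methods and shares identical per-fragment semantics, executing $\tis$ in $\mtr$ yields the same final state $\pstate'$.

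For the inductive step with $|\mathcal{X}(h_{\fdg})| > 0$, I would realize the strategy described after Definition~\ref{def:strong-safe}: pick any $\chi = (v,(v_s,v_t)) \in \mathcal{X}(h_{\fdg}) \subseteq S$ and transform $h_{\fdg}$ into an equivalent $h'_{\fdg}$ by commuting $v$ past every FDG-successor of $v_t$ that appears in the current invocation of $t'$'s method (or symmetrically past every predecessor of $v_s$). Strong safety of $\chi$ supplies the right-commutativity of $v$ with each such successor, justifying each individual swap against a $t'$-fragment. Any fragment $w$ of a third thread blocking $v$'s path must itself be interleaved inside $t'$'s invocation, so $(w,e') \in \mathcal{X}(h_{\fdg}) \subseteq S$ is strongly safe and provides the commutativity needed to shuffle $w$ out of $v$'s way before advancing $v$ across its position. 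Because only fragments of distinct threads are swapped and each swap preserves the state, $h'_{\fdg}$ inherits the per-thread projections and final state of $h_{\fdg}$.

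The step I expect to require the most care is establishing both $\mathcal{X}(h'_{\fdg}) \subseteq S$ and $|\mathcal{X}(h'_{\fdg})| < |\mathcal{X}(h_{\fdg})|$, so that the induction hypothesis applies to $h'_{\fdg}$. I expect this to rest on two observations: once $v$ has been shifted past every remaining $t'$-successor of $v_t$, $v$ lies entirely outside $t'$'s active window, so $\chi$ together with every sibling $(v,(v_j,v_{j+1}))$ introduced transiently during the shift is eliminated in $h'_{\fdg}$; and every interleaving that $h'_{\fdg}$ newly carries for other fragments either already appears in $\mathcal{X}(h_{\fdg})$ (its endpoints are unaffected by the swaps we performed) or is itself strongly safe by the closure of the commutativity requirements along the reflexive transitive relation $E^{*}$ used in Definition~\ref{def:strong-safe}. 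Applying the induction hypothesis to $h'_{\fdg}$ then supplies an implicit $\tis$ with the required properties, and equal per-thread projections plus equal final states lift the simulation from $(h'_{\fdg},\args'_{\fdg}) \backsim (\tis,\args)$ to $(h_{\fdg},\args_{\fdg}) \backsim (\tis,\args)$ by the definition of $\backsim$, completing the verification that $S$ is a safe interleaving set in the sense of Definition~\ref{def:safe-intl}.
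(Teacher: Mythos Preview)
Your overall plan matches the paper's strategy, but the inductive step has a real gap. You propose to pick an arbitrary $\chi=(v,(v_s,v_t))$ and commute $v$ out of $t'$'s window, handling any blocking third-thread fragment $w$ by first shuffling $w$ out using $w$'s own strong safety. But moving $w$ (say rightward past the remaining $t'$-fragments) can create new interleavings of $t'$-fragments inside $w$'s own thread's invocation, and nothing forces those to lie in $S$. Concretely, take $h_{\fdg}=b_0\,a_1\,b_1\,v\,b_2\,a_2\,a_3$ with $t'$ running $a_1a_2a_3$ and $t''$ running $b_0b_1b_2$; sliding $b_2$ past $a_2,a_3$ inserts $a_2,a_3$ between $b_1$ and $b_2$, creating $(a_2,(b_1,b_2))$ and $(a_3,(b_1,b_2))$. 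For these to be strongly safe you would need, e.g., $a_2$ to left-commute with the predecessor $b_0$, which is not implied by any interleaving in the original $\mathcal{X}(h_{\fdg})$. The $E^*$-closure you invoke gives commutativity of $b_2$ with $a_2,a_3$, but that is the wrong direction for establishing strong safety of the \emph{new} interleavings. So neither $\mathcal{X}(h'_{\fdg})\subseteq S$ nor $|\mathcal{X}(h'_{\fdg})|<|\mathcal{X}(h_{\fdg})|$ is guaranteed, and your induction hypothesis does not apply to $h'_{\fdg}$.

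The paper avoids this by a more disciplined choice of what to move and where. It first partitions the history into per-CCR windows and, inside each window owned by thread $t$, always moves the \emph{rightmost} non-$t$ fragment to the end of that window. Because the chosen fragment is rightmost, every fragment it must cross is a $t$-fragment and hence an $E^*$-successor of $v_t$ for the interleaving that $u$ witnessed in the \emph{original} history; the right-commutativity clause of that original strongly safe interleaving is therefore exactly what each swap requires. No third-thread swaps arise, and there is no need to maintain $\mathcal{X}\subseteq S$ for intermediate histories at all. If you want to repair your global induction, drop the invariant $\mathcal{X}(h'_{\fdg})\subseteq S$ and instead observe that the commutativity justifying every swap is already granted by strong safety of the interleavings present in the original $h_{\fdg}$; a measure such as the number of (non-$t$,\,$t$) inversions within a CCR window then strictly decreases at each step.
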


\iffull{
\begin{proof}
Can be found in Appendix~\ref{sec:correct-proofs}.
\end{proof}
}
\fi

\begin{figure}
  \begin{algorithm}[H]
    \begin{algorithmic}[1]
      \Procedure{FindSafeInterleavings}{$\mathcal{G}$}
        \State \textbf{input:} An FDG representation $\mathcal{G} = (V, E)$ of monitor $\mtr$
        \State \textbf{output:} A set $\mathcal{I}$ of all safe interleavings
        \vspace{0.04in}
        \State $\mathcal{I} \gets \emptyset$
        \For{$v \in V,\  e = (v_s, v_t) \in E$}
            \vspace{0.05in}
            \State $V_s^{*} \gets \left\{\ v' \mid (v',v_s) \in E^{*}\ \right\}$ \Comment{All predecessor vertices that reach $v_s$.}
            \vspace{0.05in}
            \State $V_t^{*} \gets \left\{\ v' \mid (v_t, v') \in E^{*}\ \right\}$ \Comment{All successor vertices of $v_t$.}
            \vspace{0.05in}
            \If{$
            \left(
                \forall v_s^* \in V_s^{*} .\ \textsc{LeftCommute}(v, v_s^*)
            \right)
            \land \left(
                \forall v_t^* \in V_t^{*} .\ \textsc{LeftCommute}(v_t^*,v)
            \right)
            $}
            \vspace{0.05in}
                \State $\mathcal{I} \gets \mathcal{I} \cup \{ (v, e) \}$
            \EndIf
        \EndFor
        \State \Return $\mathcal{I}$
      \EndProcedure
      \vspace{1em}
      \Function{LeftCommute}{$v, v'$}
        \State \textbf{input:} Two fragments $v$, $v'$
        \State \textbf{output:} true iff $v$ left commutes with $v'$
        \vspace{0.04in}
        \State $ X\gets \{x \mid x \text{ is a variable in }v \text{ or }v'\}$.
        \State $X_L \gets \{x_l \text{ fresh name} \mid x \in X\}\ \ \ \ \ \ \ \ \ \ \ \ \ \ \ \ $  $ X_R \gets \{x_r \text{ fresh name} \mid x \in X\}$
        \State $S_L \gets (v'; v)[\code{assume}/\wuntil,  X_L /  X] \ \ \ $
         $S_R \gets (v; v')[\code{assert}/\wuntil,  X_R / X]$
        \State \Return \textsf{Verify}$(\{ X_L =  X_R\} \,S_L; S_R \,\{ X_L =  X_R\})$
      \EndFunction
    \end{algorithmic}
  \end{algorithm}
  \caption{Algorithm to find all safe interleavings.}
  \label{fig:algo-safe-intl}
\end{figure}

\paragraph{\textbf{Static analysis algorithm.}} Finally, we conclude this section by presenting our static analysis algorithm (shown in Figure~\ref{fig:algo-safe-intl}) for computing a set $\mathcal{I}$ of safe interleavings.  At a high level, this algorithm identifies which $(v, e)$ pairs are strongly safe and then returns their union, which by Theorem~\ref{thm:safe}, corresponds to a safe interleaving set. To check whether an interleaving $(v, e)$ (for $e=(v_s, v_t)$) is strongly safe,  we must check if $v$ left commutes with each predecessor of $v_s$ and right commutes with each successor of $v_t$. As shown in the {\sc LeftCommute} procedure, we reduce the verification of left commutativity to the problem of verifying a Hoare triple. In particular, given fragments $v, v'$, we generate a code snippet $S_L; S_R$ where (1) $S_L$ is an alpha-renamed version of $v'; v$ with \code{\wuntil}'s replaced by $\code{assume}$ statements, and (2) $S_R$ is an alpha-renamed version of $ v; v'$ with \code{\wuntil}'s replaced by $\code{assert}$ statements. Note that we turn $\code{\wuntil}$'s in $S_L$ into \code{assume}s because the definition of left commutativity assumes that $v'; v$ has terminated. On the other hand, we need to \emph{show} that $S_R$ does \emph{not} block; thus, we assert that the predicates in the \code{\wuntil} statement evaluate to true under the assumption that they also evaluate to true in $S_L$. Finally, in addition to showing that \code{waituntil}'s are not blocked, we also need to establish that the monitor state is the same in $S_L$ and $S_R$. Thus, the Hoare triple we construct checks that the values of variables are the same at the end, assuming that they are the same in the beginning. Note that the implementation of right commutativity is the same with $v$ and $v'$ swapped; thus, {\sc RightCommute}$(v,v')$ can be checked by directly calling {\sc LeftCommute}$(v',v)$.

\subsection{MaxSAT Encoding}
\label{sec:max-sat}

\begin{figure}
  \small
  \[
  \begin{array}{cccc}
    \fbox{\textsc{{\footnotesize Race-1}}} & \multicolumn{3}{c}{
      \irule{
        IsFrag(v_1)  \irulespace IsFrag(v_2) \irulespace
        \irulespace Races = \mathcal{R}(v_1, v_2) \irulespace Races \subseteq \mathcal{F} \irulespace
        Races = \{\ \code{this.f} \ \}
    }{
      Mutex(\{v_1,v_2\}, \mathcal{N}) \lor \toatom_{f} \in \mathcal{H} %
    }}\\\\
    \fbox{\textsc{{\footnotesize Race-2}}} &
    \multicolumn{3}{c}{
    \irule{
        IsFrag(v_1) \irulespace IsFrag(v_2) \irulespace
        Races = \mathcal{R}(v_1, v_2)  \irulespace  Races \neq \emptyset \irulespace
        (|Races| > 1 \lor Races \nsubseteq \mathcal{F})
    }{
      Mutex(\{v_1,v_2\}, \mathcal{N}) \in \mathcal{H} %
    }}
    \\\\
    \fbox{\textsc{{\footnotesize I-Leave}}} &
    \multicolumn{3}{c}{
    \irule{
        IsFrag(v) \irulespace IsEdge(e) \irulespace e = (v_s,v_t) \irulespace \neg \emph{SafeInterleaving}(v, e)
    }{
      Mutex(\{v,v_{s},v_{t}\}, \mathcal{N}) \in \mathcal{H} %
    }}\\\\
    \fbox{\textsc{{\footnotesize Wait}}} &
    \irule{
      \begin{array}{c}
        p \in Preds(\mtr) \\ F = \left \{\ f \mid IsFrag(f),\ f \equiv \code{waituntil(p)} \right \}
      \end{array}
    }{
      Mutex(F, \mathcal{N}) \in \mathcal{H} \irulespace \bigwedge\limits_{i = 1}^{\mathcal{N}}\bigwedge_{v_1, v_2 \in F} \left ( \holds_{v_1}^{l^i} \leftrightarrow \holds_{v_2}^{l^i} \right ) \in \mathcal{H}
    } &
    \fbox{\textsc{{\footnotesize L-Order}}} &
    \irule{
      \begin{array}{c}
        IsEdge(e)
      \end{array}
    }{
      LockOrder(e, \mathcal{N}) \in \mathcal{H} %
    }\\\\
    \fbox{\textsc{{\footnotesize Min-Lock}}} &
    \irule{
      \begin{array}{c}
        m \in Methods(\mtr)\\
        MF = \{\ v \mid IsFrag(v),\ Method(v) = m \ \}
      \end{array}
    }{
      \bigcup\limits_{i = 1}^{\mathcal{N}} \{ \bigwedge\limits_{f \in MF} \neg \holds_f^{l_i} \} \subseteq \mathcal{S} %
    } &
    \fbox{\textsc{{\footnotesize Min-Atom}}} &
    \irule{
      \code{this.fld} \in \mathcal{F}
    }{
      \neg \toatom_{fld} \in \mathcal{S} %
    }\\\\ %
    \fbox{\textsc{{\footnotesize Max-Par}}} &
    \multicolumn{3}{c}{
    \irule{
        IsFrag(v_1) \irulespace IsFrag(v_2) \irulespace
        \mathcal{R}(v_1, v_2) = \emptyset
    }{
      \neg Mutex(\{v_1, v_2\}, \mathcal{N}) \in \mathcal{S} %
    }}\\\\
    \hline
    \\
    \fbox{\textsc{{\footnotesize Aux-Defs}}} & \multicolumn{3}{c}{Mutex(F, \mathcal{N}) = \bigvee\limits_{i=1}^{\mathcal{N}}\bigwedge\limits_{f \in F} \holds_f^{l_i} \irulespace
      LockOrder((v_{s}, v_{t}), \mathcal{N}) =
      \bigwedge\limits_{1 \leq \ell < u \leq \mathcal{N}}
      \lnot\left(
      \holds_{v_{s}}^u
      \land \holds_{v_{t}}^u\land
      \lnot \holds_{v_{s}}^\ell
      \land \holds_{v_{t}}^\ell
      \right)}
    \\
  \end{array}
  \]
  \caption{Inference rules for \textsc{MaxSatEncoding}($\mtr,
    \mathcal{G}, \mathcal{S}, \mathcal{N}$) procedure. $\mathcal{G} =
    (V,E)$ is an FDG of monitor $\mtr$, $\mathcal{S} = (\mathcal{F},
    \mathcal{R}, \mathcal{I})$ are the results of the static analysis,
    and $\mathcal{N}$ is an upper bound on the number of
    locks. Predicate $IsFrag(v)$ is true if $v \in V$, $IsEdge(e)$ if
    $e \in E$, and $SafeInterleaving(v, e)$ if $(v, e) \in
    \mathcal{I}$. Relations $Methods(\mtr)$ and $Preds(\mtr)$ return
    all methods of monitor $\mtr$ and all predicates that appear as an
    argument of a $\wuntil$ statement in $\mtr$ respectively.}
  \label{fig:maxsat-alg}
\end{figure}

In this section, we describe our MaxSAT encoding which is formalized as inference rules in Figure~\ref{fig:maxsat-alg}. Recall that the encoding procedure takes as input (a) an FDG representation of the monitor, (b) the results of the static analysis, and (c) an upper bound on the maximum number of locks, and it produces a set of hard constraints $\mathcal{H}$ and a set of soft constraints $\mathcal{S}$. In the remainder of this section, we describe the inference rules in Figure~\ref{fig:maxsat-alg} for generating these constraints in more detail. 

\paragraph{\textbf{Variables.}} Our MaxSAT encoding uses two types of variables. First, we introduce variables of the form $h_{v_i}^{l_j}$ indicating that fragment $v_i$ needs to hold lock $l_j$. Thus, given an FDG with $n$ vertices and an upper bound $\mathcal{N}$ on the number of locks, our encoding contains $n \times \mathcal{N}$ such variables.  The second type of variable used in our encoding is of the form $a_\emph{fld}$ indicating that $\emph{fld}$ should be implemented using an atomic type. 

\paragraph{\textbf{Mutex encoding.}} Given a set of fragments $F$ and an upper bound $\mathcal{N}$ on the number of locks, we often need to enforce that all fragments in $F$ share at least one of the  $\mathcal{N}$ possible locks. We write \emph{Mutex}($F, \mathcal{N}$) to denote this requirement. In particular, as shown at the bottom of Figure~\ref{fig:maxsat-alg}, this is defined as $Mutex(F, \mathcal{N}) = \bigvee\limits_{i=1}^{\mathcal{N}}\bigwedge\limits_{f \in F} \holds_f^{l_i}$.

\paragraph{\textbf{Hard constraints.}}
Next, we describe the hard constraints generated by our MaxSAT encoding. These hard constraints $\mathcal{H}$ correspond to  correctness requirements on the synthesized protocol  and include (1) data race freedom, (2) correct signaling and deadlock freedom and (3) atomicity. Specifically, the first two rules in Figure~\ref{fig:maxsat-alg} deal with data race freedom, the next rule deals with atomicity, and the last two rules deal with deadlock freedom and correct signaling.

\paragraph{\textsc{Race-1}} The first rule, labeled {\sc Race-1},  deals with data race freedom of two fragments that 
have a data race on a \emph{single} monitor field \code{f}. The premises of this rule stipulate that $v_1, v_2$ are fragments that race \emph{only} on field $f$ which can be converted to atomic (i.e., \code{this.f} $\in \mathcal{F}$). In this case, we prevent data races by either (1) enforcing that $v_1, v_2$ share a lock (the \emph{Mutex} constraint) or (2)  ensuring that field $f$ is converted to an atomic field.

\paragraph{\textsc{Race-2}}
The next {\sc Race-2} rule prevents data races between fragments where the data race cannot be resolved by making one of the fields atomic. In particular, given two fragments $v_1, v_2$ that have a data race, this rule simply enforces that they share a common lock via the \emph{Mutex} function.

\paragraph{\textsc{I-Leave}}
The next rule generates constraints to ensure that monitor operations appear to take place atomically. In particular, if the static analysis cannot prove $(v, e)$ to be a strongly safe interleaving (recall Definition~\ref{def:strong-safe}), then we need to ensure that a thread cannot execute $v$ when some other thread is executing $e$. To do so, we ensure that $v, v_s, v_t$ all share a common lock by generating a \emph{Mutex} hard constraint for these three fragments.

\paragraph{\textsc{L-Order}}
The next rule, labeled \textsc{L-Order}, ensures that the 
resulting synchronization protocol is deadlock-free. Specifically, for
every edge $e = (v_s,v_t)$ in the input FDG, this rule generates a
hard constraint, via $LockOrder(e,\mathcal{N})$ (defined at the bottom of Figure~\ref{fig:maxsat-alg}), that ensures
that every lock acquisition respects the total order on locks. In
particular, for every pair of locks $l,u$ such that $l \prec u$,
$LockOrder(e, \mathcal{N})$ prevents the  synchronization protocol from violating the global order on locks. Recall that  a protocol violates this global order if it acquires the ``smaller" lock $l$  between $v_s$ and $v_t$ while both $v_s$ and $v_t$ hold lock $u$. Thus, the hard constraint generated by $LockOrder(e, \mathcal{N})$ prevents this from happening. 

\begin{example}
  Assuming $\mathcal{N} = 2$, this rule generates $\neg \left( \holds_{v_s}^{l_2}
  \land \holds_{v_t}^{l_2} \land \neg \holds_{v_s}^{l_1} \land
  \holds_{v_t}^{l_1}\right)$ for edge $(v_s,v_t)$.
\end{example}

\paragraph{\textsc{Wait}} The last hard constraint rule, called \textsc{Wait},  is used for associating a single lock with each condition variable. In particular, since all fragments
of the form $\wuntil(p)$ must hold the same set of locks, this rule generates two hard constraints for every $\wuntil$ predicate
$p$ of the input monitor: (1) a mutex constraint for all $\wuntil(p)$
fragments and (2) a constraint that enforces that all $\wuntil(p)$ fragments
must share \emph{all} common locks.

\paragraph{\textbf{Soft Constraints.}} As discussed earlier, our goal is to generate a synchronization protocol that is not only correct-by-construction but one that also results  in efficient code. Hence, as a proxy metric for efficiency, we want to (1) minimize the number of locks and atomic fields that are introduced, and (2) maximize the number of fragments that can run in parallel. The remaining three rules in Figure~\ref{fig:maxsat-alg} introduce soft constraints to encode this optimization objective.

\paragraph{\textsc{Min-Lock}} The rule labeled \textsc{Min-Lock} is used for minimizing the number of locks. However, instead of simply minimizing the total number of locks used by the protocol, the soft constraints generated by this rule minimize the number of locks used \emph{per method}. Even though this is not equivalent to minimizing the number of locks used by the entire protocol, we have found this approach to synthesize protocols with a more even distribution of locks among the monitor methods. In practice, such protocols are more desirable because they avoid scenarios where a subset of the methods incur a higher synchronization cost than others. Specifically,  this rule generates a  soft constraint for every lock $l \in \{l_1 ... l_{\mathcal{N}}\}$ and every method $m$ of $\mtr$ and asserts that none of the fragments in $m$  hold lock $l$.

\paragraph{\textsc{Min-Atom}} The {\sc Min-Atom} rule generates soft constrains to minimize the number of fields that are made atomic by  asserting that $\toatom_{fld}$ is assigned to false.

\paragraph{\textsc{Max-Par}} The last rule called {\sc Max-Par} generates  soft constraints to maximize parallelism. Specifically, for every pair of fragments $(v, v')$ that do not have  data races, we add a soft constraint stating that $v$ and $v'$  \emph{do not share} any locks.

We conclude this Section with a theorem that states the correctness of our MaxSAT encoding.

\begin{theorem}\label{thm:maxsat}
  Let $m$ be a model of the generated MaxSAT instance and $(\lockmap,\atomfld,\predmap)$ be the synchronization protocol constructed as follows:
  \begin{gather*}
     {\small\lockmap = \left\{ v \mapsto \left\{ l \mid m[h_v^l]\right\}\right\}\ \atomfld = \left\{ \code{fld} \mid m[\toatom_{fld}] \right\}}\ 
     {\small\predmap = \left\{p \mapsto  l_i \mid IsWait(v,p), i = min(\{j \mid m[\holds_v^{l_j}]\}) \right\}}
  \end{gather*}
  where, $IsWait(v,p)$ is true if v is a \wuntil statement on $p$. Then, $(\lockmap,\atomfld,\predmap)$ is a correct synchronization protocol.

\end{theorem}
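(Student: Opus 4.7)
The plan is to show that the synchronization protocol $(\lockmap, \atomfld, \predmap)$ extracted from the model $m$, when plugged into the \textsf{Instrument} procedure of Section~\ref{sec:synth-alg}, produces an explicit monitor $\mtr_t$ satisfying $\mtr \sim \mtr_t$ (Definition~\ref{def:correctness}). Thus the overall strategy is to connect each family of hard constraints in Figure~\ref{fig:maxsat-alg} to one of the three correctness requirements identified in Section~\ref{sec:overview} (data-race freedom, deadlock freedom, atomicity), and then invoke these as preconditions of a ``soundness of instrumentation'' lemma that ultimately lets us reuse Theorem~\ref{thm:top-level-correctness} / Theorem~\ref{thm:safe}.

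First, I would verify the mutual-exclusion side. From $m \models Mutex(F,\mathcal{N})$ we can read off a lock $l_i$ held by every $f \in F$, and by the construction of $\lockmap$ this lock is acquired around each such fragment in the instrumented code. Applying this to {\sc Race-1} and {\sc Race-2} guarantees that every racy pair of fragments either shares a held lock or accesses the contended field only through its \code{Atomic} interface (licensed by $\atomfld$ via $m[\toatom_f]$), which yields data-race freedom. Applying it to {\sc Wait} guarantees that every occurrence of $\wuntil(p)$ is protected by exactly the same lock-set, which is precisely what is needed so that $\predmap(p)$ is a well-defined condition-variable anchor and so that \code{await}/\code{signal} on $\predmap(p)$ are issued while holding $\predmap(p)$.

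Next, I would handle deadlock freedom. The {\sc L-Order} rule asserts, for every FDG edge $(v_s,v_t)$, that we never acquire a ``smaller'' lock $\ell$ between $v_s$ and $v_t$ while already holding a ``larger'' lock $u$. Together with the fact (from \textsc{Wait}) that all locks held by a thread before a \code{waituntil(p)} are released inside \code{await} and reacquired in the same order, and with the fact that $\predmap(p)$ is chosen as the minimum-indexed lock in the wait set (so it is the last lock to be reacquired without violating the global order), this gives us a standard strict-ordering argument: no cyclic wait can arise, hence the protocol is deadlock-free. This in turn discharges the first clause of Definition~\ref{def:correctness} (every implicit history is matched by a terminating explicit one).

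Finally, atomicity — the hardest step, and the one where I expect the real work. The goal is the second clause of Definition~\ref{def:correctness}: every well-formed explicit history $\tis_e$ simulates a sequential (hence implicit) history yielding the same state. Lifting to the fragmented monitor $\mtr_\fdg$, it suffices to show that the set $\mathcal{X}(\tis_e)$ of interleavings induced by $\tis_e$ is contained in the safe-interleaving set $\mathcal{I}$ returned by the static analysis, whereupon Theorem~\ref{thm:safe} delivers a simulating sequential history. This containment is exactly what {\sc I-Leave} buys us: for every $(v,e) \notin \mathcal{I}$ the model forces $Mutex(\{v,v_s,v_t\},\mathcal{N})$, so in the instrumented code $v$ cannot execute between $v_s$ and $v_t$ on another thread because all three share a held lock. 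The main obstacle is therefore the semantic bridge between the Boolean statement ``$v$ and $(v_s,v_t)$ share a lock under $\lockmap$'' and the operational statement ``$(v,(v_s,v_t))$ never occurs in any well-formed execution of the instrumented monitor.'' I would discharge this by an induction on the length of $\tis_e$, using the deadlock-freedom argument above to ensure locks are actually released (so shared-lock protection really is enforced), and using the {\sc Wait} invariant to handle the subtle case where the interleaving straddles an \code{await}. Once this containment is established, chaining Theorem~\ref{thm:safe}, state equivalence (Definition~\ref{def:sequiv}, unaffected by the extra synchronization fields), and the termination guarantee from deadlock freedom yields both clauses of Definition~\ref{def:correctness}, completing the proof.
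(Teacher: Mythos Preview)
Your proposal overshoots the target of Theorem~\ref{thm:maxsat} and, in doing so, introduces a circularity. You interpret ``correct synchronization protocol'' as the full semantic equivalence $\mtr \sim \mtr_t$ of Definition~\ref{def:correctness}, and accordingly try to establish both clauses of that definition, even proposing to ``reuse Theorem~\ref{thm:top-level-correctness}.'' But in the paper's architecture, Theorem~\ref{thm:top-level-correctness} is \emph{proved from} Theorem~\ref{thm:maxsat} (together with Theorem~\ref{thm:safe} and the instrumentation theorem); invoking it here is circular. More importantly, ``correct synchronization protocol'' in this theorem means something much weaker and more local: it is precisely the three abstract conditions spelled out in the paper's proof---(1) every racy fragment pair is protected by a shared lock or an atomic field, (2) every interleaving not in $\mathcal{I}$ is blocked by a shared lock among $\{v,v_s,v_t\}$, and (3) lock acquisitions respect the global order so the protocol is deadlock-free. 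The operational bridge you worry about (``shared lock in $\lockmap$'' $\Rightarrow$ ``interleaving cannot occur in any well-formed execution''), and the lift to $\mtr \sim \mtr_t$, are deliberately factored out into the separate instrumentation theorem and into Theorem~\ref{thm:top-level-correctness}.

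With the right target in view, the paper's proof is essentially by inspection: since $m$ satisfies all hard constraints, {\sc Race-1}/{\sc Race-2} give condition~(1), {\sc I-Leave} gives condition~(2), and {\sc L-Order} together with {\sc Wait} give condition~(3). Your mapping of rules to requirements is exactly this, so the core of your first two paragraphs is right; you just need to stop there rather than pushing through to operational semantics, drop the appeal to Theorem~\ref{thm:top-level-correctness}, and discard the induction on explicit histories, which belongs to the instrumentation-correctness layer, not here.
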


\iffull{
\begin{proof}
Can be found in Appendix~\ref{sec:correct-proofs}.
\end{proof}
}
\fi
\section{Implementation}
\label{sec:impl}

We have implemented our approach in a tool called \toolname that emits explicit-synchronization monitors in Java. \toolname is based on the Soot program analysis infrastructure~\cite{soot} and the Z3 SMT solver~\cite{z3}. In particular, we use Soot to perform  various kinds of static analyses needed by our method (e.g., pointer analysis) and to translate the input monitor to an explicit-synchronization monitor in Java. Furthermore, we leverage Z3 for  solving MaxSAT instances and discharging the validity queries that arise when checking commutativity between fragments. In the remainder of this section, we discuss several design choices and optimizations that were not discussed previously. 

\paragraph{\textbf{Weights of soft constraints.}} As expected, the quality of the synthesized protocol  depends on the model returned by the MaxSAT solver. In practice, we have observed certain types of soft constraints to be more important than others for efficiency. Thus, our implementation assigns different weights for different  classes of soft constraints. For instance, because it is always preferable to use an atomic field instead of a lock, \toolname assigns a higher weight to soft constraints generated by rule \textsc{Min-Atom} from Figure~\ref{fig:maxsat-alg} than the ones generated by rule \textsc{Min-Lock}.\footnote{{An ablation study that demonstrates the need for adjusting the weights of soft constraints can be found in \iffull{Appendix~\ref{sec:weight-ablation}}\else{the extended version of the paper}\fi.}}

\paragraph{\textbf{Constructing FDGs.}} As mentioned in Section~\ref{sec:algorithm}, \toolname uses a heuristic to partition the input CFG into fragments. The goal of this heuristic is to maximize parallelization opportunities while ensuring that the partition results in a valid FDG according to Definition~\ref{def:fdg}. Our heuristic places every loop in  its own fragment (to make sure that the FDG is well-formed) and, for code outside loops, \toolname creates a new fragment whenever it detects an update to  monitor state (i.e., \code{this.fld = *}). 
In practice, we found this heuristic to achieve a good balance between the number of parallelization opportunities and the size of the resulting FDG.\footnote{{An ablation study that justifies the design of this heuristic can be found in \iffull{Appendix~\ref{sec:fdg-ablation}}\else{the extended version of the paper}\fi.}}

\paragraph{\textbf{Static analysis optimization.}} Our approach uses an off-the-shelf pointer analysis to detect which pairs of FDG fragments do not have a data race (and, so, can run in parallel). However, such an approach, based on pointer analysis alone,  often leads to imprecision. For example, Soot's pointer analysis cannot prove that fragments 2 and 6 in Figure~\ref{fig:motiv-src} do not contain any races, as it does not reason about individual array elements.  Therefore, in order to increase the precision of the static analysis, \toolname implements an SMT-based static analysis  on top of Soot's built-in pointer analysis and  generates appropriate verification conditions {(similar to the ones generated by~\citet{seahorn})} to prove that memory accesses of two fragments are disjoint.

\section{Evaluation}\label{sec:eval}
We evaluated \toolname{'s ability to generate fine-grained locking protocols} by performing a set of experiments that are designed to answer the following research questions:
\begin{enumerate}[label=\textbf{RQ\arabic*}]
    \item How does the code generated by \toolname compare against explicit-synchronization monitors  written by experts?
    \item How does the technique implemented in \toolname compare against other compile-time state-of-the-art approaches targeting implicit-synchronization monitors?
    \item How does the static analysis for inferring safe interleavings impact the quality of the code generated by \toolname? 
    \item How long does \toolname take to synthesize code and how complex are the resulting protocols?
\end{enumerate}

To answer these research questions, we conducted  experiments on ten explicit-synchronization monitors from popular open source repositories. Aside from \toolname, we consider two additional baselines, described below, that aid us in answering our second and third research questions. %

\paragraph{\textbf{Benchmarks.}} The benchmarks used in our evaluation are collected from popular open source GitHub repositories. We wrote a crawler ({based on GitHub's REST API~\cite{github-api}}) to automatically identify candidate explicit-synchronization monitors implemented in Java by searching for keywords like \code{lock}, \code{unlock}, \code{await}, etc. We then manually inspected class files returned by the crawler in decreasing order of GitHub popularity (stars and forks) and identified self-contained monitor-style classes that encapsulate shared state accessed by multiple threads. {We included such a monitor in our benchmarks only if it satisfies the following conditions: (1) the class has a well-defined API for client threads and (2) it contains \emph{ parallelization opportunities that can be realized via fine-grained locking}.\footnote{{If a monitor does not contain parallelization opportunities, our technique generates code equivalent to that synthesized  by~\citet{expresso}. Since the goal of our evaluation is to evaluate \toolname's ability to generate fine-grained locking protocols, we did not include benchmarks from prior work~\cite{expresso, hung:autosynch} that do not contain such parallelization opportunities.}}} We manually isolated the {shared state and monitor methods} of the class file to obtain a standalone explicit-synchronization monitor and then manually translated it to an equivalent implicit monitor. {To convert a benchmark to an implicit monitor, we simply removed all synchronization code (i.e., locking and signaling operations) and introduced appropriate \wuntil statements. In total, we collected 10 monitors from popular repositories such as Spring Framework (a Java-based framework for creating enterprise applications), Java JDK, Apache Spark (an analytics engine for large-scale data processing), etc.\footnote{All benchmarks are publicly available here: https://github.com/utopia-group/cortado}} %

\paragraph{\textbf{Baselines.}} As mentioned above, our evaluation uses two additional baselines in order to answer RQ2 and RQ3. To compare against other compile-time techniques (RQ2), we evaluate \textsc{Expresso}~\cite{expresso}, a tool that addresses the same problem as this paper but generates an explicit signal monitor using a \emph{single global lock} shared by {all} monitor methods. To evaluate the importance of our static analysis (RQ3), we created an ablated version of \toolname, called \toolnameminus, which uses a very coarse analysis to infer safe interleavings. This ablated version considers $(v, (v_s, v_t))$ a safe interleaving only if $v$ does not have any data races with any predecessor (resp. successor) of $v_s$ (resp. $v_t$). This is a sufficient condition for strong safety, but it only requires checking data races rather than discharging a set of Hoare triples.

\paragraph{\textbf{Evaluating performance.}} Following prior work~\cite{expresso,hung:autosynch}, we evaluate the performance of each monitor implementation by performing \emph{saturation tests}~\cite{cherem:lock-atomic} wherein threads perform monitor operations without doing any additional work. %
We collect our performance measurements using the Java Microbenchmark Harness (JMH)~\cite{alekseyshipilevOpenJDKJmh}. %
All measurements are conducted on a 112-way (56-core $\times$ 2 SMT) Intel Xeon CPU W-3275 2.50GHz with 256 GB of memory using JDK 1.8.0\_272. {In this section, we present results for each benchmark for up to 128 threads, chosen as an arbitrary stopping point past the total number of hyper-threads. Results for up to 256 threads can be found \iffull{in Appendix~\ref{sec:additional-exps}}\else{in the extended version of the paper}\fi.}

\subsection{Performance Results}
 Figure~\ref{fig:perf-results} plots  the average time taken per monitor method invocation (i.e., milliseconds/operation) against the number of threads. In what follows, we analyze the plots in more detail and present several conclusions drawn from these results. Because our benchmarks only contain monitors where fine-grained locking is beneficial,
we emphasize that our conclusions only apply to such monitors.

\paragraph{\textbf{Comparison against hand-written implementations (RQ1).}} For every benchmark, the explicit synchronization monitor generated by \toolname performs \emph{better} than the expert-written  implementation as the number of threads increases. In particular, \toolname-synthesized code performs on average $3.7\times$\footnote{In order to handle outliers such as in JobWrapper, for all reported aggregate speedups (max, mean, etc.) we throw out data points with a z-score greater than two.} and up to $39.1\times$ times faster than the original code.

\paragraph{\textbf{Comparison against \textsc{Expresso} (RQ2).}} \toolname-generated explicit monitors perform better than \textsc{Expresso} explicit monitors generated from the same implicit specification on all benchmarks. \toolname-synthesized code outperforms \textsc{Expresso} by $4.0\times$ on average (and up to $48.7\times$).

\paragraph{\textbf{Comparison against \toolnameminus (RQ3).}} Finally, we analyze how \toolname compares to its simplified version, \toolnameminus, which does not use the results of the safe interleavings analysis from Section~\ref{sec:analysis}. In five cases, the code generated by \toolnameminus is equivalent to the code generated by \textsc{Expresso} and therefore worse than \toolname. In two other cases (PausableThreadPoolExecutor and ProgressTracker), \toolnameminus generates code different from both \textsc{Expresso} and \toolname. For PausableThreadPoolExecutor, the code generated by \toolnameminus is slower than that of \textsc{Expresso} because many of the operations it parallelizes are very cheap, so the overhead of extra locks outweighs their benefit.
On the other hand, our static analysis detects several safe interleavings which enable \toolname to synthesize a protocol with cheaper synchronization operations. Finally, for the remaining three cases, the code generated by \toolnameminus matches the one generated by \toolname. This ablation study demonstrates that the safe interleaving analysis from Section~\ref{sec:analysis} helps extract additional concurrency on five of our benchmarks.

\begin{figure}
    \begin{subfigure}{0.33\textwidth}
        \includegraphics[width=\textwidth,keepaspectratio]{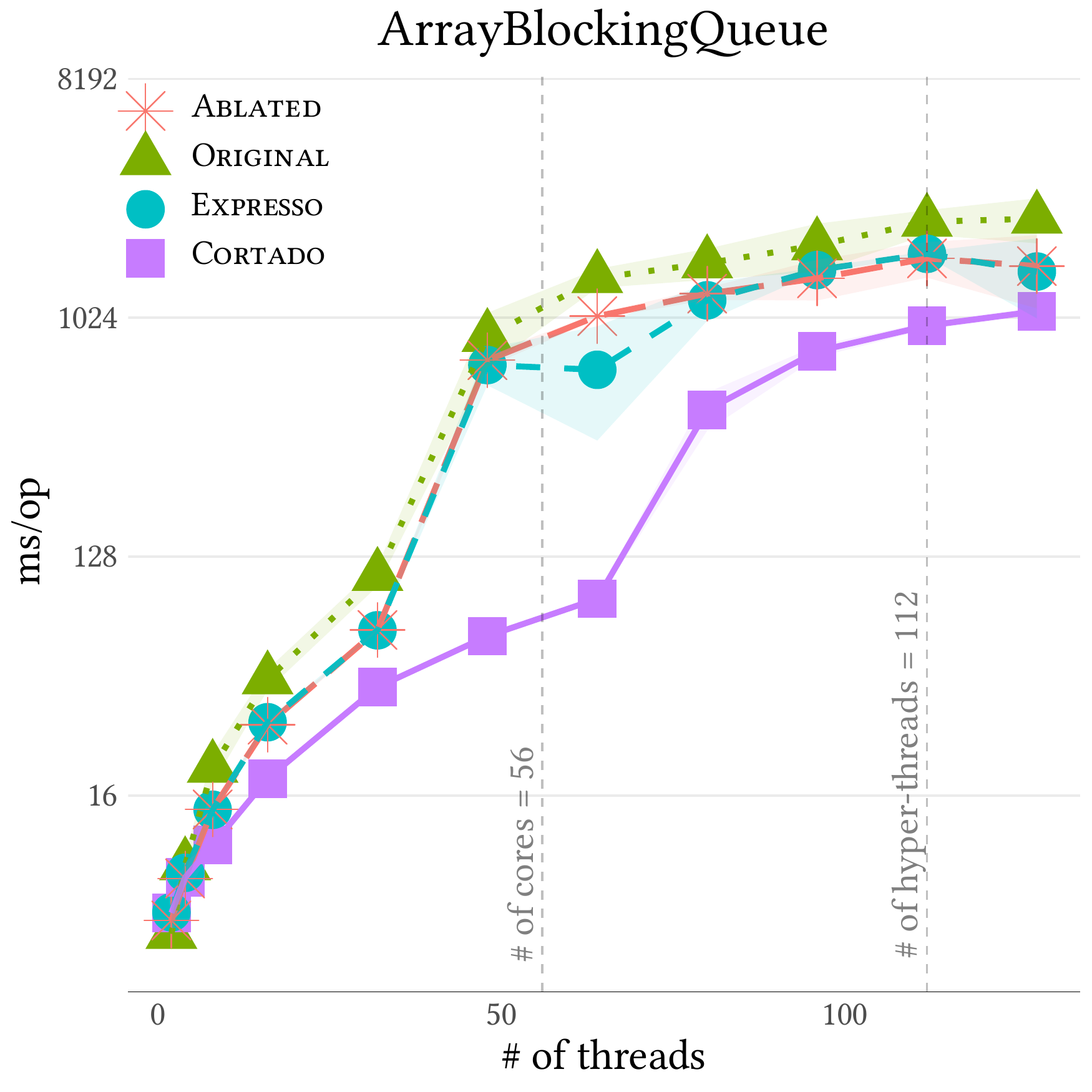}
    \end{subfigure}%
    \begin{subfigure}{0.33\textwidth}
        \includegraphics[width=\textwidth,keepaspectratio]{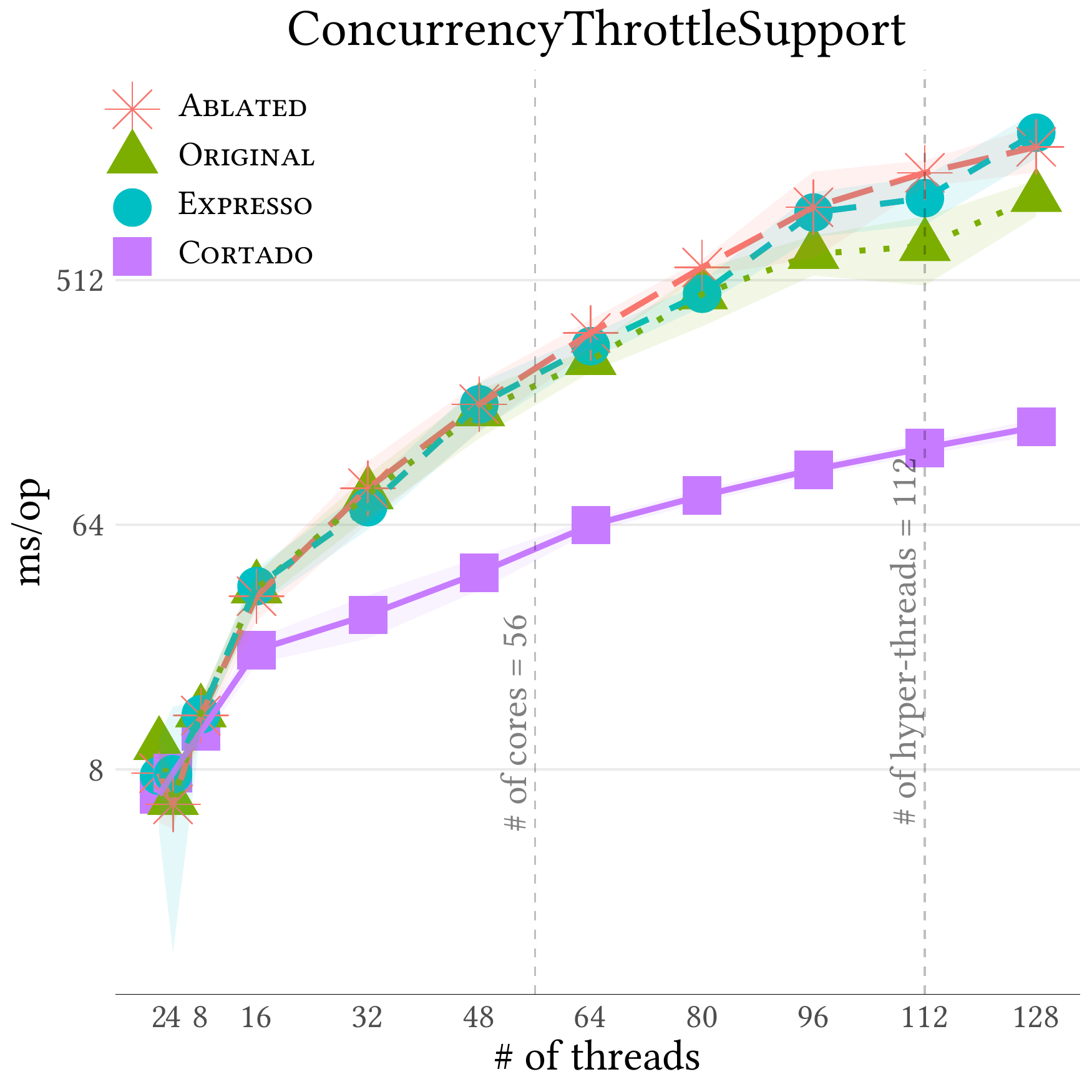}
    \end{subfigure}%
    \begin{subfigure}{0.33\textwidth}
        \includegraphics[width=\textwidth,keepaspectratio]{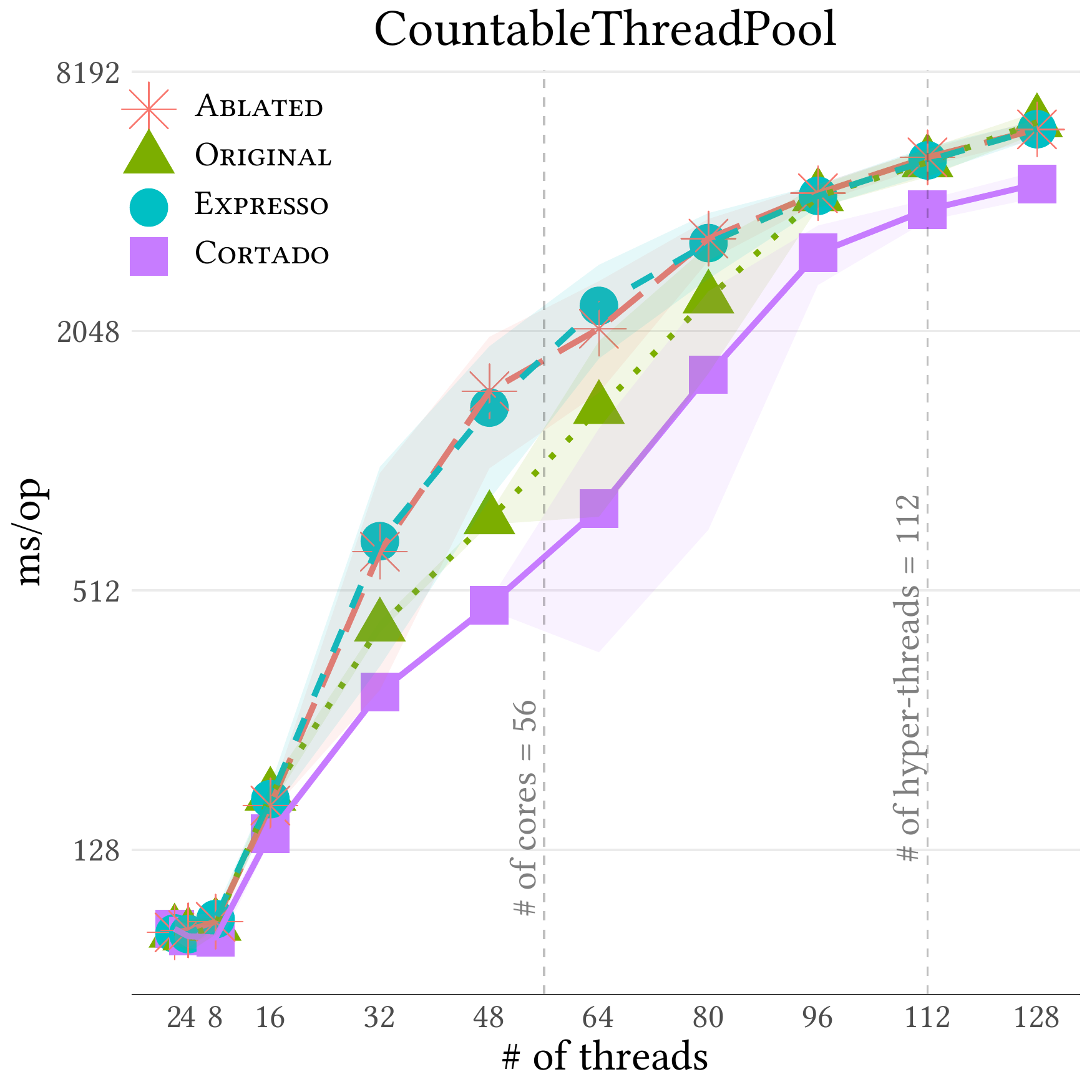}
    \end{subfigure}
        \begin{subfigure}{0.33\textwidth}
        \includegraphics[width=\textwidth,keepaspectratio]{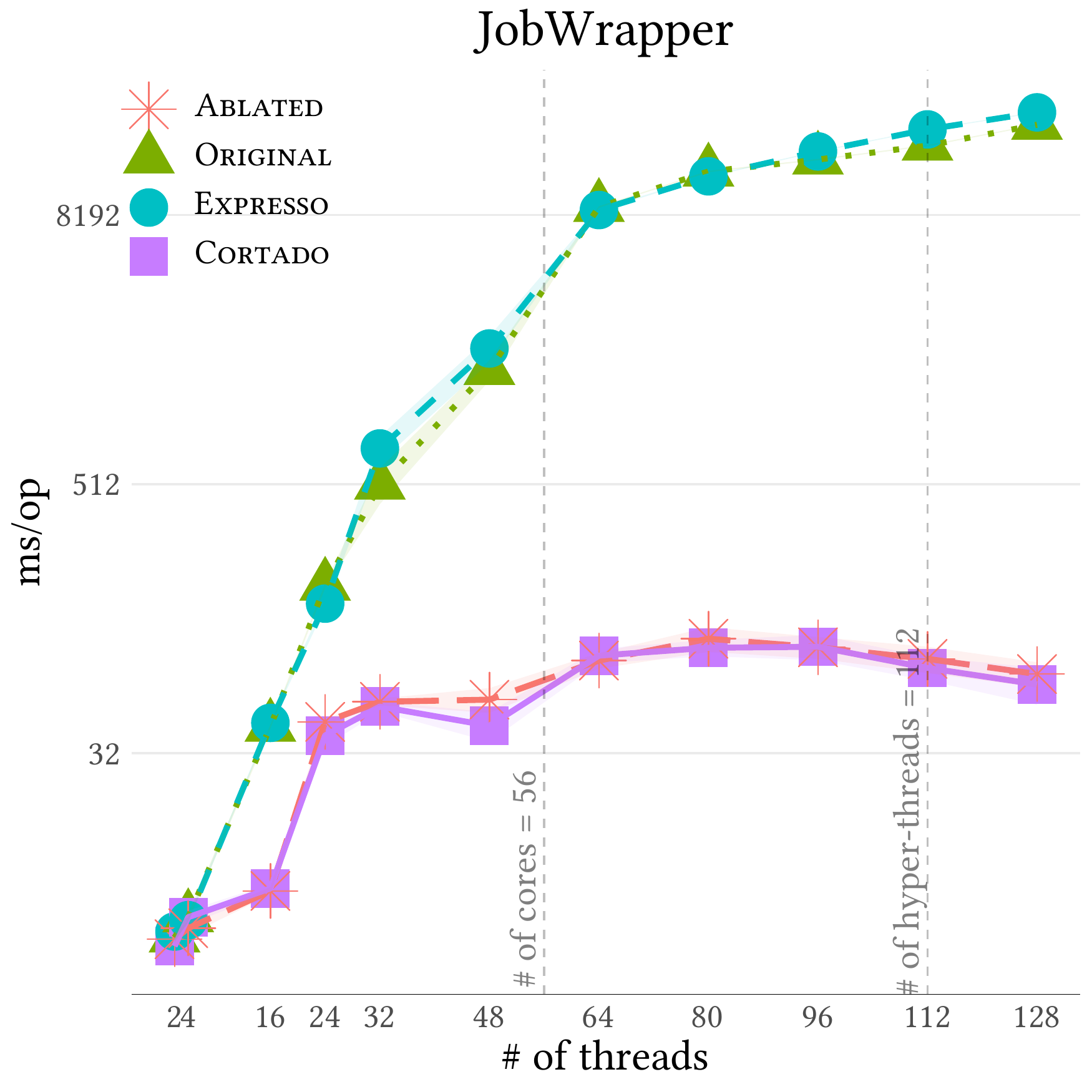}
    \end{subfigure}%
    \begin{subfigure}{0.33\textwidth}
        \includegraphics[width=\textwidth,keepaspectratio]{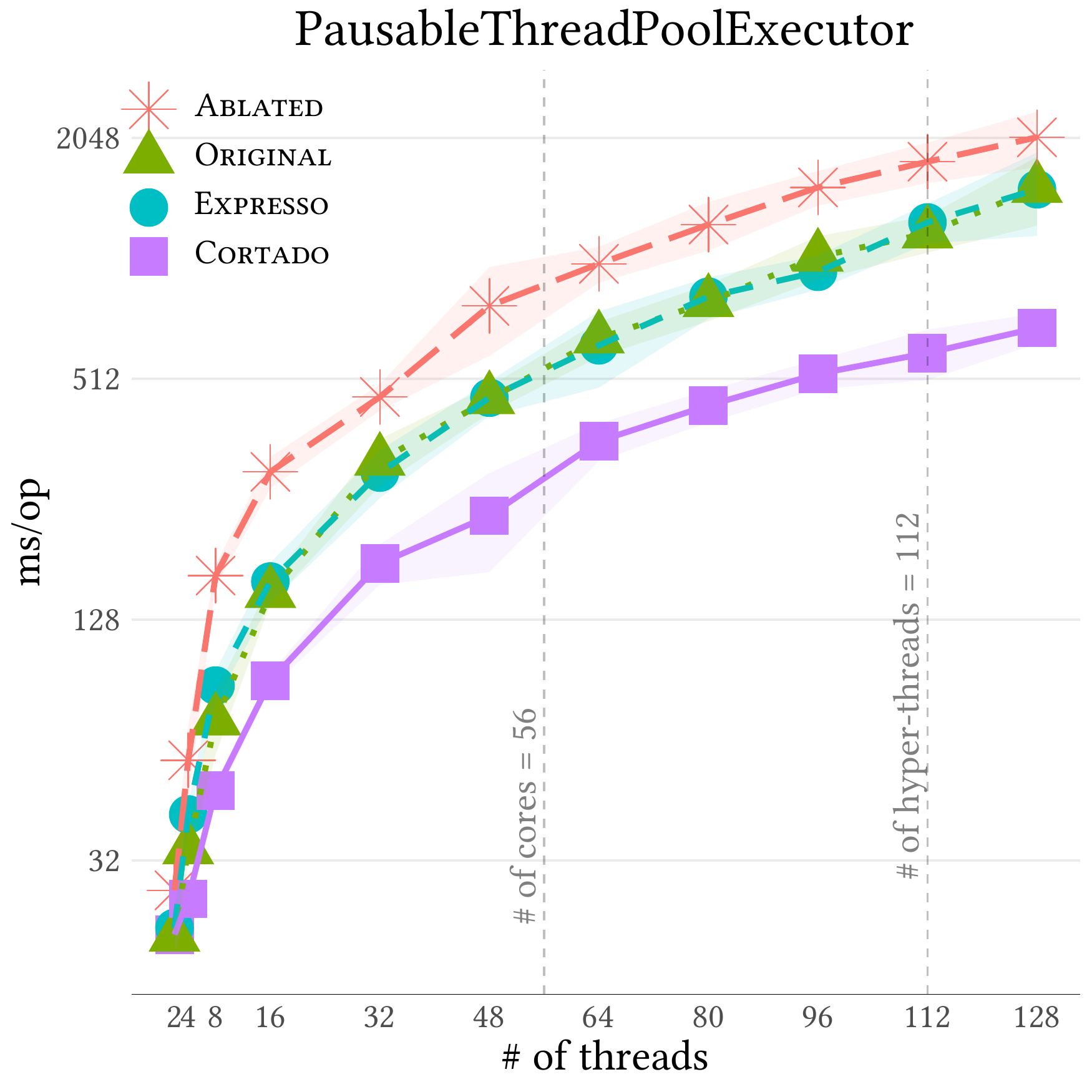}
    \end{subfigure}%
    \begin{subfigure}{0.33\textwidth}
        \includegraphics[width=\textwidth,keepaspectratio]{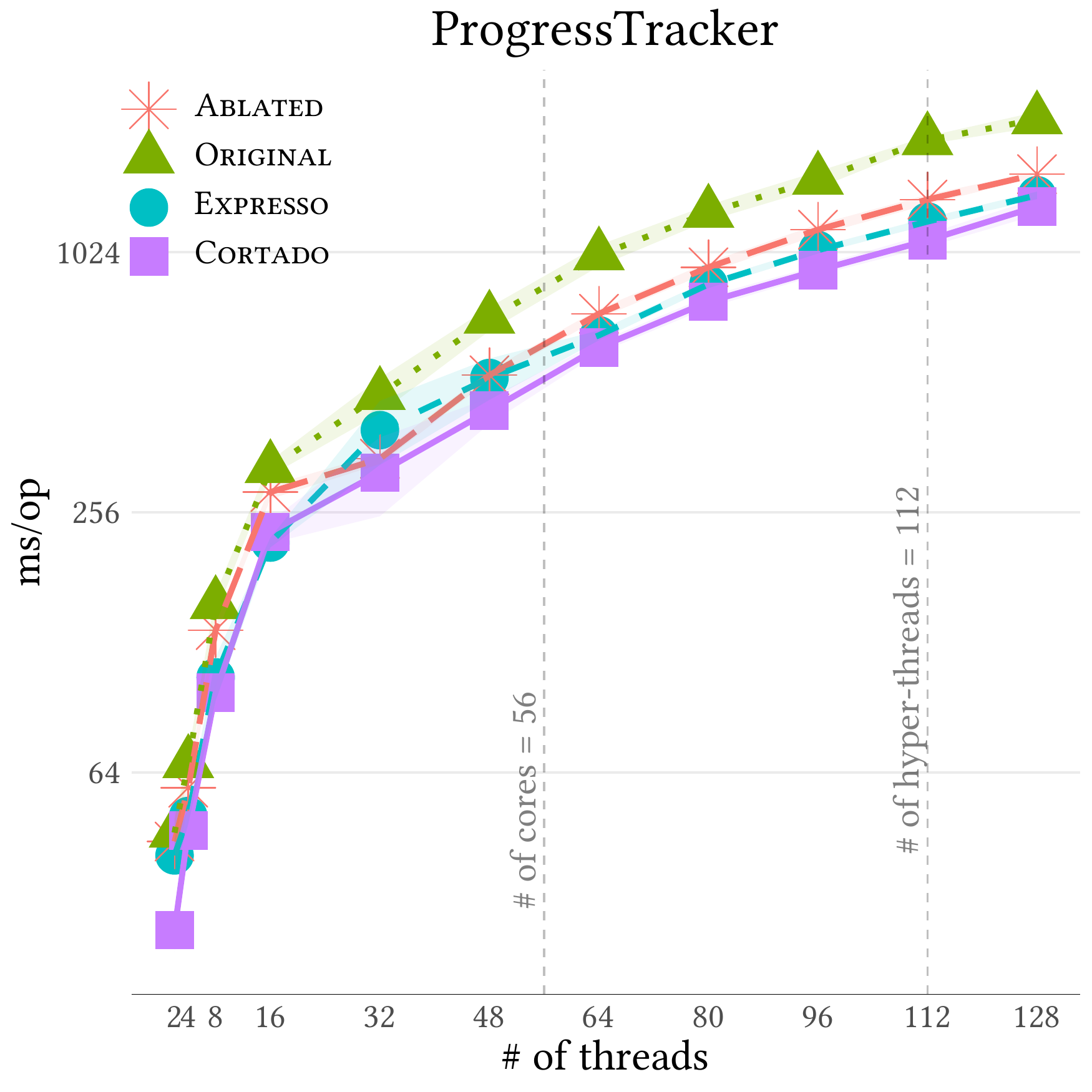}
    \end{subfigure}
    \begin{subfigure}{0.33\textwidth}
        \includegraphics[width=\textwidth,keepaspectratio]{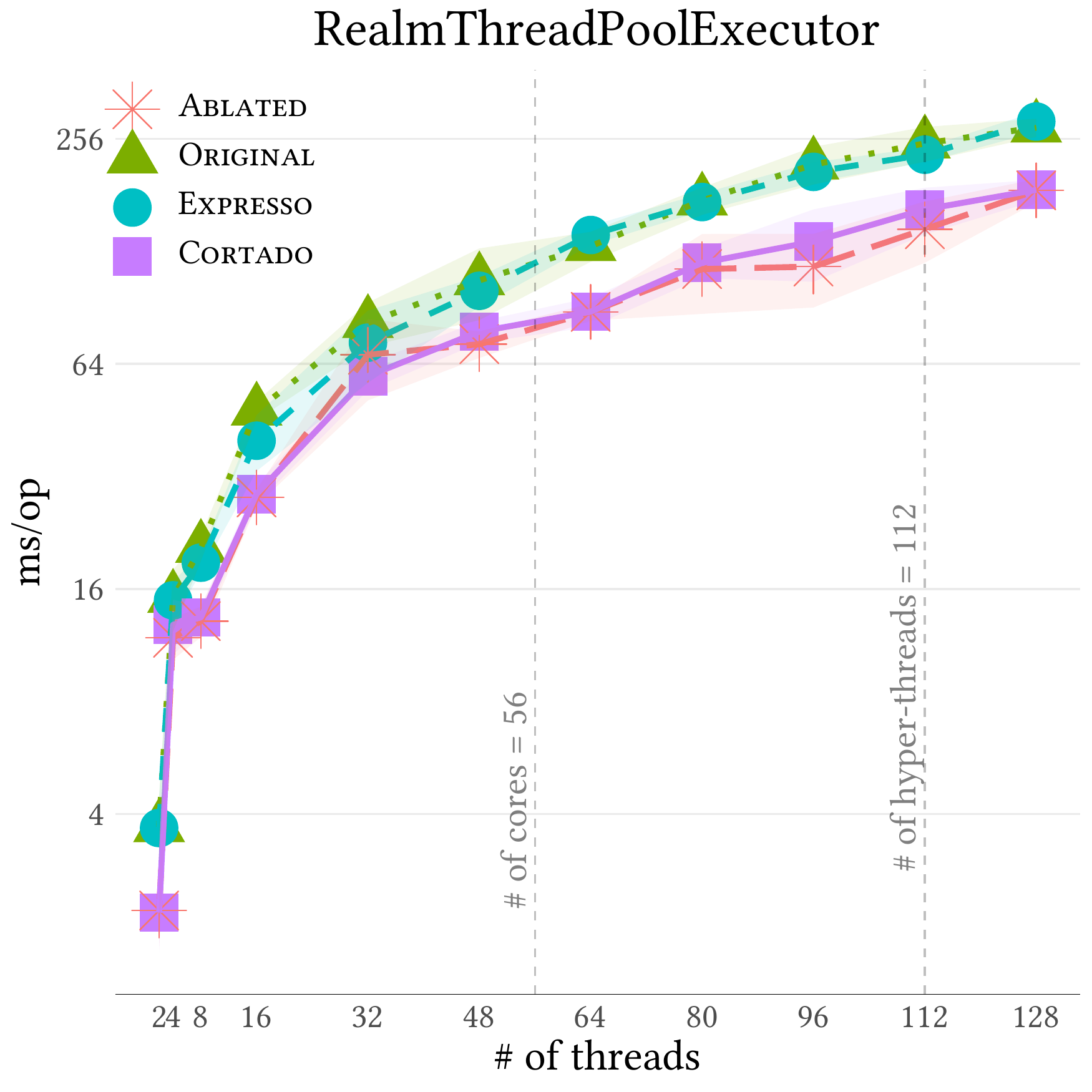}
    \end{subfigure}%
        \begin{subfigure}{0.33\textwidth}
        \includegraphics[width=\textwidth,keepaspectratio]{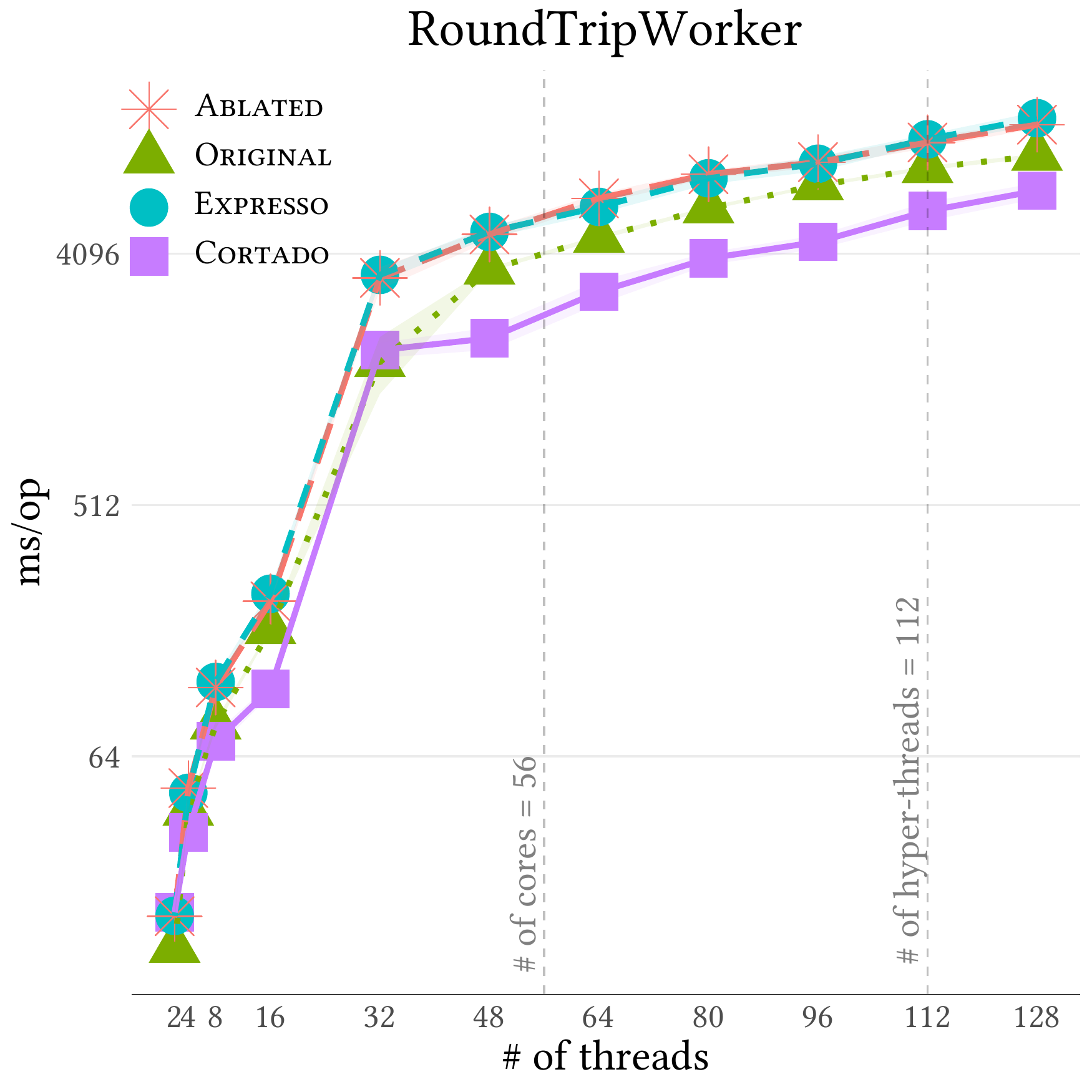}
    \end{subfigure}
    \begin{subfigure}{0.33\textwidth}
        \includegraphics[width=\textwidth,keepaspectratio]{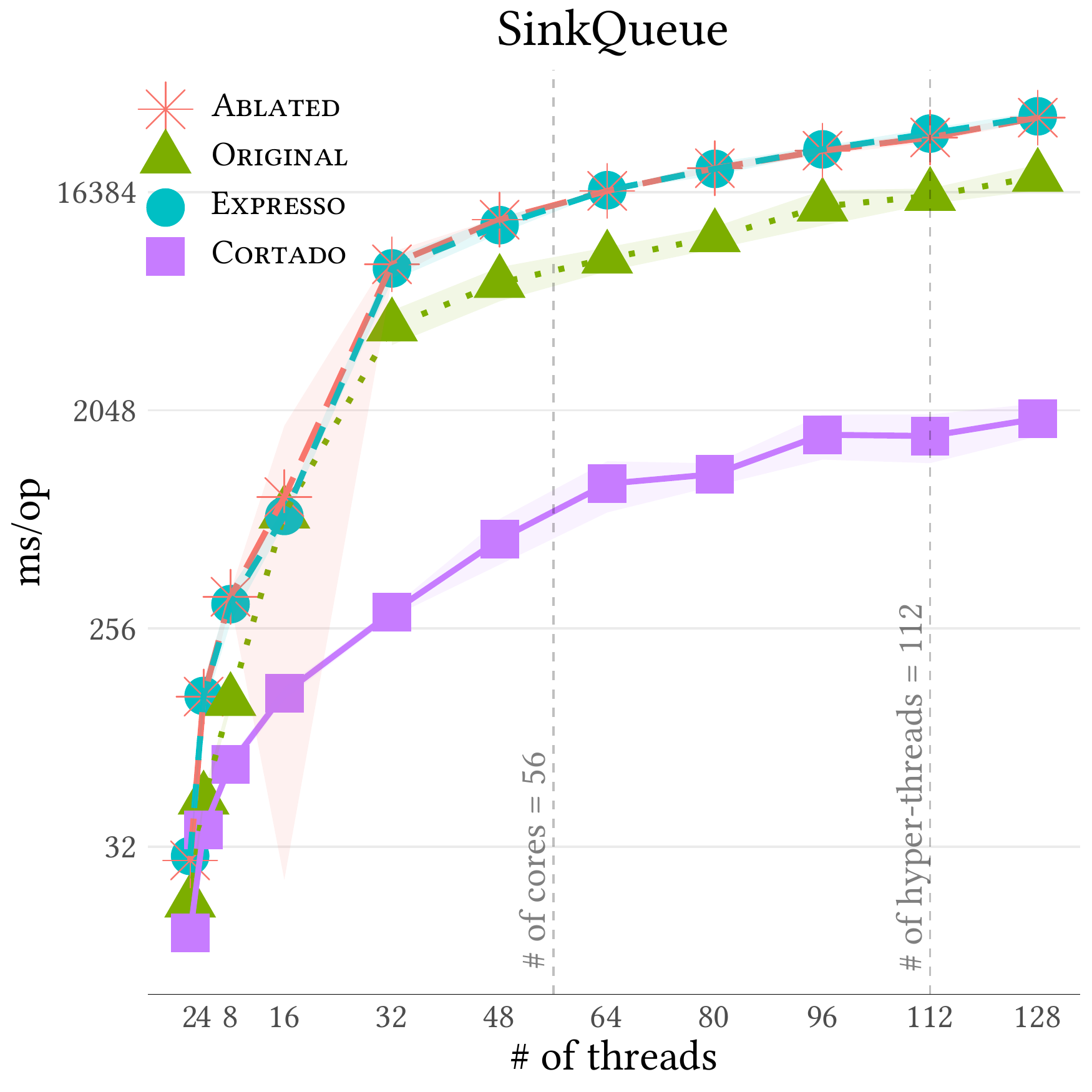}%
    \end{subfigure}%
    \begin{subfigure}{0.33\textwidth}
        \includegraphics[width=\textwidth,keepaspectratio]{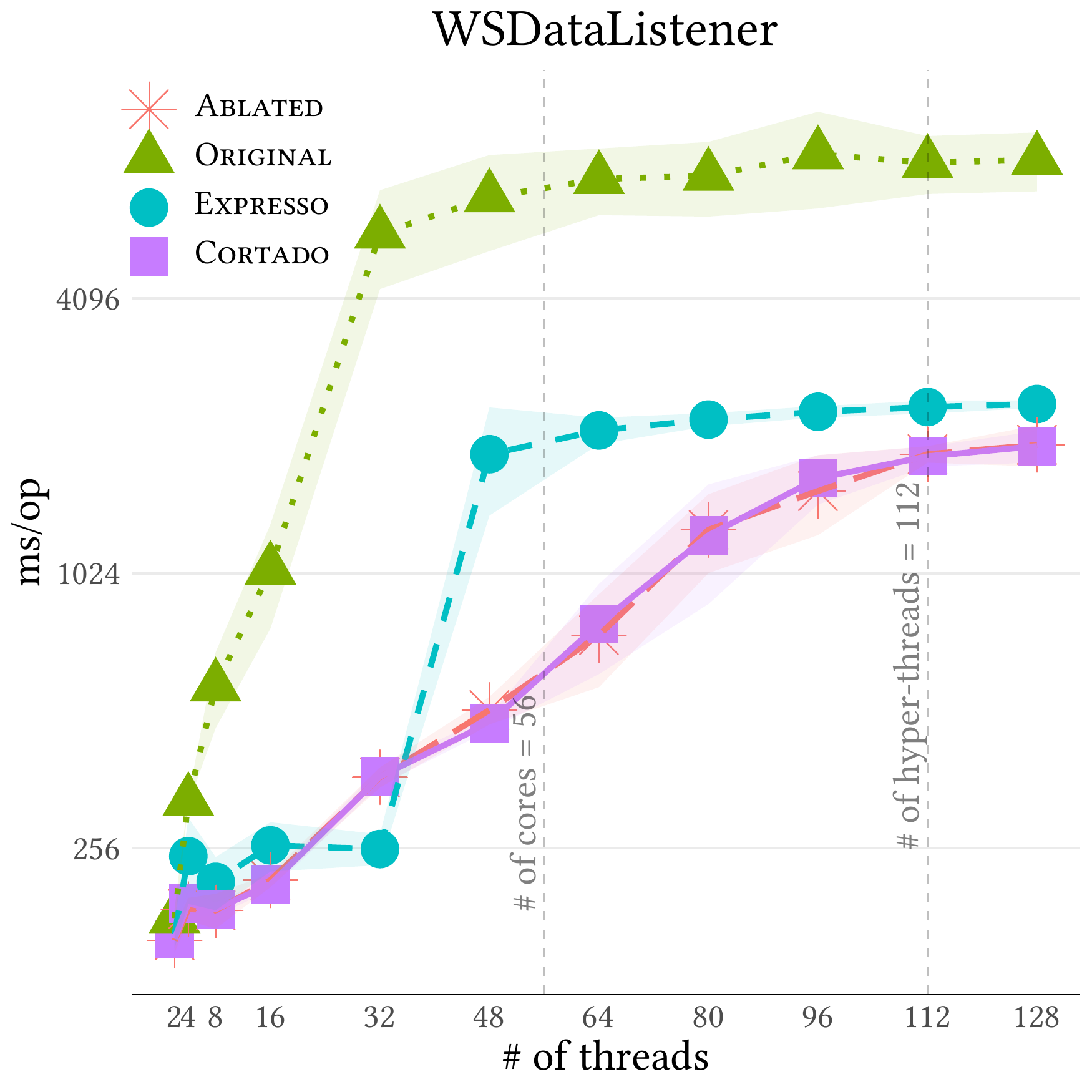}
    \end{subfigure}
    \caption{Performance Results For All Tools. The y-axis is in log scale and time measurements are in milliseconds. The shadowed regions surrounding each line present 99.9\% confidence interval of each measurement.}
    \label{fig:perf-results}
\end{figure}

\subsection{Synthesis Time \& Protocol Complexity}

To evaluate the cost and complexity of synthesizing code with \toolname (RQ4), Table~\ref{table:runtime-protocols} summarizes its running time and presents some statistics about the synthesized protocols. For each benchmark, we report the running time for the various phases of the tool: pointer analysis with Soot, signal placement with \textsc{Expresso}, and synthesis with \toolname. We also report the number of locks and atomic fields in the synthesized protocol.

Table~\ref{table:runtime-protocols} shows that \toolname terminates in under one minute for all but two benchmarks. For these two outliers, the synthesis time  is dominated by {\sc Expresso}'s monitor invariant inference, which is necessary for signal placement. Overall, \toolname is able to extract better performance than \textsc{Expresso} alone with only a small additional compile-time cost. %

The last three columns in Table~\ref{table:runtime-protocols} provide statistics about the synthesized explicit monitors. Most monitors benefit from \toolname's ability to introduce atomic fields, which reduces the overhead of operations on monitor state that would otherwise require a lock. The lines of code (LOC) results show that \toolname synthesizes explicit monitors that are on average $1.7\times$ larger than their implicit specifications.

\begin{table}
    \centering
    \footnotesize
    \begin{tabular}{l r @{\quad} r r r r @{\quad} r r r r r}
    \toprule
    && \multicolumn{4}{c}{Synthesis Time (secs)} & \multicolumn{3}{c}{Synthesized Protocol} \\
    \cmidrule(r){3-6} \cmidrule{7-9}
      Benchmark 
          & LOC
          & Soot
          & \textsc{Expresso}
          & \textsc{Cortado}   
          & Total    
          & \#Lock/\#Op
          & \#Atomic/\#Op
          & LOC
          \\
      \midrule
      ArrayBlockingQueue 
          & 287
          & 16.7
          & 1897.7
          & 372.7
          & 2302.5
          & 2 / 18
          & 1 / 25
          & 514
          \\
      ConcurrencyThrottleSupport 
          & 33
          & 17.0
          & 1.4
          & 0.2
          & 18.7
          & 1 / \phantom{0}1
          & 1 / \phantom{0}4
          & 68
          \\
      CountableThreadPool    
          & 54
          & 20.4
          & 0.8
          & 0.2
          & 21.5
          & 1 / \phantom{0}1 
          & 1 / \phantom{0}5
          & 85
          \\
      JobWrapper
          & 33
          & 17.0
          & 0.6
          & 0.1
          & 17.7
          & 1 / \phantom{0}1 
          & 1 / \phantom{0}3
          & 63
          \\
      PausableThreadPoolExecutor 
          & 79
          & 20.7
          & 0.8
          & 0.9
          & 22.5
          & 3 / \phantom{0}4
          & 2 / \phantom{0}9
          & 122
          \\
      ProgressTracker     
          & 65
          & 0.7
          & 6.8
          & 0.2
          & 8.0
          & 2 / \phantom{0}7
          & 1 / \phantom{0}4
          & 119
          \\
      RealmThreadPoolExecutor 
          & 34
          & 18.4
          & 0.3
          & 0.1
          & 18.7
          & 1 / \phantom{0}1 
          & 1 / \phantom{0}3
          & 61
          \\
      RoundTripWorker 
          & 62
          & 16.7
          & 3.1
          & 0.4
          & 20.4
          & 2 / \phantom{0}4 
          & 1 / \phantom{0}4
          & 103
          \\
      SinkQueue    
          & 75
          & 15.7
          & 981.4
          & 34.6
          & 1031.8
          & 2 / \phantom{0}4 
          & 1 / \phantom{0}8
          & 131
          \\
      WSDataListener 
          & 158
          & 18.3
          & 5.2
          & 1.0
          & 25.1
          & 4 / 11
          & 0 / \phantom{0}0
          & 222
          \\
    \bottomrule
    \end{tabular}

    \caption{Synthesis time for each phase of \toolname and summaries of the synthesized protocols.
    LOC is lines of code.
    Soot indicates pointer analysis time, \textsc{Expresso} is the time for monitor invariant generation and signal placement,
    and \toolname shows the additional time on top of Soot and \textsc{Expresso}.}
    \label{table:runtime-protocols}
\end{table}

\section{Related Work}

\paragraph{Monitor abstractions.}
The notion of monitors as an organizing abstraction for concurrent programming
originates with \citet{hoare:monitors} and \citet{hansen:osp}.
Monitors offer the same synchronization facilities as semaphores---%
the ability to coordinate multiple threads and enforce mutual exclusion---%
but encapsulate the state protected by those facilities
and automate mutual exclusion when entering and exiting the monitor's operations.
\citet{lampson:mesa} further extended the monitor abstraction in the Mesa programming language
to handle spurious wake ups.

These early monitor abstractions are \emph{explicit-signal} monitors
in the taxonomy of \citet{buhr:monitors}
because they require the programmer to explicitly insert condition variables and signalling operations
to coordinate threads within the monitor.
This requirement places both a safety and a liveness burden on the programmer:
they must place signals correctly to preserve invariants about the monitor's state,
but must also insert enough signals to avoid deadlock.
An alternative is to use an \emph{implicit-signal} (or \emph{automatic-signal}) monitor,
in which signals are inserted automatically by the compiler, language runtime, or operating system.
\citet{hoare:ccr} proposed the notion of \emph{conditional critical regions} (CCRs),
which allow for monitor operations to block until a \emph{guard} predicate over the monitor state
is satisfied by some other thread.
A CCR implementation would automatically block and signal threads in a fashion consistent
with this guard semantics.

Implicit-signal monitors simplify concurrent programming,
but come at a steep performance cost---%
\citet{buhr:monitors} estimate that implicit-signal monitors are 10--50$\times$ slower than explicit ones.
More recent work has tried to lower the cost of implicit-signal monitors.
AutoSynch~\citep{hung:autosynch} uses a combination of compile-time instrumentation and run-time evaluation
to efficiently compute which threads should be woken when monitor state changes.
This approach lowers the cost of implicit monitors to be close to, or sometimes better than,
explicit ones.
Expresso~\citep{expresso} takes a different approach,
using compile-time static analysis
to \emph{synthesize} an explicit-signal monitor equivalent to an implicit-signal version given as input.
In this way, Expresso is able to erase the dynamic cost of implicit-signal monitors,
and in most cases is comparable to hand-written explicit monitors. However, Expresso uses a single lock for the entire monitor and does not allow concurrent execution of threads within the monitor even when safe. 
Our work expands on this direction by using a richer static analysis
to infer additional concurrency opportunities and uses MaxSAT to synthesize a safe and efficient locking protocol. 
Hence, 
our key contribution is to synthesize an explicit monitor
that \emph{appears} to match the semantics of the implicit one,
but runs monitor operations concurrently when possible and efficient. As we show in the evaluation, our proposed approach  can often make the synthesized monitor \emph{faster} than a hand-written equivalent.

\paragraph{Automatic synchronization.}
An appealing approach to lower the difficulty of concurrent programming
is to deploy program analysis and synthesis techniques for automation.
The common abstraction for much of this work is for the programmer to annotate \emph{atomic sections}
that should be executed atomically.
\citet{emmi:lock-allocation} present a technique for lock allocation to an annotated program.
They reduce the problem to integer linear programming
and deploy the resulting tool on large-scale C and Java programs. 
Other approaches~\cite{hicks:lock,component-locking,autolocker}, on the other hand, take a purely static analysis route and attempt to maximize parallelism based solely on the results of the analysis. 
\citet{cherem:lock-atomic} present an alternative technique
that uses runtime support to enable finer-grained concurrency.
Compared to these efforts,
\cortado applies to the more limited domain of monitors,
but in exchange for this limitation is able to reason about conditional signalling
and can allow atomic sections to run concurrently so long as the illusion of atomicity is maintained.

Other approaches start from a sequential program and automatically generate an equivalent concurrent program. The closest work to ours in this space is that of~\citet{golan2011automatic} which generates concurrent data structures given their sequential implementation. Compared to our method, their approach is applicable only to data structures that satisfy certain shape properties and all synthesized programs adhere to the same locking protocol, whereas \toolname generates a synchronization protocol specialized to the input monitor.

\paragraph{Concurrency verification.}
\cortado reasons about concurrent program executions
by building on work in concurrent program analysis and verification.
Our notion of left- and right-commutativity (Definition~\ref{def:commute})
comes from \citeauthor{reduction-lip}'s work on reduction as a concurrency proof technique~\citep{reduction-lip}.
Reduction translates interleaved program executions
to simpler, equivalent sequential executions
by exploiting the commutativity properties of individual program steps.
We use the same idea but in reverse:
starting with a sequential history (Definition~\ref{def:seq-hist}),
we use a static analysis of commutativity
to determine how to safely introduce interleavings into that history,
and use that information to determine how to assign locks to program fragments.

\section{Conclusion}

We presented a technique for synthesizing fine-grained synchronization protocols for implicitly synchronized monitors. Our approach first employs a novel static analysis to identify safe interleavings opportunities between code fragments and uses the results  of this analysis to generate a MaxSAT encoding whose solution can be used to synthesize an efficient and correct-by-construction explicit-synchronization monitor. We have implemented our method in a tool called \toolname and evaluated its effectiveness  eight monitors collected from popular open source applications. The results of our experimental evaluation demonstrate that \toolname is able to generate non-trivial synchronization protocols that are $3.7\times$ times faster than the original implementation on average (and up to $39.1\times$ times for some outliers).

\begin{acks}
 We would like to thank the anonymous reviewers, Shankara Pailoor, and Benjamin Mariano for their insightful feedback. This material is based upon work supported by the National Science Foundation under Grant Numbers CCF-1918889 and \#CCF-1811865,  the U.S. Department of Energy, Office of Science, Office of Advanced Scientific Computing Research, Department of Energy Computational Science Graduate Fellowship under Award Number SC0021110, and a gift from Relational AI.
\end{acks}

\bibliography{main}

\iffull{
\appendix

\newcommand{\cortadoPlot}[1]{
    \begin{subfigure}{0.33\textwidth}
        \includegraphics[width=\textwidth,keepaspectratio]{#1}
    \end{subfigure}\nolinebreak
}

\newcommand{\allCortadoPlots}[1]{
    \cortadoPlot{#1ArrayBlockingQueue.pdf}
    \cortadoPlot{#1ConcurrencyThrottleSupport.pdf}
    \cortadoPlot{#1CountableThreadPool.pdf}
    
    \cortadoPlot{#1JobWrapper.pdf}
    \cortadoPlot{#1PausableThreadPoolExecutor.pdf}
    \cortadoPlot{#1ProgressTracker.pdf}
    
    \cortadoPlot{#1RealmThreadPoolExecutor.pdf}
    \cortadoPlot{#1RoundTripWorker.pdf}
    
    \cortadoPlot{#1SinkQueue.pdf}
    \cortadoPlot{#1WSDataListener.pdf}
}

\section{Additional Experimental Evaluations}\label{sec:additional-exps}

This section presents additional experimental data for the evaluation of Section~\ref{sec:eval} as well as ablations related to several design decisions in the implementation of \toolname. In particular, Section~\ref{sec:more-data-points} presents additional data points for the evaluation presented in Section~\ref{sec:eval}, Section~\ref{sec:fdg-ablation} presents an ablation study that justifies the design of our heuristic for constructing an FDG, and Section~\ref{sec:weight-ablation} presents an ablation study that demonstrates the need for adjusting the weights of soft constraints in our MaxSAT encoding.

\subsection{Additional Data Points per Benchmark}\label{sec:more-data-points}

In this Section we present additional data points for all the experiments presented in Section~\ref{sec:eval}. As mentioned in Section~\ref{sec:eval}, we chose 128 threads as a stopping point past the number of total hyper-threads in the machine used in our evaluation. In this Section, we provide data points up to 256 threads.

Figure~\ref{fig:perf-results-256} presents the results for all benchmarks in our evaluation up to 256 threads. As demonstrated by Figure~\ref{fig:perf-results-256}, the general trend for all benchmarks is the same as the date presented in Section~\ref{sec:eval}. Note that for benchmarks where the code generated by \toolname exhibits similar run-time performance with the other three implementation we consider in our evaluation (e.g., RoundTripWorker), context-switches seem to dominate the running  time as the number of threads increases.

\begin{figure}
    \allCortadoPlots{plots/plots-to-256/256-}

    \caption{Performance results for all tools up to 256 threads. The y-axis is in log scale and time measurements are in milliseconds. The shadowed regions surrounding each line present 99.9\% confidence interval of each measurement.}
    \label{fig:perf-results-256}
\end{figure}

\subsection{Ablation Study for FDG Construction Heuristic}\label{sec:fdg-ablation}

This section presents an ablation study for the design of our FDG construction heuristic described in Section~\ref{sec:impl}. To justify the decisions behind the design of our heuristic, we implemented two additional versions of \toolname that differ in the way they construct the FDG. In particular, we implemented a version that creates finer-grained FDGs than \toolname and one that creates coarser-grained ones. The finer-grained version, named $\textsc{Stmt-Ablation}$, puts every non-composite statement outside of a loop in its own fragment. Statements inside a loop are grouped together in the same fragment since FDGs are acyclic. The coarser version of our tool, named $\textsc{CCR-Ablation}$, simply puts each CCR in its own fragment.

Figure~\ref{fig:fragment-ablation} presents the results of this ablation study. As demonstrated by the results, there are several cases where \toolname performs better than at least one of its two modified versions. The cases where all three versions perform similarly are benchmarks where the synthesized synchronization protocol mainly exploits data-level parallelism among different CCRs in the monitor, which our tool can exploit given any FDG. Overall, this study demonstrates the need for a customized heuristic for constructing an FDG suitable for maximizing parallelism of implicit synchronisation monitors.

\begin{figure}
    \allCortadoPlots{plots/fragment-ablations/256-fragments-}

    \caption{Performance results for the FDG ablation study. The y-axis is in log scale and time measurements are in milliseconds. The shadowed regions surrounding each line present 99.9\% confidence interval of each measurement.}
    \label{fig:fragment-ablation}
\end{figure}

\subsection{Ablation Study for MaxSAT Soft Constraint Weights}\label{sec:weight-ablation}

Finally, this Section presents an ablation of the soft constraints weights in our MaxSAT encoding. As mentioned in Section~\ref{sec:impl}, \toolname assigns different weights to different classes of soft constraints because some of them are more important for synthesizing the optimal synchronization protocol. To demonstrate this, we have created a modified version of our tool, named $\textsc{Weight-Ablation}$, that assigns the same weight to all soft constraints.

\noindent\begin{minipage}{\textwidth}
\useparinfo
\begin{wrapfigure}{r}{0.4\textwidth}
  \scriptsize
      \begin{minted}[xleftmargin=3em,autogobble,escapeinside=||]{java}
          class M {
            int x = 0;
            void foo() { 
              waituntil(x < 10);
              x++;
            }
            
            void bar() { 
                x--; 
            }
        }
      \end{minted}
  \caption{A simple implicit monitor.}
      \label{fig:mtr-weight-ablation}
\end{wrapfigure}
The results of this ablation are presented in Figure~\ref{fig:weight-ablation}. As this figure demonstrates, there are several cases where the ablated version of the tool performs significantly worse than \toolname. To give a concrete example of why this is the case, consider the monitor of Figure~\ref{fig:mtr-weight-ablation} that contains two methods both of which modify field $x$. Because the body of method \code{bar} can be interleaved between the \wuntil statement and the increment statement of method \code{foo}, the optimal synchronization protocol would convert field $x$ to an atomic integer and introduce a lock that would only be held in method \code{foo}. However, an equivalent protocol would be to simply protect both \code{foo} and \code{bar} with the same global lock. So, if all constraints have an equal weight, \toolname could generate both of these protocols, since they would have the same optimum objective value. As mentioned in Section~\ref{sec:impl}, \toolname's MaxSAT encoding prefers assignments where a race between two fragments (like the one on field $x$) are resolved via an atomic field rather than a lock. This forces \toolname to generate the optimal solution for the monitor of Figure~\ref{fig:weight-ablation}. The adjusted weights for other classes of soft constraints try to steer \toolname to better performing synchronization protocols in a similar way.
\end{minipage}

\begin{figure}
    \allCortadoPlots{plots/weight-ablations/256-weights-}

    \caption{Performance results for the soft constraints weights ablation. The y-axis is in log scale and time measurements are in milliseconds. The shadowed regions surrounding each line present 99.9\% confidence interval of each measurement.}
    \label{fig:weight-ablation}
\end{figure}

\section{Proof of NP-Completeness}\label{ssec:npproof}

To aid the reader, we restate Theorem \ref{thm:np}.
\begin{theorem}
  {\bf{(NP-Completeness)}} Let $\mathcal{G} = (V, E)$ be the FDG
  representation of a monitor $\mtr$ and let $\Pi \subseteq V \times
  V$ be a set of fragment pairs that can safely run in parallel. Then,
  deciding whether there exists a synchronization protocol with at
  most $k$ locks and atomic fields that allows all pairs in $\Pi$ to
  run in parallel is an NP-Complete problem.
\end{theorem}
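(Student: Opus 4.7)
The plan is a standard NP-completeness argument in two parts: first establish membership in NP, then give a polynomial-time reduction from the Edge Clique Cover (ECC) problem as hinted by the authors.

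For membership in NP, I will use a synchronization protocol itself as the certificate: for each fragment $v \in V$, a subset $L(v) \subseteq \{l_{1},\ldots,l_{k}\}$ of the at most $k$ locks (together with any atomic-field annotations). This certificate has size polynomial in $|V|$ and $k$. Verification then amounts to checking (i) that the total number of distinct locks/atomics used is at most $k$; (ii) that for each pair $(v_{1},v_{2}) \in \Pi$ we have $L(v_{1}) \cap L(v_{2}) = \emptyset$ (so they can run in parallel); and (iii) that for each pair $(v_{1},v_{2}) \notin \Pi$ with a data race, $L(v_{1}) \cap L(v_{2}) \neq \emptyset$ (or the race is neutralized by an atomic field). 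All three checks are polynomial, placing the problem in NP.

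For NP-hardness, I will reduce from ECC. Given an ECC instance $(G_{0} = (U, F), k)$, I will build an FDG $\mathcal{G} = (V,E)$ with $V = U$ and $E = \emptyset$ (trivially acyclic, hence a valid FDG with no deadlock-order or interleaving constraints to worry about). I will instantiate the underlying monitor so that every fragment $u \in U$ reads and writes a private field $f_{u}$ plus, for every edge $\{u,v\} \in F$, a shared field $f_{\{u,v\}}$; this makes the data-race relation exactly $F$ and makes atomic fields useless (each pair of racing fragments races on a distinct field that is accessed by multiple fragments at once, so no atomic primitive can replace a lock). Finally, I will set $\Pi$ to be all non-edges of $G_{0}$, i.e.\ $\Pi = \{(u,v) : u \neq v,\ \{u,v\} \notin F\}$. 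The construction is clearly polynomial-time.

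The key bi-implication is that $G_{0}$ has an edge clique cover of size $\leq k$ iff $\mathcal{G}$ admits a synchronization protocol with $\leq k$ locks allowing all pairs in $\Pi$ to run in parallel. Forward: given cliques $C_{1},\ldots,C_{k}$ covering $F$, define lock $l_{i}$ to be held by exactly the fragments in $C_{i}$; pairs in $\Pi$ correspond to non-edges of $G_{0}$ and therefore cannot be co-contained in any clique $C_{i}$, so they share no lock, while racing pairs (edges of $F$) are jointly covered by at least one $C_{i}$ and so share that lock. Reverse: given a protocol with locks $l_{1},\ldots,l_{k}$, the sets $S_{i} = \{v : l_{i} \in L(v)\}$ are cliques in $G_{0}$ (any two members share $l_{i}$, hence are not in $\Pi$, hence are adjacent in $G_{0}$) and collectively cover every edge of $F$ (since each racing pair is forced to share some lock), giving an ECC of size $\leq k$.

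The main obstacle is the last one already flagged: ensuring that the target instance respects all the side conditions of the synthesis problem (valid partition, acyclic FDG, deadlock-freedom total order on locks, atomicity constraints from safe-interleaving analysis) without accidentally weakening or strengthening the question being reduced to. Taking $E = \emptyset$ and using pairwise-disjoint dummy racing fields neutralizes the deadlock-order and atomicity constraints, so the remaining requirement is purely the mutual-exclusion constraint induced by $\Pi$, which matches ECC exactly; making this bookkeeping fully rigorous is the bulk of the work but involves no new ideas.
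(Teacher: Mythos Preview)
Your proposal takes essentially the same route as the paper: reduce from Edge Clique Cover by introducing one monitor field per edge of the input graph and reading off the lock assignment as a clique cover. Your construction is in fact a bit more direct than the paper's (one body fragment per vertex, versus the paper's one fragment per vertex--edge incidence, which then has to project the lock-holding fragments back to vertices to recover the cliques).

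Two small technical points to tighten. First, you cannot literally achieve $E=\emptyset$: in the source language every CCR begins with a \texttt{waituntil}, which by the partition rules must sit in its own fragment, so each method contributes at least one FDG edge from the \texttt{waituntil($\top$)} node to the body. The paper deals with this by placing all \texttt{waituntil} fragments into $\Pi$ with every other fragment (they race with nothing), and your construction is patched the same way; these extra edges do not interfere with the clique-cover correspondence. Second, your ``atomics are useless'' justification is off---each $f_{\{u,v\}}$ is touched by exactly two fragments and \emph{could} be made atomic---but the conclusion survives, because an atomic field counts against the budget $k$ and resolves exactly one edge, i.e., it behaves like a size-two clique and cannot beat an optimal edge clique cover. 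The paper makes the same observation explicitly.
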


\begin{proof}[Proof of Theorem \ref{thm:np}]

     We prove the theorem by reduction to the edge clique cover
  problem~\cite{edge-clique-cover}.
  Let $G = (V, E)$ be an undirected graph. For each $v \in V$, let $E(v)$ be the set of edges incident to $v$.
  
  Define monitor $M$ as follows: for each edge $e \in E$ define a new field $f_e$ of the monitor, initially set to zero. For each vertex $v \in V$, define a CCR $inc_v()$ which increments each $f_e$ for $e \in E(V)$, has a guard of $\top$, and returns nothing. 
  
  Let $G_M$ be the control-flow graph of monitor $M$. Note that there is one $waituntil(\top)$ statement for each $inc_v()$ and $|E(v)| = degree(v)$ increment statements in $inc_v()$, so there are $|V| + 2|E|$ total nodes in $G_M$. Define $\{G_{1,M},\dots, G_{|V| + 2|E|,M}\}$ to be the partition of $G_M$ into singletons. Then, let $\mathcal G_M = (V_M, E_M)$ to be the fragment dependency graph obtained from this partition, and let $\Pi \subseteq V_M\times V_M$ be the set of fragment pairs that can safely run in parallel.
  
We write $frag_{v,e}$ for the fragment which increments $f_e$ in method $inc_v()$, and $waituntil_v$ for the $waituntil(\top)$ statement at the beginning of $inc_v()$. Observe that
\begin{align}\label{eq:np:pi}
    \Pi = &\big\{
        (f_1, f_2) \mid
        \exists v \in V \text{ such that }
        f_1 = waituntil_v, \text{ or }
        f_2 = waituntil_v
    \big\}
    \\
    \nonumber
    \cup&\big\{
        (frag_{v_1,e_1}, frag_{v_2,e_2}) \mid
        e_1 \neq e_2
    \big\}.
\end{align}

Note that any synchronization protocol which implements $f_e$ for some $e = (u,v)\in E$ as an atomic variable is equivalent to one which wraps a unique lock around each $frag_{u,e}$ and $frag_{v,e}$. Therefore, we only need to consider synchronization protocols which use only locks.

Suppose we are given a synchronization protocol which allows all pairs in $\Pi$ to run in parallel and uses exactly $k$ locks, $\{\ell_1,\dots,\ell_k\}$, for some $k\in \mathbb N$. Define the vertices holding each lock to be
\begin{equation}\label{eq:np:fragsholdinglock}
    C_i = \big\{
        v \in V \mid \exists e \in E(v) \text{ such that } frag_{v,e} \text{ holds lock }\ell_i
    \big\}.
\end{equation}
We claim that $\{C_1,\dots, C_k\}$ is a clique edge cover of $G$. First, observe that every edge $e =(u,v) \in E$ corresponds to two fragments in $\mathcal G_M$: $frag_{u,e}$ and $frag_{v,e}$. Since these fragments must not run in parallel (due to a data race), they must share some lock. Let $\ell_i$ be that lock. By the definition of $C_i$ in Equation \ref{eq:np:fragsholdinglock}, $u\in C_i$ and $v\in C_i$. Therefore, every edge appears in $C_i$ for some $1\leq i \leq k$. Second, suppose that $u\neq v \in V$ are both contained in $C_i$ for some $i$. By Equation \ref{eq:np:fragsholdinglock}, there must be some $e_u \in E(u)$ and $e_v \in E(v)$ such that both $frag_{v,e_v}$ and $frag_{u,e_u}$ hold lock $i$. Since the two fragments share a lock, we know $(frag_{v,e_v}, frag_{u,e_u})\notin \Pi$. Therefore, $e_v = e_u$ (by Equation \ref{eq:np:pi}). Hence, there is an edge $e_u = e_v = (u, v)$ in $E$. Consequently, any two distinct vertices in $C_i$ are incident, so $C_i$ is a clique.

A symmetric argument shows how to construct a synchronization protocol using exactly $k$ locks from any edge clique-cover of $G$ which has $k$ cliques.

We have shown that, given an arbitrary graph $G$, in polynomial time we may compute a monitor $M$ such that $M$ has a synchronization protocol using at most $k$ locks and atomic variables if and only if $G$ has an edge clique-cover with at most $k$ cliques. 

\end{proof}
\section{Target Language Operational Semantics}
\label{sec:trg-lang-sem}

\begin{figure}
\small
\[
\begin{array}{cc}
    (1) &
    \irule{
        \begin{array}{c}
            e = (s, t) \irulespace \textsc{CheckNotif}(\sems, t) \irulespace \neg \mathsf{Synch}(s) \irulespace (s, \args), \pstate \Downarrow \pstate' \irulespace \sems' = \textsc{UpdateState}(\sems, t)
        \end{array}
    }{
        (\pstate, e, \args, \sems) \Rightarrow (\pstate', \epsilon, \epsilon, \sems')
    }\\\\
    (2) &
    \irule{
        \begin{array}{c}
        e = (s, t) \irulespace \textsc{CheckNotif}(\sems, t) \irulespace s = \code{l.lock()} \\
        \textsc{LockHeld}(\sems, t, \pstate[l]) \irulespace \sems' = \textsc{BlockThreadOnLock}(\sems, t, \pstate[l]) \irulespace \textsf{NoDeadLocks}(\sems')
        \end{array}
    }{
        (\pstate, e, \args, \sems) \Rightarrow (\pstate', \epsilon, \epsilon, \sems')
    }\\\\
    (3) &
    \irule{
        \begin{array}{c}
        e = (s, t) \irulespace \textsc{CheckNotif}(\sems, t) \irulespace s = \code{l.lock()} \\
        \neg \textsc{LockHeld}(\sems, t, \pstate[l]) \irulespace \sems' = \textsc{AcqLock}(\sems, t, \pstate[l])
        \end{array}
    }{
        (\pstate, e, \args, \sems) \Rightarrow (\pstate', \epsilon, \epsilon, \sems')
    }\\\\
    (4) &
    \irule{
        \begin{array}{c}
        e = (s, t) \irulespace \textsc{CheckNotif}(\sems, t) \irulespace s = \code{l.unlock()} \irulespace \sems' = \textsc{RelLock}(\sems, t, \pstate[l])
        \end{array}
    }{
        (\pstate, e, \args, \sems) \Rightarrow (\pstate', \epsilon, \epsilon, \sems')
    }\\\\
    (5) &
    \irule{
        \begin{array}{c}
        e = (s, t) \irulespace \textsc{CheckNotif}(\sems, t) \irulespace s = \code{c.await()} \irulespace \sems' = \textsc{BlockOnCVar}(\sems, t, \pstate[c])
        \end{array}
    }{
        (\pstate, e, \args, \sems) \Rightarrow (\pstate', \epsilon, \epsilon, \sems')
    }\\\\
    (6) &
    \irule{
        \begin{array}{c}
        e = (s, t) \irulespace \textsc{CheckNotif}(\sems, t) \irulespace s = \code{c.signal()} \irulespace \sems' = \textsc{SigCVar}(\sems, \pstate[c])
        \end{array}
    }{
        (\pstate, e, \args, \sems) \Rightarrow (\pstate', \epsilon, \epsilon, \sems')
    }\\\\
    (7) &
    \irule{
        \begin{array}{c}
        e = (s, t) \irulespace \textsc{CheckNotif}(\sems, t) \irulespace s = \code{c.signalAll()} \irulespace \sems' = \textsc{BCastCVar}(\sems, \pstate[c])
        \end{array}
    }{
        (\pstate, e, \args, \sems) \Rightarrow (\pstate', \epsilon, \epsilon, \sems')
    }\\\\
    (8) &
    \irule{
        \begin{array}{c}
            (\pstate, e, \args, \sems) \Rightarrow (\pstate', \epsilon, \epsilon, \sems')
        \end{array}
    }{
        (\pstate, e::\tis, \args::\args', \sems) \Rightarrow (\pstate', \tis, \args', \sems')
    }\\\\
\end{array}
\]
\caption{Semantics for our target language~\ref{fig:trg-lang}. Here, \tis is a history, $\args$ a list of arguments for every element in $\tis$, and $\sems$ a tuple of sets and mappings that keep track of all pending signaling and locking operations. Methods and predicates that appear in small caps are defined the text.}\label{fig:trg-sem}
\end{figure}

\begin{figure}
  \begin{algorithm}[H]
    \begin{algorithmic}[1]
    \Procedure{UpdateState}{$\sems, t$}
    \State \textbf{input:} $\sems = \semtupple$
    \State \textbf{input:} $t$, thread executing a non-synchronization statement.
    \State \textbf{output:} updated mappings.
    \vspace{0.04in}
    \If{$(t, c) \in \sigblock$}
        \State $\sigblock' \gets \sigblock \setminus \{ (t, c) \}$ $\signotif' \gets \signotif \setminus \{ (t,c) \}$
    \EndIf
    \If{$(t,l) \in \lockblock$}
       \State $\lockhold' \gets \lockhold \cup \{ (t, l) \}$
       \State $\locknotif' \gets \locknotif \setminus \{ (t, l) \}$
       \State $\lockblock' \gets \lockblock \setminus \{ (t,l) \}$
    \EndIf
    
    \State \Return $(\sigblock', \signotif', \lockhold', \lockblock, \locknotif')$
    \EndProcedure
    \end{algorithmic}
 \end{algorithm}
 \caption{Procedure \textsc{UpdateState}}
 \label{fig:update-state}
\end{figure}

This section presents the semantics of our target language presented in Figure~\ref{fig:trg-lang}. As mentioned in Section~\ref{sec:prelim}, given an explicit monitor $M_t$, initial state $\sigma$, and monitor history $\tis_e$ with argument mapping $\args_e$, the operational semantics of $M_t$ is defined using a judgment $M_t \vdash (\tis_e, \args_{e}, \sigma) \downarrow \sigma'$ indicating that the new  state  is $ \sigma'$ after executing   $\tis_e$ on initial  state $\sigma$. The semantics of such a monitor are implemented using the inference rules of Figure~\ref{fig:trg-sem} that use judgements of the form $(\pstate, \tis_e, \args_e, \sems) \Rightarrow (\pstate', \tis'_e, \args'_e,\sems')$. Here, $\sems$ and $\sems'$ are tuples of the form $\semtupple$ and their role is to keep track all signaling and locking operations of the history. Specifically, each element of the tuple is defined below:
\begin{itemize}
    \item $\sigblock \subseteq T \times CVar$: a set of thread-condition variable pairs, $(t, c) \in \sigblock$ means that thread $t$ is blocked on condition variable $c$.
    \item $\signotif \subseteq T \times CVar$: a set of thread-condition variable pairs, $(t, c) \in \signotif$ means that thread $t$ has been notified on condition $c$.
    \item $\lockhold \subseteq T \times L$: a set of thread-lock pairs, $(t, l) \in \lockhold$ means that thread $t$ holds lock $l$.
    \item $\lockblock \subseteq T \times L$: a set of thread-lock pairs, $(t, l) \in \lockblock$ means that thread $t$ is blocked waiting to acquire lock $l$.
    \item $\locknotif \subseteq T \times L$: a set of thread-lock pairs, $(t, l) \in \locknotif$ means that thread $t$ can acquire a previously held lock $l$
\end{itemize}
We say that $M_t \vdash (\tis_e, \args_{e}, \sigma) \downarrow \sigma'$ if and only if $(\pstate, \tis_e, \args_e, \sems) \Rightarrow^{*} (\pstate', \epsilon, \epsilon,\sems')$, where $\Rightarrow^{*}$ is the reflexive transitive closure of relation $\Rightarrow$. In other words, a $\tis_e$ is a valid explicit history according to our operational semantics only if the rules of Figure~\ref{fig:trg-sem} can ``consume'' the entire history. If none of the rules of Figure~\ref{fig:trg-sem} apply to a history, then we consider the computation of relation $\Rightarrow$ stuck and thus the history is not valid.

On a high level, the rules of our operational semantics iterate over all statements of input history $\tis$ and updates the sets inside $\sems$ accordingly. Because during the execution of a $\tis$ a thread $t$ might perform a blocking operation (e.g., call \code{l.lock} on a state where \code{l} is being held), the rules require every statement to be executed in a state where a thread is not blocked. To ensure this, every rule in Figure~\ref{fig:trg-sem} requires predicate $M_t \vdash (\tis_e, \args_{e}, \sigma) \downarrow \sigma'$, defined below, to hold for the executing thread $t$.
\begin{equation*}
    \begin{split}
        \textsc{CheckNotif}(\semtupple, t) = &\ \ (t,c) \in \sigblock \leftrightarrow (t,c) \in \signotif\ \ \land\\  
        &\ \ (t, l) \in \lockblock \leftrightarrow (t, l) \in \locknotif
    \end{split}
\end{equation*}
Essentially, $\textsc{CheckNotif}(\semtupple, t)$ requires every thread $t$ that was previously blocked by some operation ($(t,\_) \in \sigblock$ or $(t, l) \in \lockblock$) to be first notified ($(t,\_) \in \signotif$ or $ (t, l) \in \locknotif$) in order to execute a statement. 

In what follows, we explain each of the rules of Figure~\ref{fig:trg-sem} in more detail.

\paragraph{Rule (1).} This rule applies for all statements that are not a synchronization statement (i.e., lock or signal operation). Because the operational semantics for non-synchronization statements are well-studied, we assume the existence of an oracle $\Downarrow$ that give a statement $s$ and its argument $\args$, it returns the resulting monitor state $\pstate'$. Furthermore, because thread $t$ could be blocked before executing statement $s$, this rule uses procedure $\textsc{UpdateState}$ (defined in Figure~\ref{fig:update-state}) to update the sets inside $\sems$ accordingly. Specifically, if thread $t$ was blocked in some condition variable $c$, then procedure $\textsc{UpdateState}$ removes pair $(t,c)$ from both $\sigblock$ and $\signotif$ (recall that if $t$ was blocked then it is guaranteed to be notified). Similarly, if thread $t$ was blocked on some lock $l$, then procedure $\textsc{UpdateState}$  add the pair $(t,l)$ to $\lockhold$ (i.e., now $t$ holds lock $l$) and removes it from $\locknotif$ and $\lockblock$ (same as in the condition variable case).

\paragraph{Rule (2).} This rule applies to all statements where a thread $t$ attempts to acquire lock \code{l} that is currently held by another thread. In order to determine whether a lock is held by another thread, this rule makes use of predicate $\textsc{LockHeld}$ defined as follows:
\begin{equation*}
 \begin{split}
    \textsc{LockHeld}(\semtupple, t, l) = l \in \lockhold[t'].\ t \neq t'
  \end{split}
\end{equation*}
Then, the rule marks thread $t$ as blocked on lock $l$ by using the following procedure that updates map $\lockblock$ by adding pair $(t,l)$:
\begin{equation*}
 \begin{split}
    \textsc{BlockThreadOnLock}(\semtupple, t, l) = (\sigblock, \signotif, \lockhold, \lockblock', \locknotif)\\
    \textrm {where\ } \lockblock' = \lockblock \setminus \{ (t,l) \}
  \end{split}
\end{equation*}
Finally, the rule requires that the new attempt to acquire lock $l$ does not introduce any deadlocks by invoking oracle $\textsf{NoDeadLocks}$. This oracle detects any cycles in the lock acquisition by examining maps $\lockblock$ and $\lockhold$. 

\paragraph{Rule (3).} Conversely, the third rule applies to all cases where thread thread $t$ attempts to acquire a lock not currently held by some other thread. In this case, the rule simply adds pair $(t,l)$ in map $\lockhold$ as follows:
\begin{equation*}
 \begin{split}
    \textsc{AcqLock}(\semtupple, t, l) = (\sigblock, \signotif, \lockhold', \lockblock, \locknotif)\\
    \textrm {where\ } \lockhold' = \lockhold \cup \{ (t, l) \}
  \end{split}
\end{equation*}

\paragraph{Rule (4).} This rule is triggered when a thread $t$ releases lock $l$. The rule performs the following two updates to maps $\lockhold$ and $\locknotif$:
\begin{equation*}
 \begin{split}
    \textsc{RelLock}(\semtupple, t, l) = (\sigblock, \signotif, \lockhold', \lockblock, \locknotif')\\
    \textrm {where\ } \lockhold' = \lockhold \setminus \{ (t, l) \},\ \locknotif' = \locknotif \cup \{ (t',l) \} \textrm{ s.t.\ } (t',l) \in \lockblock
  \end{split}
\end{equation*}
Specifically, it removes pair $(t,l)$ from $\lockhold$ and notifies some thread $t'$ currently blocked on lock $l$.

\paragraph{Rule (5).} This rule applies when a thread $t$ calls method \code{await} on a condition variable $c$. The rule simply adds pair $(t,c)$ in set $\sigblock$.
\begin{equation*}
 \begin{split}
    \textsc{BlockOnCVar}(\semtupple, t, c) = (\sigblock', \signotif, \lockhold, \lockblock, \locknotif)\\
    \textrm {where\ } \sigblock' = \sigblock \setminus \{ (t, c) \}
  \end{split}
\end{equation*}

\paragraph{Rules (6) and (7).} These two rules are used when a thread signals or broadcasts a  condition variable $c$. They simply update set $\signotif$ as follows:
\begin{equation*}
 \begin{split}
    \textsc{SigCVar}(\semtupple, c) = (\sigblock, \signotif', \lockhold, \lockblock, \locknotif)\\
    \textrm {where\ } \signotif' = \signotif \cup \{ (t,c) \} \textrm{\ s.t.\ } (t,c) \in \sigblock
  \end{split}
\end{equation*}
\begin{equation*}
 \begin{split}
    \textsc{BcastCVar}(\semtupple, c) = (\sigblock, \signotif', \lockhold, \lockblock, \locknotif)\\
    \textrm {where\ } \signotif' = \signotif \cup \{ (t,c) \mid (t,c) \in \sigblock \}
  \end{split}
\end{equation*}
Specifically, rule 6 adds \emph{a single} thread $t$ currently blocked on condition variable $c$ in $\signotif$, whereas rule 7 adds \emph{all} such threads in $\signotif$. 

\paragraph{Rule (8).} Finally, rule 8 recursively applies the procedure to the whole input history $\tis$. 

\section{Monitor Instrumentation.}
\label{appendix:mtr-instr}

In this section, we describe procedure \textsf{Instrument} which given an implicit-synchronization monitor $\mtr$, it corresponding FGD $\fdg = (V,E)$, and a synchronization protocol $\mathcal{S} = (\lockmap, \atomfld, \predmap)$, it instrument protocol $\mathcal{S}$ into $\mtr$ yielding an explicit-synchronization monitor $\mtr'$ equivalent to $\mtr$. This is achieved by first introducing all the necessary synchronization fields (locks, condition variables, and atomic fields) in the input class and then instrumenting locking and signaling operations in all methods as follows:  
\begin{itemize}[leftmargin=*]
    \item {\bf Lock acquisition and release:} The synthesized code must ensure that all the locks in $\lockmap(f)$ are held when executing fragment $f$. Thus, for every edge $(f, f')$ in the FDG, we instrument the code to acquire locks $\lockmap(f') \backslash \lockmap(f)$ and release locks $\lockmap(f) \backslash \lockmap(f')$. Furthermore, as  mentioned in Section~\ref{sec:overview}, we acquire and release these locks according to a static total order  to prevent deadlocks.
    \item {\bf Blocking on predicates:} Our instrumentation must also convert every \code{waituntil} statement to a sequence of operations on locks and condition variables. Specifically, we instrument a \code{waituntil(p)} statement as follows:
    {\small 
    \begin{verbatim}
 while(!p) {  ln.unlock(); ...; l2.unlock(); c.await(); l2.lock(); ...; ln.lock(); }
    \end{verbatim}
    }
    
    where \code{c} is the condition variable associated with \code{p};  \code{l1, ... ln} are the locks associated with this fragment, and \code{l1} is the lock associated with condition variable \code{c}. 
    \item {\bf Signaling operations:} Finally, we instrument signaling operations introduced by \textsf{PlaceSignals} to acquire and release the appropriate locks. In particular, given a statement \code{signal(p,c)} (similarly for \code{broadcast(p,c)}), our instrumentation generates the following code:
    
    {\small 
    \begin{verbatim}
     if (c)  { lp.lock(); cp.signal(); lp.release(); }
     \end{verbatim}
     }
     where \code{cp} is the condition variable for predicate \code{p} and  \code{lp} is the corresponding lock for \code{cp}.

    \end{itemize}

Procedure \textsf{Instrument} is presented in Figure~\ref{fig:instrumentation} in the form of inference rules that use the following two judgements:
\begin{itemize}
    \item $\args \vdash \Delta \rightsquigarrow \Delta'$, where $\args$ is a subset of the arguments of procedure \textsf{Instrument} (we overload operator $\rightsquigarrow$ depending on the arguments) and $\Delta$ is one of the following: the input monitor, a field, a method, a CCR or a statement.
    \item $\lockmap, \fdg \vdash v \hookrightarrow v'$, where $\lockmap$ is the lock map of the input synchronization protocol $\mathcal{S}$, FDG is the input $\fdg$, and $v$ is a fragment in $\fdg$.
\end{itemize}
The meaning of each judgement is that whenever procedure \textsf{Instrument} is applied to an element that appears on the left-hand side of an arrow ($\rightsquigarrow,\hookrightarrow$) it generates the element on the right-hand side.

\paragraph{\textbf{Overall Structure.}} The core logic of this procedure is to recursively iterate every element of the input monitor and use the inferred synchronization protocol in order to convert each element to an equivalent element of the target language. At a top level, the procedure begins by transforming every field and method of the input monitor. For every method, the procedure recursively visits every CCR using operator $\rightsquigarrow$. Then, for every CCR, it collects all its fragments and uses operator $\hookrightarrow$ to instrument all the lock operations dictated by the input protocol. In what follows, we give a brief description of every rule presented in Figure~\ref{fig:instrumentation}.

\paragraph{\textsc{Mtr}.} This is the top-level rule called by procedure \textsf{Instrument} and performs the following tasks: 1. it introduces all the synchronization fields (locks and condition variables) needed by the synchronization protocol and initializes them accordingly and 2. it recursively calls itself for every field, and method of $\mtr$.

\paragraph{\textsc{Fld-1} \& \textsc{Fld-2}.} These two rules are used to translate fields of $\mtr$, with the first one being applicable to fields that must be converted to atomic fields and the second one to fields that should remain the same. Only the first rule alters the original field by converting to an atomic field with the same name as the original.

\paragraph{\textsc{Method} \& \textsc{CCR}.} These two rules simply recursively apply operator $\rightsquigarrow$ to their constituent elements.

\paragraph{\textsc{CCR-Statement}.}
This rule is the one that splits each top-level CCR-Statement to a set of fragments that belong in the input FGG $\fdg$ and then recursively transforms each of the fragments. Note, because of the properties of FDG (Definitions 4.1 and 4.2), there is only one way to decompose each CCR to its constituents fragments.

\paragraph{\textsc{Frag-Stmt}.} This rule applies to all statements $s$ that are a fragment in the input $\fdg$. It first uses operator $\hookrightarrow$ to instrument all necessary lock operations and then uses a special oracle $\rightarrow_{\mathcal{A}}$ that converts all operations that involve a field converted to atomic to the equivalent $\code{update}$ statement in the target language.\footnote{Due to its simplicity, we omit a formal description of oracle $\rightarrow_{\mathcal{A}}$.}

\paragraph{\textsc{Wait}.} Rule labeled \textsc{Wait} is a special case of the above rule because, by definition, every \wuntil statement defines its own fragment in an FDG. Similar to the rule above, this rule also uses operator $\hookrightarrow$ to instrument the appropriate lock operation in the fragment but it additionally translates the \wuntil statement into an equivalent statement in the target language that uses condition variables. As mentioned in Section~\ref{sec:algorithm}, each \wuntil statement is translated into a while loop that waits on the appropriate condition variable and properly releases and acquires all locks before and after the call to method \code{await}. Additionally, it acquires all locks needed by its successor statement $v$ in the FDG and releases all locks held by it but not needed by $v$ (similar to the logic described below).

\paragraph{\textsc{Sig}.} This rule applies to all fragments that are a signalling directive of the monitor's intermediate representation.\footnote{For simplicity, we assume that every signaling operation defines its own fragment.} In a similar manner as the rule for \wuntil statements, this rule first uses operator $\hookrightarrow$ to instrument all lock operations needed to implement the synchronization protocol. Then, it consults the predicate map $\predmap$ of the synchronization protocol to acquire the appropriate lock and perform the signaling operation on the associated condition variable.

\paragraph{\textbf{Instrumenting Fragments With Lock Operations.}} Finally, we describe operator $\hookrightarrow$ which given a fragment $v$, the lock map $\lockmap$ of the input synchronization protocol $\mathcal{S}$, and the FDG $\fdg$, it instruments all the necessary lock operations. The logic of this operator is split between two groups of rules, described in more detail below:
\begin{itemize}
    \item Rules for entry \& exit fragments (i.e., fragments without predecessors and successors respectively), which are handled by rules \textsc{Entry-Frag} and \textsc{Exit-Frag} respectively. These rules simply lookup the entry or exit fragment in $\lockmap$ and acquire or release the locks returned by the $\lockmap$ accordingly.
    \item Rules for fragments with successors. Fragments that contain some successor in the graph are handled by rules $\textsc{Branch-Frag}$, $\textsc{Reg-Frag-1}$ and $\textsc{Reg-Frag-2}$. The logic for each of these rule is similar, i.e., for any successor fragment $v_s$ of fragment $v$, the instrumentation releases all locks required by $v$ but not by $v_s$ ($\lockmap[v] \setminus \lockmap[v_s]$) and acquires all locks required by $v_s$ but not $v$ ($\lockmap[v_s] \setminus \lockmap[v]$). All these operation are operation in accordance to the global lock order to prevent deadlocks. Last, it is worth mentioning that the main difference of these three rules is how they instrument the edge between $v$ and $v_s$. That is, if $v$ ends with a \code{goto} statement (conditional or not), then the instrumentation redirects the control-flow appropriately so all lock operations occur along edge $(v,v_s)$.
\end{itemize}

Finally, we conclude with the following theorem that states the correctness of our instrumentation phase.

\begin{theorem}\label{thm:correct-instr}
 Let  $\mathcal{S} = (\lockmap, \atomfld, \predmap)$ be a synchronization protocol inferred over FGD $\mathcal{G} = (V,E)$ of input monitor $M$ and $M'$ be the result of procedure $\textsf{Instrument}$ for $M$. Then, the following three conditions hold:
 \begin{enumerate}
     \item For every fragment $v \in V$, $l_i \in \lockmap[v]$ iff fragment $v$ holds lock $l_i$ in $M'$
     \item If $i < j$, then $l_i$ is never acquired whenever $l_j$ is held.
     \item Field $f \in \atomfld$ iff all its occurrences in $M$ have been replaced with an atomic operation in $M'$.
 \end{enumerate}
\end{theorem}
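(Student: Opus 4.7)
The plan is to prove each of the three conditions by structural induction on the rules defining the $\textsf{Instrument}$ procedure in Figure~\ref{fig:instrumentation}. Condition 3 is the most immediate: the rules \textsc{Fld-1} and \textsc{Fld-2} jointly establish a bijection between fields declared as atomic in $\atomfld$ and \code{Atomic} fields of $M'$, while the \textsc{Frag-Stmt} rule applies the oracle $\rightarrow_{\mathcal{A}}$ to convert \emph{every} access to an atomic field inside a fragment to an \code{update} statement. Thus a trivial induction over the statement structure of each fragment, combined with the fact that these are the only rules that touch field declarations and field accesses, establishes condition 3.

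The heart of the argument is condition 1, which requires an inductive invariant over fragments along any control-flow path. I would strengthen the statement to: \emph{at the entry of every fragment $v$ reached during execution of $M'$, the set of locks held by the executing thread equals $\lockmap[v]$}. The base case corresponds to the entry fragments of each method (rule \textsc{Entry-Frag}), which unconditionally acquire exactly $\lockmap[v]$. For the inductive step, fix an edge $(v,v_s) \in E$; the rules \textsc{Branch-Frag}, \textsc{Reg-Frag-1}, and \textsc{Reg-Frag-2} each emit code along that edge that releases exactly $\lockmap[v]\setminus\lockmap[v_s]$ and then acquires exactly $\lockmap[v_s]\setminus\lockmap[v]$, so the held-lock set transitions from $\lockmap[v]$ to $(\lockmap[v] \setminus (\lockmap[v]\setminus\lockmap[v_s])) \cup (\lockmap[v_s]\setminus\lockmap[v]) = \lockmap[v_s]$. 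The \textsc{Wait} rule requires additional care, since \code{c.await()} internally releases and reacquires the condition variable's associated lock; using the choice in the MaxSAT encoding that this lock is the smallest lock in $\lockmap[v]$ (and the fact that the \textsc{Wait} hard constraint forces all \wuntil fragments on the same predicate to have identical lock sets), we verify that the held-lock set is restored to $\lockmap[v]$ before the predicate is re-tested. Rule \textsc{Exit-Frag} simply releases all of $\lockmap[v]$, terminating the method with no locks held.

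For condition 2, I would rely on the \textsc{L-Order} hard constraint, via the definition of $LockOrder(\cdot,\mathcal{N})$, which rules out any edge $(v_s,v_t)$ where a ``smaller'' lock $l_i$ must be acquired while both $v_s$ and $v_t$ hold a ``larger'' lock $l_j$. Combined with the invariant of condition 1, each emitted acquire at an edge $(v,v_s)$ is for some $l \in \lockmap[v_s]\setminus\lockmap[v]$, and by the \textsc{L-Order} constraint no lock $l_j \succ l$ is in $\lockmap[v] \cap \lockmap[v_s]$; I then need to check that the instrumentation orders the emitted acquires in increasing order of $\prec$ (and releases in the symmetric decreasing order). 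Because the textual description in Section~\ref{appendix:mtr-instr} explicitly says the instrumentation acquires/releases locks according to the static total order $\preceq$, this ordering assumption is discharged by inspection of the rules, and the conclusion follows. The inductive argument for condition~1 must be handled simultaneously with condition~2 to keep the invariants synchronized across the \textsc{Wait} case.

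The main obstacle I foresee is the \textsc{Wait} case, where the semantics of \code{await} interacts non-trivially with the invariant: the thread must release \emph{all} its locks (not just the condition variable's own lock) before blocking, and must reacquire them in the proper order on wake-up, without violating condition 2. Showing that the instrumented loop in rule \textsc{Wait} preserves both invariants under the guarantees of the \textsc{Wait} MaxSAT hard constraint (identical lock sets across all \wuntil fragments on the same predicate) and the choice of condition-variable lock as the minimum of that set is where the proof requires the most careful case analysis.
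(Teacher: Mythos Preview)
Your proposal is correct and follows essentially the same approach as the paper's proof: the paper also argues condition~(1) by showing that the rules \textsc{Entry-Frag}, \textsc{Wait}, \textsc{Branch-Frag}, \textsc{Reg-Frag-1}, and \textsc{Reg-Frag-2} maintain the invariant that on every control-flow transfer the held locks are updated from $\lockmap[v]$ to $\lockmap[v_s]$; condition~(2) by combining the $Acq$ ordering with the \textsc{L-Order} guarantee from Theorem~\ref{thm:maxsat}; and condition~(3) via rule \textsc{Frag-Stmt} invoking $\rightarrow_{\mathcal A}$. Your treatment of the \textsc{Wait} case is in fact more careful than the paper's, which only sketches it.
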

\begin{proof}
Proof can be find in Appendix~\ref{sec:correct-proofs}.
\end{proof}

\begin{figure}
\small
  \[
  \begin{array}{clcl}
    \fbox{\textsc{{\footnotesize Entry-Frag}}} &
    \irule{
      \neg\exists v_p.\ (v_p, v) \in E \irulespace A = \lockmap[v]
    }{
      \lockmap, (F, E) \vdash v \hookrightarrow Acq(A);v
    } &
    \fbox{\textsc{{\footnotesize Exit-Frag}}} &
    \irule{
      \neg\exists v_s.\ (v,v_s) \in E \irulespace R = \lockmap[v]
    }{
      \lockmap, (F, E) \vdash v \hookrightarrow v;Rel(R)
    }\\\\
    \fbox{\textsc{{\footnotesize Branch-Frag}}} &
    \multicolumn{3}{c}{
      \irule{
        \begin{array}{c}
          Exit(v) \equiv \code{if (c) goto l} \irulespace (v, v_{s1)} \in E \irulespace (v,v_{s2}) \in E \irulespace v_{s1} \neq v_{s2} \irulespace v_{s2} \equiv \code{l: s}\\
          (A_i, R_i) = (\lockmap[v_{si}] \setminus \lockmap[v], \lockmap[v] \setminus \lockmap[v_{si}]),\ i \in \{1,2\} \irulespace v' = v[\code{l'}/\code{l}]\\
          e_1 = (Rel(R_1);Acq(A_1);\code{goto l''}) \irulespace e_2 = (\code{l'}: Rel(R_2);Acq(A_2);\code{goto l})
        \end{array}
      }{
        \lockmap, (F, E) \vdash v \hookrightarrow v';e_1;e_2;\code{l'': skip}
      }
    }\\\\
    \fbox{\textsc{{\footnotesize Reg-Frag-1}}} &
    \multicolumn{3}{c}{
    \irule{
      \begin{array}{c}
        Exit(v) \equiv \code{goto l} \irulespace (v, v_s) \in E \irulespace (A, R) = (\lockmap[v_s] \setminus \lockmap[v], \lockmap[v] \setminus \lockmap[v_s])\\
        v' = v[\code{l'}/\code{l}];(\code{l'}: Rel(R);Acq(A);\code{goto l})
      \end{array}
    }{
      \lockmap, (F, E) \vdash v \hookrightarrow v'
    }}\\\\
    \fbox{\textsc{{\footnotesize Reg-Frag-2}}} &
    \multicolumn{3}{c}{
    \irule{
      \begin{array}{c}
        Exit(v) \nequiv \code{goto l} \irulespace (v, v_s) \in E \irulespace (A, R) = (\lockmap[v_s] \setminus \lockmap[v] , \lockmap[v] \setminus \lockmap[v_s])\\
      \end{array}
    }{
      \lockmap, (F, E) \vdash v \hookrightarrow v;Rel(R);Acq(A)
    }
    }\\\\
    \fbox{\textsc{{\footnotesize Wait}}} &
    \multicolumn{3}{c}{
      \irule{
        \begin{array}{c}
          \lockmap, (F,E) \vdash w \hookrightarrow w' \irulespace (\code{c},\code{e}) = NewLabels() \irulespace l_p = \predmap[w]\\
          L_w = \lockmap[w] \irulespace rel = Rel(L_w \setminus \{l_p\}) \irulespace acq = Acq(L_w \setminus \{l_p\})\\
          (w, v) \in E \irulespace (A, R) = (\lockmap[w] \setminus \lockmap[v], \lockmap[v] \setminus \lockmap[w]) \irulespace succLocks \equiv Rel(R);Acq(A)\\
        w'' = w'[\code{(c: if (p) goto e);(rel;c$_p$.await();acq;goto c);(e: succLocks)}/w]
        \end{array}
      }{
        (\lockmap, \atomfld, \predmap), (F,E) \vdash w \equiv \wuntil(p) \rightsquigarrow w''
      }
    }\\\\
    \fbox{\textsc{{\footnotesize Frag-Stmt}}} &
    \irule{
      \begin{array}{c}
        IsFrag(s) \irulespace \lockmap,\fdg \vdash s \hookrightarrow s' \irulespace s' \rightarrow_{\atomfld} s''
      \end{array}
    }{
      (\lockmap, \atomfld, \predmap), \fdg \vdash s \rightsquigarrow s''
    } &
    \fbox{\textsc{{\footnotesize Sig}}} &
    \irule{
      \begin{array}{c}
        SigOp(s) \irulespace  \lockmap, \fdg \vdash s \hookrightarrow s' \\ s'' = s'[ExplSig(s, \predmap)/s]
      \end{array}
    }{
      (\lockmap, \atomfld, \predmap), \fdg \vdash s \rightsquigarrow s'
    }\\\\
    \fbox{\textsc{{\footnotesize CCR-Statement}}} &
    \multicolumn{3}{c}{
    \irule{
      \begin{array}{c}
        \fdg \equiv (F,E) \irulespace s \equiv s_1\code{;}...\code{;}s_n \irulespace s_i \in F \irulespace s_i \rightsquigarrow s_i'
      \end{array}
    }{
      \mathcal{S}, \fdg \vdash s \rightsquigarrow s_1'\code{;}...\code{;}s_n'
    }}
    \\\\
    \fbox{\textsc{{\footnotesize Method}}} &
    \irule{
      \begin{array}{c}
        \mathcal{S}, \fdg \vdash c_i \rightsquigarrow c_i'\\
      \end{array}
    }{
      \mathcal{S}, \fdg \vdash \mathit{m}(\vec{v})\code {\{}  c_1 ... c_n \code{\}} \rightsquigarrow \mathit{m}(\vec{v})\code {\{}  c_1' ... c_n' \code{\}}
    } &
    \fbox{\textsc{{\footnotesize CCR}}} &
    \irule{
      \begin{array}{c}
        \mathcal{S}, \fdg \vdash s \rightsquigarrow s'\\
        \mathcal{S}, \fdg \vdash w \rightsquigarrow w'
      \end{array}
    }{
      \mathcal{S}, \fdg \vdash w\code{;}s \rightsquigarrow w';s'
    }\\\\
    \fbox{\textsc{{\footnotesize Fld-1}}} &
    \irule{
      \code{fld} \in \atomfld
    }{
      \atomfld \vdash \tau\ \code{fld} := e \rightsquigarrow \code{Atomic[$\tau$] fld} := e
    } &
    \fbox{\textsc{{\footnotesize Fld-2}}} &
    \irule{
      \code{fld} \notin \atomfld
    }{
      \atomfld \vdash \tau\ \code{fld} := e \rightsquigarrow \tau\ \code{fld} := e
    }\\\\
    \multicolumn{4}{c}{
      \fbox{\textsc{{\footnotesize Mtr}}} \ \
      \irulespace
      \irule{
        \begin{array}{c}
          \mathcal{S} \equiv (\lockmap, \atomfld, \predmap) \irulespace\fdg \equiv (F, E) \irulespace l_i \triangleq \code{Lock $l_j$ := new Lock()}\ s.t.\ l_j \in Locks(\lockmap)\\
          cv_i \triangleq \code{CondVar cv$_p$ := $l_j$.newCV()}\ s.t.\ (p, l_j) \in \predmap \irulespace \atomfld \vdash f_i \rightsquigarrow f_i' \irulespace \mathcal{S}, \fdg \vdash m_i \rightsquigarrow m_i'
        \end{array}
        
      }{
        \mathcal{S}, \fdg \vdash \code{mtr $\mtr$ \{ } f_1 ... f_m\ \ m_1 ... m_n \code{ \}} \rightsquigarrow
                                            \code{mtr $\mtr$ \{ } l_1 ... l_k\ cv_1 ... cv_l\ f_1' ... f_m'\ m_1' ... m_n' \code{ \}}
      }
    }\\\\
    \hline
    \\
    \fbox{\textsc{{\footnotesize Aux-Defs}}} &
    \multicolumn{3}{c}{
      Acq(L) \triangleq l_{i_1}\code{.lock()};...;l_{i_k}\code{.lock()}\ s.t.\ l_{i_j} \in L, \forall j.\ 1 \leq j \leq k, i_j < i_{j+1}
    }\\\\
    & \multicolumn{3}{c}{
      Rel(L) \triangleq l_{i_k}\code{.unlock()};...;l_{i_1}\code{.unlock()}\ s.t.\ l_{i_j} \in L, \forall j.\ 1 \leq j \leq k, i_j < i_{j+1}
    }\\\\
    \multicolumn{4}{c}{
      ExplSig(s, \predmap) = \begin{cases}
                              \code{if (c) \{ l.lock();}\code{c$_p$.signal();}\code{l.unlock();\}} & s \equiv signal(p, c), l \equiv \predmap[p]\\
                              \code{if (c) \{ l.lock();}\code{c$_p$.signalAll();}\code{l.unlock();\}} & s \equiv bcast(p, c), l \equiv \predmap[p]
                              \end{cases}
    }
  \end{array}
  \]
  \caption{Procedure \textsf{Instrument}($\mathcal{S}, \fdg$), where $\mathcal{S}$ is a synchronization protocol for FDG $\fdg$.}
  \label{fig:instrumentation}
\end{figure}

\section{Correctness-Related Proofs}
\label{sec:correct-proofs}

This section contains all the proofs related to the correctness of our approach. Section~\ref{sec:top-level-proof} presents the proof of Theorem~\ref{thm:top-level-correctness}, Section~\ref{sec:safe-ileave-proof} presents the proof of Theorem~\ref{thm:safe}, Section~\ref{sec:maxsat-proof} presents the proof of Theorem~\ref{thm:maxsat}, and Section~\ref{sec:instr-proof} presents the proof of Theorem~\ref{thm:correct-instr}.

\subsection{Proof of Theorem \ref{thm:top-level-correctness}}
\label{sec:top-level-proof}

Theorem~\ref{thm:top-level-correctness} states the following:

\paragraph{\textbf{(Correctness)}}
 We say that an explicit monitor $M_t$ correctly implements an implicit monitor $M_s$, denoted as $\mtr_s \sim \mtr_t$, iff for all input states $\sigma_s, \sigma_t$ s.t. $\sigma_s \equiv_{\mtr_{s}} \sigma_t$, we have:
\begin{enumerate}[leftmargin=*]
\setlength\itemsep{0.5em}
    \item $\forall \tis_i, \args_i.\ \  \mtr_s \vdash (\tis_i, \args_i, \pstate_s) \Downarrow \pstate_s' \Longrightarrow \left(\mtr_t \vdash (\concr{\mtr_t}(\tis_i, \args_i, \pstate_s), \pstate_t) \downarrow \pstate_t' \ \land \  \pstate_s' \equiv_{\mtr_s} \pstate_t'\right)$
    \item $\forall \tis_e, \args_e.\ \  \mtr_t \vdash (\tis_e, \args_e, \pstate_t) \downarrow \pstate_t' \Longrightarrow \left( \exists \tis_i,\args_i.\ (\tis_e, \args_e) \backsim (\tis_i, \args_i) \land \mtr_s \vdash (\tis_i, \args_i, \pstate_s) \Downarrow \pstate_s' \land \pstate_s' \equiv_{\mtr_s} \pstate_t' \right)$
\end{enumerate}  
\begin{proof}
    For all proofs in this Section, we assume the correctness of procedure \textsc{PlaceSingals} (proved in previous work~\cite{expresso}).
    
    The proof of condition (1) above follows directly from Theorems~\ref{thm:maxsat},~\ref{thm:correct-instr}, and the correctness of procedure \textsc{PlaceSingals}. The proof of condition (2) follows directly from Theorem~\ref{thm:safe}, Theorem~\ref{thm:maxsat}, Theorem~\ref{thm:correct-instr}, and correctness of \textsc{PlaceSignals}.
\end{proof}

\subsection{Proof of Theorem \ref{thm:safe}}
\label{sec:safe-ileave-proof}

In this section, we present the proof of Theorem~\ref{thm:safe} which we reiterate here for convenience.

Before presenting the actual proof, we first introduce some auxiliary notation, relations, and lemmas. First, given a history $\tis$, we define a predicate  $\tis\tbr{(v_1, t_1)_{i1}, \ldots, (v_k, t_k)_{ik}}$ that evaluates to true iff each event $(v_i, t_i)_{i}$ is $i$-th element in $\tis$ and $i1 < \ldots < ik$. That is, this predicate encodes that these event occur in this particular order within $\tis$. Second, we define $\emph{Next}(\tis, i, t)$ as follows: $min(\{ j \mid j > i,\ \tis_{\fdg}\tbr{(\_, t)_{j}}\})$. In other words, $\emph{Next}(\tis, i, t)$ returns the first element in $\tis$ after index $i$ whose thread identifier is $t$. Additionally, we use $\tis[i]$ to denote the $i$-th element in $\tis$ and $\tis[i:j],\ i < j$ to denote the ``sub-history'' of $\tis$ between its $i$-th  element (inclusive) and $j$-th element (exclusive). Finally, we extend the definition of a history projection to filter out elements that \emph{do not} involve a thread , e.g., $\Pi(\tis, \neg t)$ filters out all events of $\tis$ that involve thread $t$. 

Next, using the notation above we define some relations that identify interleavings inside a history of a fragmented monitor $\mtr_{\fdg}$.

\begin{definition}{\bf (History Interleaving).} Given history $\tis_{fdg} = (V, E)$ and interleaving $\chi = (v, e = (v_s, v_t))$. We define the occurrences of $\chi$ as follows:
\[
Interleavings(\chi, \tis_{\fdg}) = [ (j, (i,k)) \mid \tis_{\fdg}\tbr{(v_s, t)_i, (v, t')_j, (v_t, t)_k},\ t \neq t',\ k = \emph{Next}(\tis_{\fdg}, i, t)]
\]
Also, we write $\mathcal{X}(\tis_{\fdg})$ to be the set of all interleavings that occur in $\tis_{\fdg}$, i.e.
\[
\mathcal{X}(\tis_{\fdg}) = \{ \chi \mid X = Interleavings(\chi, \tis_{\fdg}), |X| > 0 \}
\]
Finally, we write $\mathcal{X}_{\#}(\tis_{\fdg})$ to denote the \emph{number} of interleavings inside $\tis$. Formally:
\[
\mathcal{X}_{\#}(\tis_{\fdg}) = \sum\limits_{\chi \in V \times E}|Interleavings(\chi, \tis_{\fdg})|
\]
\end{definition}

Next, given a fragment $v$ we assume the existence of two predicates, namely, $\emph{EntryFrag}$ and $\emph{ExitFrag}$, that hold only if $v$ is the entry fragment or the exit fragment of its CCR respectively. Based on these relations, we define the next relation that partitions a fragment history into CCR sub-histories.

\paragraph{\textbf{CCR History Partition.}} Let $\tis_{\fdg}$ be a history of fragments in FDG $\fdg$. We define the CCR partition of $\tis_{\fdg}$ that returns a list of potentially overlapping sub-histories of $\tis_{\fdg}$ as follows:
\[
 CCRPart(\tis_{\fdg}) = \left[ \tis_{\fdg}[i:j] \bigm \vert \begin{array}{c}
      \tis_{\fdg}[i] = (v_{in}, t, \_),\  EntryFrag(v_{in}),\ \\
      j = min(\{ k \mid k > i,\ \tis_{\fdg}[k+1] = (v_{out}, t, \_), ExitFrag(v_{out})\})
 \end{array} \right]
\]
Let $P = CCRPart(\tis_{\fdg})$, then we use $P[i]$ to refer to the $i$-th sub-history in $P$. Note we assume that partitions returned by $CCRPart$ are ordered according to the index of the first element in the sub-history. That is, if $ccr_1$ began its execution before $ccr_2$ in $\tis_{\fdg}$, then the partition of $ccr_1$ appears before the partition of $ccr_2$ in $P$. Furthermore, given a CCR partition $P[i]$, we write $Thread(P[i])$ to represent the thread of the first element in sub-history $P[i]$.

\paragraph{\textbf{Removing Interleavings from Histories}.} Before we prove our main theorem, we define some transformations on interleaved histories that helps us remove interleavings.

First, given a CCR sub-history that is interleaved, we define its sequential history as follows:

\begin{definition}{\textbf{Sequential CCR Sub-history}} Let $\tis_{\fdg}$ be an interleaved history, $P = CCRPart(\tis_{\fdg})$, and $\tis'_{\fdg} = P[i]$ be an interleaved CCR partition. We define that the sequential history of $\tis'_{\fdg}$, denoted as $Seq_{|CCR}(\tis'_{\fdg})$ to be the following history: $\Pi(\tis'_{\fdg}, t)\Pi(\tis'_{\fdg}, \neg t)$, where $t = Thread(P[i])$. Given an argument mapping $\args$ for history $\tis'_{\fdg}$, we write $Seq(\args)$ to denote the corresponding argument mapping for $Seq_{|CCR}(\tis'_{\fdg})$.
\end{definition}

We now prove the following useful lemmas about sequential CCR sub-histories.

\begin{lemma}\label{lem:decr-ileavings}
  Let $\tis_{\fdg}$ be an interleaved sub-history and $\tis'_{\fdg} = Seq_{|CCR}(\tis_{\fdg})$, then the following two things hold:
  \begin{enumerate}
    \item $\mathcal{X}(\tis'_{\fdg}) \subseteq \mathcal{X}(\tis_{\fdg})$
    \item $\mathcal{X}_{\#}(\tis'_{\fdg}) < \mathcal{X}_{\#}(\tis_{\fdg})$
  \end{enumerate}
\end{lemma}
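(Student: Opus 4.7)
The plan is to prove both claims by directly exploiting the structural simplicity of $\tis'_{\fdg} = Seq_{|CCR}(\tis_{\fdg}) = \Pi(\tis_{\fdg}, t)\cdot\Pi(\tis_{\fdg}, \neg t)$, where $t = Thread(P[i])$ is the CCR's owner thread. Two facts drive the argument. First, in $\tis'_{\fdg}$ every event of thread $t$ is contiguous and precedes every event of any other thread. Second, the relative order of the non-$t$ events inside $\tis'_{\fdg}$ is identical to their relative order inside $\tis_{\fdg}$ (since both come from the same projection $\Pi(\tis_{\fdg}, \neg t)$).

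For the inclusion $\mathcal{X}(\tis'_{\fdg}) \subseteq \mathcal{X}(\tis_{\fdg})$, I would pick an arbitrary $\chi = (v, (v_s, v_t)) \in \mathcal{X}(\tis'_{\fdg})$ witnessed by indices $i<j<k$ in $\tis'_{\fdg}$, with $v_s, v_t$ belonging to some thread $t_1$ and $v$ belonging to some thread $t_2 \neq t_1$, and case-split on whether $t_1$ or $t_2$ equals $t$. If $t_1 = t$, both $v_s$ and $v_t$ lie in the contiguous $t$-block at the front of $\tis'_{\fdg}$, so no non-$t$ event can separate them, contradicting the existence of $v$ between them. If $t_2 = t$ (but $t_1 \neq t$), the $t$-event $v$ must precede every $t_1$-event in $\tis'_{\fdg}$, so $v$ cannot lie between $v_s$ and $v_t$, again a contradiction. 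In the remaining case $v_s, v, v_t$ all lie in the trailing $\Pi(\tis_{\fdg}, \neg t)$ block and therefore appear in the same relative order in $\tis_{\fdg}$ itself; moreover, the condition $k = \emph{Next}(\tis'_{\fdg}, i, t_1)$ transfers to $\tis_{\fdg}$ because the relative order of $t_1$-events is identical in both histories (removing or inserting events of a different thread $t$ cannot change which $t_1$-event is the next one after a given $t_1$-event). Hence $\chi \in \mathcal{X}(\tis_{\fdg})$.

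For the strict inequality, the hypothesis that $\tis_{\fdg}$ is an interleaved CCR sub-history of thread $t$ supplies at least one non-$t$ event $v^*$ sitting in $\tis_{\fdg}$ strictly between two successive events $e_\alpha, e_\beta$ of the $t$-projection (otherwise $\Pi(\tis_{\fdg},t)$ would already be contiguous and $\tis_{\fdg}$ would be sequential). This yields an interleaving occurrence $\chi^* = (v^*, (e_\alpha, e_\beta))$ with $Interleavings(\chi^*, \tis_{\fdg}) \neq \emptyset$. In $\tis'_{\fdg}$, however, $e_\alpha$ and $e_\beta$ are adjacent inside the initial $t$-block, so $Interleavings(\chi^*, \tis'_{\fdg}) = \emptyset$. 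Combining this with Part 1 (which implies that every occurrence contributing to $\mathcal{X}_{\#}(\tis'_{\fdg})$ can be matched to a distinct occurrence in $\tis_{\fdg}$) yields the required strict inequality. The main obstacle I anticipate is in Case 3 of Part 1: carefully arguing invariance of $\emph{Next}$ under the reordering, since this is where the only real index-level bookkeeping between the two histories is required; the rest reduces to direct structural inspection.
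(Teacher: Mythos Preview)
Your proposal is correct and follows essentially the same line as the paper's proof: both exploit that $Seq_{|CCR}$ places all $t$-events in a contiguous prefix while preserving the relative order of the non-$t$ events, so any surviving interleaving in $\tis'_{\fdg}$ must live entirely in the $\Pi(\tis_{\fdg},\neg t)$ tail and hence already occurs in $\tis_{\fdg}$; and the interleavedness of the CCR sub-history guarantees at least one interleaving on a $t$-edge that is destroyed by the rearrangement. Your write-up is in fact considerably more careful than the paper's (which leaves the case analysis and the $\emph{Next}$ invariance implicit); the only minor looseness is the parenthetical in Part~2, where you invoke the \emph{statement} of Part~1 for an injection on \emph{occurrences}, whereas what you actually need (and already have) is the explicit position-preserving map from Case~3 of your Part~1 argument.
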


\begin{proof}
 Both of the properties logically follow from the construction of $\tis'_{\fdg}$. That is, a sequential CCR sub-history of the form $(v_1,t)\ldots(v_i,t),(v_{i+1},t'),\ldots,(v_j,t'')$, where all elements before the $i$-th position are from thread $t$ and all elements past that are from some thread $t'$ s.t. $t \neq t'$. On the other hand, the original history $\tis_{\fdg}$ is of the form:
 \[
 (v_1, t)\ldots(v_k, t)(v_{k+1}, t')\ldots(v_j,t)
 \]
 Where $\tis_{\fdg}[1:k+1] = \tis'_{\fdg}[1:k+1]$ (i.e., $\tis_{\fdg}$ and $\tis'_{\fdg}$ have a common prefix). Therefore, since by its construction $\tis'_{\fdg}$ does not move the relative order of element is $\tis_{\fdg}$ that do not involve thread $t$, if an interleaving $\chi \in \mathcal{X}(\tis'_{\fdg})$ then we also have $\chi \in \mathcal{X}(\tis_{\fdg})$. Conversely, any thread interleaving that involved thread $t$ in $\tis_{\fdg}$ does not appear in $\tis_{\fdg}$ (by construction). Since by its definition the interleaved history $\tis_{\fdg}$ contains at least one interleaving that involves an edge executed by $t$, we can conclude that $\mathcal{X}_{\#}(\tis'_{\fdg}) < \mathcal{X}_{\#}(\tis_{\fdg})$.
\end{proof}

\begin{lemma}\label{lemma:equiv-state}
  Let $\tis_{\fdg}$ be an interleaved sub-history and $\tis'_{\fdg} = Seq_{|CCR}(\tis_{\fdg})$, then if $\mathcal{X}(\tis_{\fdg})$ is a set of \emph{strongly} safe interleavings we have that: $\forall \pstate,\args. \mtr_{\fdg} \vdash (\tis_{\fdg}, \args, \pstate) \Downarrow \pstate' \Rightarrow \mtr_{\fdg} \vdash (\tis'_{\fdg}, Seq(\args), \pstate) \Downarrow \pstate'$
\end{lemma}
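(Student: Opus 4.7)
The plan is to exhibit $\tis'_{\fdg} = Seq_{|CCR}(\tis_{\fdg})$ as the result of reordering $\tis_{\fdg}$ via a finite sequence of adjacent transpositions, each of which preserves the final execution state. Let $t = Thread(\tis_{\fdg})$ and let the $t$-events in $\tis_{\fdg}$ be, in their order of appearance, $w_1, w_2, \ldots, w_m$ (so $w_1$ is the entry fragment and $w_m$ the exit fragment of the CCR executed by $t$). I plan to proceed in $m$ phases: in phase $i$, I bubble $w_i$ leftward one adjacent swap at a time, past every non-$t$-event that currently sits in front of it, until $w_i$ reaches position $i$. Because each swap always pairs a $t$-event with a non-$t$-event, the relative orders within the $t$-events and within the non-$t$-events are both preserved throughout, so the final reordered history is exactly $\Pi(\tis_{\fdg},t)\cdot\Pi(\tis_{\fdg},\neg t) = \tis'_{\fdg}$.

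Each swap has the form $(v_b, t_b)(w_i, t) \mapsto (w_i, t)(v_b, t_b)$ with $v_b$ a non-$t$-event. State preservation will reduce to $\emph{RightCommute}(v_b, w_i)$ via the usual big-step argument: execute the prefix to some intermediate state $\pstate$, apply the commutativity to conclude that $(v_b)(w_i)$ and $(w_i)(v_b)$ reach the same post-state from $\pstate$, then deterministically extend by the shared suffix. The critical step is to justify $\emph{RightCommute}(v_b, w_i)$ directly from the hypothesis that $\mathcal{X}(\tis_{\fdg})$ is strongly safe, without ever having to re-establish strong safety on an intermediate history. The key observation is that a swap in phase $i$ fires only when $v_b$ originally lay to the left of $w_i$ in $\tis_{\fdg}$, so the interleaving $\chi_b = (v_b, (v_s^0, v_t^0)) \in \mathcal{X}(\tis_{\fdg})$ --- where $v_s^0$ and $v_t^0$ are the $t$-events immediately surrounding $v_b$ in the original history --- is itself strongly safe by assumption. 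Moreover, because the $t$-events $w_1, \ldots, w_m$ trace a single control-flow path in the FDG of $t$'s method, $w_i$ is a reflexive-transitive successor of $v_t^0$ in that FDG, and condition~(2) of Definition~\ref{def:strong-safe} therefore supplies exactly $\emph{RightCommute}(v_b, w_i)$.

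The base case, when $\tis_{\fdg}$ already contains no non-$t$-events between consecutive $t$-events, requires no swaps and is immediate. I expect the main obstacle to be the bookkeeping that turns the above into a formal argument: precisely, (i) showing that during phase $i$ every non-$t$-event to the left of $w_i$ in the current history also lay to the left of $w_i$ in the original $\tis_{\fdg}$, so that $\chi_b$ indeed belongs to $\mathcal{X}(\tis_{\fdg})$; (ii) keeping the argument mapping $\args$ permuted in lock-step with the history so that the big-step judgment formally lines up after each swap; and (iii) confirming that the relative order of non-$t$-events is preserved end-to-end, so that the resulting history matches $Seq_{|CCR}(\tis_{\fdg})$ verbatim. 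These are routine once the phases are set up carefully; the conceptual content of the proof lies in the observation that condition~(2) of strong safety provides precisely the set of right-commutativities needed to drive the bubble sort.
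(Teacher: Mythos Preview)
Your proposal is correct and is essentially the same argument as the paper's: both reduce the reordering to a sequence of adjacent transpositions, each justified by condition~(2) of strong safety (i.e., $\emph{RightCommute}(v_b,w_i)$ for every $w_i$ reachable from the target of the interleaved edge). The paper phrases it dually---it inducts on $\mathcal{X}_{\#}(\tis_{\fdg})$ and, at each step, pushes the right-most non-$t$ event rightward past all remaining $t$-events---whereas you phase over the $t$-events and bubble them leftward; the sequence of swaps and the commutativity facts invoked are identical, and your version is in fact more explicit about the bookkeeping (invariants (i)--(iii)) than the paper's sketch.
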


\begin{proof}
  We prove this by induction on the number of distinct interleavings of history $\tis_{\fdg}$ ($\mathcal{X}_{\#}{\tis_{\fdg}}$).
  
  \paragraph{Base Case: $\mathcal{X}_{\#}({\tis_{\fdg}}) = 1$.} If there is a single interleaving in $\tis_{\fdg}$, this implies that $\tis_{\fdg}$ is of the form:
  \[
    (v_1, t)\ldots(v_i, t')\ldots(v_j,t)
  \]
  where $t = Thread(\tis_{\fdg})$ and $(v_i, t')$ is the only element in $\tis_{\fdg}$ not executed by $t$.  Because, the interleaving of $\tis_{\fdg}$ is strongly safe, we have that fragment $v_i$ executed by $t'$, \emph{right commutes with any possible successor of the edge it interleaves}. Also, by definition, $\tis'_{\fdg}$ is $(v_1, t)\ldots(v_{j},t)(v_i, t')$. Combining this two facts with lemma~\ref{lem:decr-ileavings}, we can prove the theorem for our base case:
  \[
  \forall \pstate,\args. \mtr_{\fdg} \vdash (\tis_{\fdg}, \args, \pstate) \Downarrow \pstate' \rightarrow \mtr_{\fdg} \vdash (\tis'_{\fdg}, Seq(\args), \pstate) \Downarrow \pstate'
  \]
  
  \paragraph{Inductive Step.} In our inductive step, we assume that our lemma holds for $\mathcal{X}_{\#}({\tis_{\fdg}}) = n$ and we are going to prove it for $n+1$. The logic is similar to the base case, specifically, we get the right-most interleaved fragment in $\tis_{\fdg}$ and right-commute to the end of the history while obtaining a semantically equivalent history $\tis''_{\fdg}$. After that, we can apply our inductive hypothesis on $\tis''_{\fdg}$, which again proves our goal. 
\end{proof}

Finally, we prove our main theorem, which we re-iterate below for convenience.

\paragraph{\textbf{Theorem~\ref{thm:safe}}} Let $\mathcal{G}$ be an FDG and let $\chi_1, \ldots, \chi_n$ be \emph{strongly safe} interleavings.  Then,  $S = \{ \chi_1, \ldots, \chi_n\}$ is a safe interleaving set for $\mathcal{G}$.

\begin{proof}
  By definition of safe set of interleavings, we have to prove the following for every interleaved history $\tis_{\fdg}$ of monitor $\mtr_{\fdg}$
\[
\textrm{If\ } \mathcal{X}(\tis_{\fdg}) \subseteq S \textrm{\ and\ } \mtr_{\fdg} \vdash (\tis_{\fdg}, \args_{\fdg}, \pstate) \Downarrow \pstate' \textrm{\ then\ } \exists\tis,\args.\ (\tis_{\fdg},\args_{\fdg}) \backsim (\tis,\args) \textrm{\ and\ } \mtr \vdash (\tis, \args, \pstate) \Downarrow \pstate' 
\]

In order to prove that, we have to prove that for every interleaved history $\tis_{\fdg}$ that only allows interleavings in $S$ and argument mapping $\args_{\fdg}$ we can find a history of the original monitor with corresponding argument mapping s.t., ($(\tis_{\fdg}, \args_{\fdg}) \backsim (\tis, \args)$. Which in turn means that we have to find a sequential history of $\mtr_{\fdg}$ $\tis'_{\fdg}$ s.t. 
\[
(1) \  \forall t.\ \pi(\tis_{\fdg}, t) = \pi(\tis'_{\fdg}, t) \quad \quad \emph{and} \quad \quad 
(2) \ \concr{\mtr_{\fdg}}(\tis, \args, \pstate) = (\tis'_{\fdg}, \args_{\fdg}) 
\]

To prove the goal above, we start with an arbitrary interleaved history $\tis_{\fdg}$ s.t. $\mathcal{X}(\tis_{\fdg}) \subseteq S$ and convert it to a sequential history $\tis'_{\fdg}$ with the above properties. We perform this proof, by first creating the CCR partition of $\tis_{\fdg}$, $P = CCRPart(\tis_{\fdg})$, and then induct on the number of partitions in $P$ that \emph{are} interleaved.

\paragraph{\textbf{Base Case: One interleaved CCR in P}} Let $\tis^{ccr}_{\fdg} = P[i]$ be the interleaved history in $\tis_{\fdg}$. Now, let $\tis'_{\fdg} = \tis_{\fdg}[Seq_{|CCR}(\tis^{ccr}_{\fdg})/\tis^{ccr}_{\fdg}]$. Because  $\tis^{ccr}_{\fdg}$ is the only interleaved sub-history in $P$ and because of  lemma~\ref{lem:decr-ileavings}, we have that $\tis'_{\fdg}$ is a sequential history s.t. $\forall t, \args_{\fdg}.\ \pi(\tis_{\fdg}, t) = \pi(\tis'_{\fdg}, t)$. Furthermore, because of lemma~\ref{lemma:equiv-state} we have $\forall \pstate, \args. \mtr_{\fdg} \vdash (\tis_{\fdg}, \args_{\fdg}, \pstate) \Downarrow \pstate' \Rightarrow \mtr_{\fdg} \vdash (\tis'_{\fdg}, Seq(\args_{\fdg}), \pstate) \Downarrow \pstate'$. Finally, because we have $\mtr_{\fdg} \vdash (\tis_{\fdg}, \args_{\fdg}, \pstate) \Downarrow \pstate'$ for some $\pstate$, this implies that $\concr{\mtr_{\fdg}}(\tis, \args_{\fdg}, \pstate) = (\tis'_{\fdg}, Seq(\args_{\fdg}))$, which in turns implies $(\tis'_{\fdg}, \args_{\fdg}) \backsim (\tis, Seq(\args_{\fdg}))$.

\paragraph{\textbf{Inductive Step}} Next, we assume that our theorem holds for up to $n$ interleaved CCRs in $P$, and will prove it for $n + 1$. Similarly as above, we find the smallest $i$ s.t. $P[i] = \tis^{ccr}_{\fdg}$ is an interleaved history. Again, we construct $\tis'_{\fdg} = \tis_{\fdg}[Seq_{|CCR}(\tis^{ccr}_{\fdg})/\tis^{ccr}_{\fdg}]$. Because of lemma~\ref{lem:decr-ileavings}, we have that the number of interleaved histories in $CCRPart(\tis'_{\fdg})$ has strictly fewer number of interleaved sub-histories than $P$. Therefore, by our inductive hypothesis, we have that $(\tis'_{\fdg}, \args_{\fdg}' \backsim (\tis, Seq(\args_{\fdg}))$ for some history of $\tis$ of $\mtr$. This, combined with lemma~\ref{lemma:equiv-state}, proves that $(\tis_{\fdg}, \args_{\fdg}) \backsim (\tis, Seq(\args{\fdg}))$.

\end{proof}

\subsection{Proof of Theorem \ref{thm:maxsat}}
\label{sec:maxsat-proof}

We now prove theorem~\ref{thm:maxsat} which states the correctness of our MaxSAT encoding.

\paragraph{\textbf{Theorem~\ref{thm:maxsat}}}
 Let $m$ be a model of the generated MaxSAT instance and $(\lockmap,\atomfld,\predmap)$ be the synchronization protocol constructed as follows:
  \begin{gather*}
     {\small\lockmap = \left\{ v \mapsto \left\{ l \mid m[h_v^l]\right\}\right\}\ \atomfld = \left\{ \code{fld} \mid m[\toatom_{fld}] \right\}}\ 
     {\small\predmap = \left\{p \mapsto  l_i \mid IsWait(v,p), i = min(\{j \mid m[\holds_v^{l_j}]\}) \right\}}
  \end{gather*}
  where, $IsWait(v,p)$ is true if v is a \wuntil statement on $p$. Then, $(\lockmap,\atomfld,\predmap)$ is a correct synchronization protocol.
  
\begin{proof}
  As mentioned earlier, a synchronization protocol must meet the following correctness criteria:
  \begin{enumerate}
      \item If two fragments $v_1,v_2$ have a race (i.e., $\mathcal{R}(v_1,v_2) \neq \emptyset$), then the protocol must prevent this race with a lock or an atomic field.
      \item If a fragment interleaving $\chi = (v,e)$ is not safe, then the synchronization protocol must not allow fragment $v$ to execute in between edge $e$.
      \item The protocol must be deadlock-free.
  \end{enumerate}
  
  We show that, by construction, a model $m$ returned by a MaxSAT solver always satisfies the above conditions.
  \begin{enumerate}
      \item Model $m$ prevents any races between two fragments because it must satisfy \emph{all} hard constraints generated by rules \textsc{Race-1} and \textsc{Race-2} from Figure~\ref{fig:maxsat-alg}. Therefore, $m$ will force two racy fragments to either share a lock or, when possible, convert all operations involving the racy field to equivalent atomic ones.
      \item Similarly, because model $m$ must satisfy the hard constraints generated by rule \textsc{I-Leave}, any interleaving that was deemed unsafe by our static analysis is guaranteed to be infeasible in the resulting synchronization monitor.
      \item Finally, because of rules \textsc{Wait} and \textsc{L-Order}, the resulting synchronization monitor is guaranteed to be deadlock-free. Specifically, the hard constraints generated by rule \textsc{L-Order} enforce the invariant that all lock acquisitions respect the global lock order. Whereas, the hard constraints of rule \textsc{Wait}, enforce the same invariant for the translation of a \wuntil statement into an equivalent statement in the target language (see Figure~\ref{fig:instrumentation}).
  \end{enumerate}
\end{proof}

\subsection{Proof of Theorem~\ref{thm:correct-instr}}\label{sec:instr-proof}

Finally, we prove the correctness of our monitor instrumentation procedure (Fig.~\ref{fig:instrumentation}).

\begin{theorem}
 Let  $\mathcal{S} = (\lockmap, \atomfld, \predmap)$ be a synchronization protocol inferred over FGD $\mathcal{G} = (V,E)$ of input monitor $M$ and $M'$ be the result of procedure $\textsf{Instrument}$ for $M$. Then, the following three conditions hold:
 \begin{enumerate}
     \item For every fragment $v \in V$, $l_i \in \lockmap[v]$ iff fragment $v$ holds lock $l_i$ in $M'$
     \item If $i < j$, then $l_i$ is never acquired whenever $l_j$ is held.
     \item Field $f \in \atomfld$ iff all its occurrences in $M$ have been replaced with an atomic operation in $M'$.
 \end{enumerate}
\end{theorem}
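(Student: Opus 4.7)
The plan is to prove the three conditions separately, with condition (1) being the main combinatorial content and (2), (3) following as corollaries of the encoding's hard constraints and a straightforward syntactic inspection, respectively. Throughout, I would work with the inference rules of Figure~\ref{fig:instrumentation} as a syntax-directed recursion over the FDG $\mathcal{G}=(V,E)$, and lift pointwise statements about a single fragment to monitor-wide statements via rules \textsc{CCR-Statement}, \textsc{CCR}, \textsc{Method}, and \textsc{Mtr}.

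For condition (1), I would establish the following lock-set invariant by induction on a topological ordering of $\mathcal{G}$: immediately after executing the instrumented code of fragment $v \in V$, the set of locks held by the executing thread is exactly $\lockmap[v]$. Entry fragments (no predecessor in $E$) are covered by \textsc{Entry-Frag}, which prepends $Acq(\lockmap[v])$ to $v$; since the thread holds no monitor lock on entry to a CCR, the held set becomes $\lockmap[v]$. For inductive fragments the argument depends on the exit form of $v$: rules \textsc{Reg-Frag-1}, \textsc{Reg-Frag-2}, and \textsc{Branch-Frag} all emit, along every edge $(v,v_s)\in E$, the block $Rel(\lockmap[v]\setminus\lockmap[v_s]);\,Acq(\lockmap[v_s]\setminus\lockmap[v])$. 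Starting from the inductive hypothesis that $v$ holds $\lockmap[v]$, simple set algebra gives that after this block the held set is $(\lockmap[v]\setminus(\lockmap[v]\setminus\lockmap[v_s]))\cup(\lockmap[v_s]\setminus\lockmap[v])=\lockmap[v_s]$, which discharges the invariant for $v_s$. Exit fragments are handled by \textsc{Exit-Frag}. The ``only if'' direction of (1) then follows because no other rule introduces $l_i\code{.lock()}$ or $l_i\code{.unlock()}$ operations.

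For condition (2), the argument splits into an intra-block part and an inter-fragment part. Intra-block: inspecting the definitions of $Acq(L)$ and $Rel(L)$ in the auxiliary box of Figure~\ref{fig:instrumentation}, locks inside a single $Acq$ block are acquired in strictly increasing index order and released in strictly decreasing order, so no inversion can occur within one such block. Inter-fragment: Theorem~\ref{thm:maxsat} guarantees that the model $m$ used to build $(\lockmap,\atomfld,\predmap)$ satisfies every $LockOrder(e,\mathcal{N})$ constraint generated by rule \textsc{L-Order}, which by the definition at the bottom of Figure~\ref{fig:maxsat-alg} rules out precisely the scenario in which, along some edge $e=(v_s,v_t)$, a thread already holds some $l_u$ while acquiring a smaller $l_\ell$ with $\ell<u$. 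Combining these two observations with the lock-set invariant from (1) gives (2). Condition (3) is immediate from structural induction on the syntax tree: \textsc{Mtr} dispatches each field to \textsc{Fld-1}/\textsc{Fld-2}, which produce an atomic declaration iff $f\in\atomfld$, and \textsc{Frag-Stmt} routes every non-wait, non-signal statement through the oracle $\rightarrow_{\atomfld}$, which by definition rewrites exactly the accesses to fields in $\atomfld$ as atomic \code{update} statements; since every statement in $M$ lies in some fragment of $\mathcal{G}$ (by Definition~\ref{def:fdg}, clause (1) in \textsc{Partition}), every occurrence is covered.

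The main obstacles will be the \wuntil and signaling cases inside the proof of (1), because these rules transiently release and re-acquire locks inside a single fragment. For rule \textsc{Wait} I would verify that the emitted sequence $rel;\,\code{c}_p\code{.await()};\,acq$ (with $l_p=\predmap[w]$ released and re-acquired internally by $\code{await}$) preserves the lock-set $\lockmap[w]$ across each loop iteration, and that the trailing $succLocks$ block performs the same $Rel/Acq$ transition as for a regular edge, so the invariant still holds entering the successor $v$; I would also check that $l_p$ being the minimum lock in $\lockmap[w]$ (as imposed by $\predmap$ in Theorem~\ref{thm:maxsat}) keeps the $rel/acq$ split consistent with the global order, supporting (2). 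For rule \textsc{Sig} I would show that $ExplSig$ momentarily acquires $\predmap[p]$ outside the fragment's $\lockmap$; correctness of (1) is then phrased ``at every fragment boundary'' rather than at every program point, and (2) is preserved because the auxiliary lock is taken and released entirely within the block $Acq(\lockmap[v])\ldots Rel(\lockmap[v])$ generated by the surrounding fragment rule. Once these two cases are dispatched, the rest of the proof is a routine structural induction.
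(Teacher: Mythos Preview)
Your proposal is correct and follows essentially the same approach as the paper: a rule-by-rule case analysis over the instrumentation in Figure~\ref{fig:instrumentation} to establish the lock-set invariant for (1), then appealing to the $Acq$/$Rel$ ordering together with the \textsc{L-Order} constraints (via Theorem~\ref{thm:maxsat}) for (2), and to \textsc{Frag-Stmt} and the oracle $\rightarrow_{\atomfld}$ for (3). Your treatment is in fact more detailed than the paper's---in particular your explicit handling of the \textsc{Wait} and \textsc{Sig} cases and the topological-order induction---but the underlying argument is the same.
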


\begin{proof}
All three conditions can be proved by providing certain guarantees for a subset of the rules of Figure~\ref{fig:instrumentation}. Note that operator $\rightsquigarrow$ (Figure~\ref{fig:instrumentation}) is guaranteed to visit every code fragment $v \in V$ in the FDG, since it recursively visits every element of the input monitor until it discovers \emph{all} fragments of the given FDG. Next, we prove all three conditions.

\paragraph{Condition (1):}
For this condition, we need to prove that both fragments will only hold the locks required by the synthesized protocol $\mathcal{S}$. The logic of this proof depends on the number and type of predecessors of fragment $v$. We now present a case analysis:

\paragraph{Zero predecessors.}
This is the case of an entry fragment of a method in $\fdg$. Due to the structure of our input language and the definition of an FDG, this fragment \emph{must} be a fragment that contains a single \wuntil statement. The instrumentation of such a fragments is handled by rules $\textsc{Wait}$ and $\textsc{Entry-Frag}$. Note, that rule $\textsc{Wait}$ first calls $\textsc{Entry-Frag}$ which acquires all locks needed by the fragmented defined by the $\wuntil$ fragment.

\paragraph{At least one predecessor.}
These types of fragments are handled by rules $\textsc{Wait}$, $\textsc{Branch-Frag}$, $\textsc{Reg-Frag-1}$, and $\textsc{Reg-Frag-2}$. All these rules maintain the following invariant for the fragment $v$ that triggers them: before transferring control to any of $v$'s successor, they release all locks needed by $v$ but not needed by the successor (locksets of the form $R_i$) and acquire all locks needed by the successor but not held by $v$ (locksets of the form $A_i$). This invariant combined with the fact that these are the only ways to transfer control flow in our input language, ensure that before executing a fragment in the output monitor all necessary locks (and only those) will be acquired.

\paragraph{Condition (2):}
This directly follows from:
\begin{enumerate}
    \item That procedure $\mathsf{Instrument}$ uses auxiliary relation $\mathsf{Acq}$ to instrument lock acquisitions, which as shown in Figure~\ref{fig:instrumentation} does so in increasing order of lock indices.
    \item The guarantee provided by Theorem~\ref{thm:maxsat} that the synthesized protocol acquires locks in increasing order along every control-flow edge. 
\end{enumerate}

\paragraph{Condition (3):}
This condition is ensured by rule $\textsc{Frag-Stmt}$ of Figure~\ref{fig:instrumentation} that ensures oracle $\rightarrow_{\mathcal{A}}$ is called on every fragment of $\fdg$.
\end{proof}
}\fi

\end{document}